\theoremstyle{plain}
\newtheorem{theorem}{Theorem}
\newtheorem{lemma}[theorem]{Lemma}
\newtheorem{corollary}[theorem]{Corollary}
	\newcommand\xqed[1]{%
		\leavevmode\unskip\penalty9999 \hbox{}\nobreak\hfill\quad\hbox{#1}%
	}
	\newcommand\remarkend{\xqed{$\triangle$}}
		\def\@endtheorem{\remarkend\endtrivlist\@endpefalse }
\theoremstyle{remark}
\newtheorem{remark}[theorem]{Remark}
	\def\@endtheorem{\endtrivlist\@endpefalse }
\newcommand\C{{\ensuremath {\mathbb C} }}
\newcommand\N{{\ensuremath {\mathbb N} }}
\newcommand\R{{\ensuremath {\mathbb R} }}
\newcommand\1{{\ensuremath {\mathds 1} }}
\newcommand{\bS}{\mathbb{S}}
\newcommand{\gH}{\mathfrak{H}}
\newcommand{\gS}{\mathfrak{S}}
\newcommand{\cB}{\mathcal{B}}
\newcommand{\cE}{\mathcal{E}}
\newcommand{\cF}{\mathcal{F}}
\newcommand{\cG}{\mathcal{G}}
\newcommand{\cK}{\mathcal{K}}
\newcommand{\cL}{\mathcal{L}}
\newcommand{\cM}{\mathcal{M}}
\newcommand{\cN}{\mathcal{N}}
\newcommand{\cO}{\mathcal{O}}
\newcommand{\eps}{\varepsilon}
\newcommand{\bx}{\mathbf{x}}
\def\d{\,\mathrm{d}}
\DeclareMathOperator{\Tr}{\mathrm{Tr}}
\newcommand{\supp}{\mathrm{Supp}\,}
\renewcommand{\ge}{\geqslant}
\renewcommand{\le}{\leqslant}
\renewcommand{\geq}{\geqslant}
\renewcommand{\leq}{\leqslant}
\renewcommand{\hat}{\widehat}
\renewcommand{\tilde}{\widetilde}
\newcommand{\Egp}{\cE_{\mathrm{GP}}}
\newcommand{\GPnrg}{e_{\mathrm{GP}}}
\newcommand{\Vext}{V_{\mathrm{ext}}}
\newcommand{\GPmin}{\cM_{\mathrm{GP}}}
\newcommand{\mf}{\mathrm{mf}}
\newcommand{\wto}{\rightharpoonup}
\newcommand{\<}{\langle}
\renewcommand{\>}{\rangle}
\newcommand{\pscal}[1]{\ensuremath{\left\langle #1 \right\rangle}}
\newcommand{\norm}[1]{\left\vert\kern-0.25ex\left\vert #1 \right\vert\kern-0.25ex\right\vert}
\newcommand{\normt}[1]{\left\vert\kern-0.25ex\left\vert\kern-0.25ex\left\vert #1 \right\vert\kern-0.25ex\right\vert\kern-0.25ex\right\vert}
\newcommand{\ketbra}[2]{\ket{#1}\!\bra{#2}}
\newcommand{\KetBra}[2]{\Ket{#1}\!\Bra{#2}}
\newcommand\nn{\nonumber}
\begin{document}

\title[The condensation of a Bose gas with three\nobreakdash-body interactions]{The condensation of a trapped dilute Bose gas with three\nobreakdash-body interactions}

\author[P.T. Nam]{Phan Th\`anh Nam}
\address{Department of Mathematics, LMU Munich, Theresienstrasse 39, 80333 Munich, and Munich Center for Quantum Science and Technology, Schellingstr. 4, 80799 Munich, Germany}
\email{nam@math.lmu.de}

\author[J. Ricaud]{Julien Ricaud}
\address{Department of Mathematics, LMU Munich, Theresienstrasse 39, 80333 Munich, and Munich Center for Quantum Science and Technology, Schellingstr. 4, 80799 Munich, Germany}
\email{julien.ricaud@polytechnique.edu}

\author[A. Triay]{Arnaud Triay}
\address{Department of Mathematics, LMU Munich, Theresienstrasse 39, 80333 Munich, and Munich Center for Quantum Science and Technology, Schellingstr. 4, 80799 Munich, Germany}
\email{triay@math.lmu.de}

\begin{abstract}
	We consider a trapped dilute gas of $N$ bosons in $\R^3$ interacting via a three\nobreakdash-body interaction potential of the form $N\, V(N^{1/2}(x-y,x-z))$. In the limit $N\to \infty$, we prove that every approximate ground state of the system is a convex superposition of minimizers of a 3D energy-critical nonlinear Schr{\"o}\-dinger functional where the nonlinear coupling constant is proportional to the scattering energy of the interaction potential. In particular, the $N$-body ground state exhibits complete Bose--Einstein condensation if the nonlinear Schr{\"o}\-dinger minimizer is unique up to a complex phase.
\end{abstract}

\maketitle

\tableofcontents

\section{Introduction}
	Bose--Einstein condensation was predicted in 1924~\cite{Bose-24,Einstein-24}, but it was not until 1995 that it was first realized experimentally in gases of alkali atoms~\cite{AndEnsMatWieCor-95,DavMewAndvDrDurKurKet-95}, leading to the 2001 Nobel prize in Physics of Cornell, Wieman, and Ketterle. While the theoretical works of Bose and Einstein~\cite{Bose-24,Einstein-24,Einstein-25} focused only on the ideal gas, interactions among particles do not only make the analysis more involved but also lead to new phenomena, one of the most famous being superfluidity~\cite{Landau-41,Bogoliubov-47b}. In dilute regimes, where collisions occur rarely, the interaction is most often described by an effective two\nobreakdash-body potential $v_{\mathrm{2B}}(x_1-x_2)$ which is computed by integrating out the possible internal degrees of freedom of two particles placed at $x_1$ and $x_2$, in the ideal situation where all the others are infinitely far away. This approximation, enough for most applications, might break down if a third particle nearby affects strongly the internal structure of the two others. In this case, one can add a correction by using another effective interaction potential $v_{\mathrm{3B}}(x_1-x_2,x_1-x_3)$ in the ideal situation where three-particles are close and the others infinitely far away, etc. ---namely,
\[
	U(x_1,\dots,x_N) \simeq \sum_{1\leq i < j \leq N} v_{\mathrm{2B}}(x_i-x_j) + \sum_{1\leq i < j < k \leq N} v_{\mathrm{3B}}(x_i-x_j,x_i-x_k) + \dots\,.
\]
In this many\nobreakdash-body expansion, the three- and higher-order corrections, although small, are not always negligible: three\nobreakdash-body interactions account for $2\%$ of the binding energy of liquid $\text{He}^4$~\cite{MurBar-71} and even $14\%$ for water~\cite{MasBukSza-03}, preventing the two\nobreakdash-body approximation from explaining certain of their physical properties~\cite{PieTaiSki-11,UjeVit-07}. In the realm of condensed matter, the Bose--Hubbard model with two\nobreakdash-body and three\nobreakdash-body interactions was derived from cold polar molecules, where the two\nobreakdash-body interaction can independently be tuned and even switched off~\cite{BucMicZol-07}. Finally, in the study of ultracold gases, three\nobreakdash-body interactions have received a strong interest with the hope of observing exotic states like self-trapped droplets or Pfaffian states~\cite{Petrov-14}. In particular, a repulsive three\nobreakdash-body interaction can stabilize a condensate against collapse due to an attractive two\nobreakdash-body interaction, the competing contributions can lead to crystallization and is believed to be a good candidate for observing super-solid states~\cite{BisBla-15,Blakie-16}.

	In this work, we consider a system of $N$ bosons in $\R^{3}$ trapped by a confining potential and interacting via three\nobreakdash-body interactions. It would be also interesting to include two\nobreakdash-body interactions in the model, but we do not do so here in order to simplify the problem. We study the Gross--Pitaevskii-like regime where the interaction potential scales like
\[
	V_N(x-y,x-z) = N V\!\left( N^{1/2}(x-y,x-z) \right).
\]
We prove that the ground state energy and approximate ground states of the system are effectively described by the 3D energy-critical nonlinear Schr{\"o}\-dinger functional where the nonlinear term is proportional to the scattering energy of the interaction potential. To our knowledge, it is the first time that the ground state problem of a dilute Bose gas with three\nobreakdash-body interactions is investigated rigorously. In fact, a simpler result with the weaker interaction potential of the form
\[
	N^{6\beta-2} V\!\left(N^{\beta}(x-y,x-z)\right)
\]
with $\beta \ge 0$ small can be handled by following the method in~\cite{LewNamRou-16} (see also the subsequent~\cite{NamRou-20,Rougerie-21,Triay-18} for further developments of relevant techniques). In this case, the usual mean-field approximation is correct to the leading order. The critical case $\beta=1/2$ that we consider here is more difficult since the three\nobreakdash-body correlation between particles yields a leading order correction to the ground state energy, which is similar to the Gross--Pitaevskii regime of the two\nobreakdash-body interaction studied in~\cite{LieSeiYng-00,LieSei-02,LieSei-06,NamRouSei-16}. We refer to~\cite{LieSeiSolYng-05,Rougerie-15-arxiv,Rougerie-20b} for pedagogical reviews on relevant results in the two-body interaction case. 

	In the time-dependent problem, the potential $N^{6\beta-2} V\!\left( N^{\beta}(x-y,x-z) \right) $ with $\beta \ge 0$ small already received some attention~\cite{ChePav-11,Chen-12,Yuan-15,CheHol-19,Lee-20,NamSal-20,LiYao-21}. The critical regime $\beta=1/2$ is again more difficult and the corresponding dynamical problem remains open. In the two-body interaction case, the Gross--Pitaevskii regime was first studied in~\cite{ErdSchYau-10,ErdSchYau-09}, see also the reviews~\cite{Schlein-13,Golse-16,BenPorSch-16} for further results.

\subsection{Model}
	Let us now explain the mathematical setting in detail. We consider a system of $N\geq3$ identical bosons in $\R^3$ described by the Hamiltonian
\begin{equation}\label{eq:HN}
	H_N=\sum_{i=1}^N h_i + \sum_{1\le i<j<k \le N} V_N(x_i-x_j,x_i-x_k )\,,
\end{equation}
acting on the symmetric space
\[
	\gH^N = L^2_s \left( (\R^3)^N \right) = \bigotimes^N_s L^2(\R^3)\,,
\]
where, following the most mathematically demanding setting in the two\nobreakdash-body interaction case~\cite{LieSei-06,NamRouSei-16}, we consider the one-body operator of the form
\begin{equation} \label{Def_h}
	h:= (-{\bf i}\nabla_{x} + A(x))^2 +\Vext(x)\quad \text{on }L^2(\R^3)\,.
\end{equation}
Here, we assume that the external trapping potential satisfies
\begin{equation}\label{eq:Vext}
	\Vext \in L^{\infty}_{\mathrm{loc}}(\R^3,\R) \quad \text{and} \quad \Vext(x) \ge C |x|^\alpha +1 \quad \text{for a.e. } x\in \R^3\,,
\end{equation}
for some constant $\alpha>0$, while the vector potential (accounting for a magnetic field or a rotation) satisfies
\begin{equation}
	\label{eq:A}
	A \in L^3_{\mathrm{loc}}(\mathbb{R}^{3}) \quad \text{and} \quad \lim_{|x| \to \infty} |A(x)|^2 e^{-C|x|} = 0\,,
\end{equation}
for some constant $C>0$. Moreover, the interaction potential is chosen of the form
\begin{equation}\label{eq:VN}
	V_N(x,y)=N V ( N^{1/2} (x,y)) \quad \text{for all } x,y\in \R^3\,,
\end{equation}
where $V:\R^3\times \R^3\to \R$ is nonnegative, bounded, compactly supported, and has the symmetry properties
\begin{equation}\label{eq:sym}
	V(x,y) = V(y,x) \quad \text{and} \quad V(x-y,x-z) = V(y-x,y-z) = V(z-y,z-x)
\end{equation}
which ensure that $H_N$ leaves the bosonic space $\gH^N$ invariant. Thus, $H_N$ models a trapped dilute Bose gas where the range of the interaction potential is much smaller than the average distance between particles: $N^{-1/2}\ll N^{-1/3}$.

	Under our conditions~\eqref{eq:Vext}--\eqref{eq:sym}, the Hamiltonian $H_N$ is well-defined and bounded from below with core domain $\bigotimes^N_s C^\infty_c(\R^3)$. Hence, it can be extended to a self-adjoint operator on $\gH^N$ by Friedrichs' method. The extension, still denoted by $H_N$, is bounded from below and has compact resolvent since $\Vext(x)\to \infty$ as $|x|\to \infty$. In particular, a ground state exists.

	In the present paper, we are interested in the ground state energy
\begin{equation}\label{Def_E_N}
	E_N = \inf\limits_{\norm{\Psi}_{\gH^N}=1} \pscal{\Psi, H_N \Psi}
\end{equation}
and the corresponding ground states in the limit $N\to \infty$. When $N$ becomes large, it is in general impossible to compute, both analytically and numerically, the ground state energy $E_N$ and the ground states from the full many\nobreakdash-body description~\eqref{eq:HN}. Hence, for practical computations it is important to derive an effective theory which relies on less variables.

	The usual mean-field approximation suggests to restrict to the complete condensation ansatz of $N$ particles
\begin{equation} \label{eq:formal-Psi}
	\Psi_N \approx u^{\otimes N}(x_1,\dots ,x_N)= \prod_{j=1}^N u(x_j) \quad \text{with } u \in L^2(\R^3)\,.
\end{equation}
However, the correlation due to the strong interaction between particles at short distances will also play a role to leading order. Since the gas is dilute, the correlation structure is encoded in the scattering problem associated to the three\nobreakdash-body interaction potential $V_N(x-y,x-z)$. Let us consider the three\nobreakdash-body operator
\[
	-\Delta_{x_1} - \Delta_{x_2} - \Delta_{x_3} + V_N(x_1-x_2,x_1-x_3) \quad \text{on }L^2\!\left( (\R^3)^3 \right).
\]
We can remove the center of mass by change of variables:
\begin{equation}\label{eq:removing-center-of-mass}
	r_{1} = \frac{1}{3}(x_1+x_2+x_3), \quad r_2 = x_1 -x_2, \quad \text{and} \quad r_3 =x_1-x_3\,.
\end{equation}
Denoting the momentum variable $p_x= -{\bf i}\nabla_x$, with $x\in \R^3$ and ${\bf i}^2=-1$, we have
\[
	p_{x_1} = \frac{1}{3} p_{r_1} + p_{r_2} + p_{r_3}, \quad p_{x_2} = \frac{1}{3} p_{r_1} - p_{r_2} ,\quad p_{x_3} = \frac{1}{3} p_{r_1}-p_{r_3}\,,
\]
and, consequently,
\begin{multline}\label{eq:remove-center}
	-\Delta_{x_1} - \Delta_{x_2} - \Delta_{x_3} + V_N(x_1-x_2,x_1-x_3)\\
	= \left( \frac{1}{3} p_{r_1} + p_{r_2} + p_{r_3} \right)^2 + \left(\frac{1}{3} p_{r_1} - p_{r_2} \right)^2 + \left( \frac{1}{3} p_{r_1}-p_{r_3} \right)^2 + V_N(r_1,r_2) \\
	= \frac{1}{3}p_{r_1}^2 + 2(p_{r_2}^2 + p_{r_3}^2 + p_{r_2} p_{r_3} ) + V_N(r_2,r_3)\,.
\end{multline}
Thus, after removing the center of mass, we are left with the two\nobreakdash-body operator
\[
	2(p_{x}^2 + p_{y}^2 + p_{x} p_{y} ) + V_N(x,y) = -2\Delta_{\cM} + V_N(x,y)\quad \text{on } L^2\!\left( (\R^3)^2 \right),
\]
where $-\Delta_{\cM}=|\cM \nabla_{\R^6}|^2 = \mathrm{div}_{\R^6}(\cM^2 \nabla_{\R^6})$ and the matrix $\cM: \R^3\times \R^3 \to \R^3\times \R^3$ is given by
\[
	\cM:= \left( \frac{1}{2}
	\begin{pmatrix}
		2 & 1 \\
		1 & 2
	\end{pmatrix}
	\right)^{1/2} = \frac{1}{2\sqrt{2}}
	\begin{pmatrix}
		\sqrt{3}+1 & \sqrt{3}-1 \\
		\sqrt{3}-1 & \sqrt{3}+1
	\end{pmatrix}.
\]
The operator $-2\Delta_{\cM} + V_N(x,y)$ is associated to the (modified) {\em scattering energy}
\[
	b_{\cM}(V_N) : = \inf_{\omega \in \dot H^1(\R^6)} \int_{\R^6} \left( 2| \cM \nabla \omega( \bx)|^2 + V_N(\bx) |1-\omega(\bx)|^2 \right) \d \bx\,.
\]
Recall the standard scattering energy of a potential $W: \R^d \to \R$
\begin{equation} \label{eq:intro-def-scat-energy}
	b(W) : = \inf_{\omega \in \dot H^1(\R^d)} \int_{\R^d} \left( 2| \nabla \omega ( \bx)|^2 + W(\bx) |1-\omega(\bx)|^2 \right) \d \bx\,.
\end{equation}
By change of variables, the modified scattering energy can be written as
\[
	b_\cM(V)=b(V( \cM\cdot )) \det \cM \,.
\]
Here, $\det \cM = 3\sqrt 3/8$. Moreover, the specific choice in~\eqref{eq:VN} ensures that
\[
	b_\cM(V_N)= \frac{b_\cM(V)}{N^2}\,.
\]
In summary, to the leading order we expect that
\begin{equation} \label{eq:formal-V}
	V_N(x,y) \approx \frac{b_{\cM}(V)}{N^2} \delta_{x=y=0}\,.
\end{equation}
Putting the formal approximations~\eqref{eq:formal-Psi} and~\eqref{eq:formal-V} together, we arrive at
\begin{equation}\label{eq:NLS-GSE}
	\frac{E_N}{N} \approx \GPnrg := \inf_{\norm{u}_{L^2(\R^3)}=1} \Egp(u)\,,
\end{equation}
where $\Egp(u)$ is the 3D energy-critical nonlinear Schr{\"o}\-dinger (NLS) functional
\begin{equation}\label{eq:NLS}
	\Egp(u) := \int_{\R^3} \left( |(-{\bf i}\nabla + A(x)) u(x)|^2 + \Vext(x) |u(x)|^2 + \frac{b_{\cM}(V)}{6} |u(x)|^6 \right) \d x\,.
\end{equation}

	Let us denote by $\GPmin$ the set of minimizers of $\Egp$. The existence of a minimizer $u_0$ of $\Egp$ follows straightforwardly from standard variational methods, where the compactness in $L^2(\R^3)$ of minimizing sequences is guaranteed by the trapping condition $\Vext(x)\to \infty$ as $|x|\to \infty$. The solution solves the nonlinear equation
\[
	\left(h + b_{\cM} (V) |u_0(x)|^4/2 - \eps_0 \right) u_0(x)=0 \quad \text{for all } x\in \R^3\,,
\]
for some chemical potential $\eps_0\in \R^{3}$ (the Lagrange multiplier associated with the mass constraint $\norm{u}_{L^2(\R^3)}=1$). Moreover, with our assumptions on $\Vext $ and $A$, the absolute value of the resolvent kernel of $h + b_{\cM} (V) |u_0|^4/2+1$ is bounded by the one of $-\Delta+1$ (see for instance~\cite[Sect.~15]{Simon-79b}), which implies $u_0 \in L^\infty (\R^{3})$ by the Sobolev embedding $H^2(\mathbb{R}^{3}) \subset L^\infty(\mathbb{R}^{3})$.

	In the absence of the magnetic field, i.e.~$A\equiv 0$, the minimizer $u_0$ of $\Egp$ is unique up to a complex phase (this can be seen by a standard convexity argument). However, in the general case $A\not\equiv 0$, $\Egp$ may have several minimizers which indicates the presence quantized vortices. We refer to~\cite{LieSeiYng-00,LieSei-06,NamRouSei-16} and the reviews~\cite{LieSeiSolYng-05,Rougerie-15-arxiv,Rougerie-20b} for the related discussions in the two\nobreakdash-body interaction case where the nonlinear term $|u|^6$ in~\eqref{eq:NLS} is replaced by $|u|^4$.

\subsection{Main results}
	Our first result is a rigorous justification of~\eqref{eq:NLS-GSE}.
\begin{theorem}[Ground state energy]\label{thm:main1} Let $\Vext $, $A$, and $V_N$ satisfy~\eqref{eq:Vext}--\eqref{eq:sym}. Then, the ground state energy of $H_N$ in~\eqref{eq:HN} satisfies
	\begin{equation} \label{eq:Energy-CV}
		\lim_{N\to \infty} \frac{E_N}{N} = \GPnrg = \inf_{\norm{u}_{L^2(\R^3)}=1} \Egp(u)\,,
	\end{equation}
	where the effective functional $\Egp(u)$ is given in~\eqref{eq:NLS}.
\end{theorem}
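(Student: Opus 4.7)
The plan is to establish matching upper and lower bounds for $E_N/N$. Both halves must address the same obstruction: the naive ansatz $u_0^{\otimes N}$ yields an interaction energy whose leading behaviour is $\tfrac{1}{6}\bigl(\int_{\R^6} V\bigr)\int_{\R^3}|u_0|^6$, containing the trivial integral $\int V$ rather than the scattering energy $b_\cM(V)$. The missing correction must be extracted from the short-range three-body correlations dictated by the scattering problem for $-2\Delta_\cM + V$.

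\textbf{Upper bound.} Fix $u_0 \in \GPmin$ and use a Jastrow-corrected trial state
\[
\Psi_N \;\propto\; \biggl(\prod_{1\le i<j<k\le N} f_N(x_i-x_j,\,x_i-x_k)\biggr)\,u_0^{\otimes N},
\]
where $f_N(\bx)=f(N^{1/2}\bx)$ and $f$ is a truncated minimizer of the scattering functional for $-2\Delta_\cM + V$, chosen so that $1-f$ is compactly supported in $\R^6$ and $(-2\Delta_\cM + V)f = 0$ on the support of $V$. Expanding $\pscal{\Psi_N, H_N\Psi_N}$, the cross-terms between the kinetic energy acting on $f_N$ and the interaction $V_N$ cancel to leading order by the scattering equation, and the surviving interaction contributes precisely the coefficient $b_\cM(V)/6$ in front of $\int|u_0|^6$. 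Cluster contributions involving four or more simultaneously close particles are subleading because the interaction range $N^{-1/2}$ is much smaller than the typical spacing $N^{-1/3}$; a careful bookkeeping, paralleling the two-body Lieb--Seiringer--Yngvason analysis, yields $\pscal{\Psi_N, H_N\Psi_N}/N \to \Egp(u_0) = \GPnrg$.

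\textbf{Lower bound.} The key ingredient is a three-body analogue of the Dyson lemma: for every $\eps>0$ there exists a softer potential $U_\eps$ supported on a ball of order one in $\R^6$, with $\int U_\eps$ arbitrarily close to $b_\cM(V)$, such that on the center-of-mass--reduced space
\[
-2\Delta_\cM + V_N \;\ge\; (1-\eps)\,N^{-2}\,U_\eps\bigl(N^{1/2}\,\cdot\,\bigr)
\]
after sacrificing an $\eps$-fraction of the three-body kinetic energy. Summing symmetrically over triples, together with an IMS-type partition of the one-body kinetic energy, replaces $H_N$ by an effective Hamiltonian whose interaction has total integral $b_\cM(V)$. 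One then invokes the bosonic quantum de Finetti theorem: by the trapping assumption the one-body densities of approximate ground states are tight, so (along a subsequence) the $k$-particle reduced density matrices converge to $\int|u^{\otimes k}\>\<u^{\otimes k}|\,\d\mu(u)$ for some Borel probability measure $\mu$ on the unit sphere of $L^2(\R^3)$. Passing to the liminf and using weak lower semicontinuity of the one-body terms gives
\[
\liminf_{N\to\infty}\frac{E_N}{N} \;\ge\; \int \Egp(u)\,\d\mu(u) \;\ge\; \GPnrg,
\]
matching the upper bound.

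\textbf{Main obstacle.} The technical core is the three-body Dyson lemma. Compared with the classical two-body case, one must work with the non-standard kinetic operator $-2\Delta_\cM$ coming from the center-of-mass reduction and with six-dimensional geometry, and must simultaneously keep the kinetic sacrifice small, keep $U_\eps$ of fixed bounded range so that distinct triples effectively decouple, and recover the \emph{modified} scattering energy $b_\cM(V)$ rather than $\int V$. Reconciling these three constraints, and propagating the scattering correction intact through the de Finetti limit, is where the main effort lies; moreover, since the NLS functional~\eqref{eq:NLS} is energy-critical in 3D (the nonlinearity $|u|^6$ sits exactly at the Sobolev exponent $H^1\hookrightarrow L^6$), the a priori bounds on approximate ground states must be sharp.
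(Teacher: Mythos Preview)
Your outline captures the broad architecture but misses the two key technical innovations that the paper identifies as essential, and in fact your upper bound is precisely the approach the authors say they \emph{cannot} carry out.

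\textbf{Upper bound.} The paper states explicitly that the Jastrow--Dyson trial state you propose is not tractable in the three-body case: ``we are not able to compute the ground state energy per particle of the trial state~\eqref{eq:Jastrow--Dyson} to the leading order, even if we replace $f_N$ by a modified version $f_{N,\ell}$''. The combinatorics of overlapping three-particle correlation factors are significantly worse than in the two-body problem. Instead, the paper works on Fock space: it conjugates the second-quantized Hamiltonian by the Weyl operator $W(\sqrt{N}u_0)$ to implement the condensate, then by a \emph{cubic} unitary $U_N=\exp[\Theta B_1^*-B_1\Theta]$ built from the scattering solution $\omega_N$ to implement the three-body correlation structure. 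The energy of $W(\sqrt{N}u_0)U_N\Omega$ is computed via Duhamel expansions (Lemmas~\ref{lem:E_kin}--\ref{lem:L_3}), and the resulting Fock-space state is projected back to the $N$-particle sector by a Solovej-type argument.

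\textbf{Lower bound.} Your plan of ``one Dyson lemma, then de Finetti'' is exactly what the paper says fails because of energy-criticality. After a single Dyson step the replacement potential $U_R$ must satisfy $N^{-1/2}\ll R\ll N^{-1/3}$ (the upper constraint comes from controlling four-body collision errors), and at that range the mean-field machinery from \cite{LewNamRou-16} does not apply: the needed operator inequality $|W(x-y,x-z)|\le C\|W\|_1(1-\Delta_x)(1-\Delta_y)(1-\Delta_z)$ is \emph{false} for three-body potentials (one must take $(1-\Delta)^{1+\eta}$ with $\eta>0$, cf.~\eqref{eq:intro-3-body-operator-inequality-imp}), and the alternative bounds with $\|W\|_p$, $p>1$, blow up since $\|U_R\|_p\sim R^{-6(p-1)}$. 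The paper's new idea is to \emph{iterate} the Dyson lemma: at each step one uses bosonic symmetry to rewrite the $N$-body expectation as an $M_j$-body one with $M_j\ll N$, which relaxes the four-body constraint to $R_j\ll M_j^{-1/3}$ and allows $R_j$ to grow geometrically. After finitely many iterations one reaches $R_J\sim N^{-\beta}$ with $\beta$ small enough for the quantitative de Finetti argument to close. Your proposal does not contain this bootstrap, nor the particle-number adjustment that makes it work.
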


	As a by-product of our proof of Theorem~\ref{thm:main1}, we also obtain that every approximate ground state of $H_N$ behaves as a convex superposition of the pure tensor products as in~\eqref{eq:formal-Psi}. Note that the approximation~\eqref{eq:formal-Psi} is expected to hold not in the norm topology of $\gH^N$, but rather in a weaker topology defined by reduced density matrices. Recall that for every $1\le k\le N$, the $k$\nobreakdash-body density matrix $\gamma_{\Psi_N}^{(k)}$ of a normalized wave function $\Psi_N \in \gH^N$ is a nonnegative trace class operator on $\gH^k=L_s^2((\R^3)^k)$ with kernel\footnote{In our convention, inner products in (complex) Hilbert spaces are linear in the second argument and anti-linear in the first.}
\[
	\gamma_{\Psi_N}^{(k)}({\bf z};{\bf z}') = \int_{(\R^3)^{N-k}} \Psi_N ({\bf z},x_{k+1}, \dots , x_N) \overline{\Psi_N({\bf z}',x_{k+1},\dots,x_N)} \d x_{k+1}\cdots \d x_{N}\,.
\]
Equivalently, we can also write $\gamma_{\Psi_N}^{(k)}=\Tr_{k+1\to N} \ketbra{\Psi_N}{\Psi_N}$, where the partial trace is taken over all but the first $k$ variables. The proper meaning of~\eqref{eq:formal-Psi} is
\begin{equation}\label{eq:1pdm-formal}
	\gamma_{\Psi_N}^{(k)} \approx \ketbra{ u^{\otimes k} }{ u^{\otimes k} }, \quad k= 1, 2, \dots\,,
\end{equation}
which is often referred to as the complete Bose--Einstein condensation.

	Our second result is a rigorous justification of~\eqref{eq:1pdm-formal} for ground states, or more generally for approximate ground states, of $H_N$.
\begin{theorem}[Condensation of approximate ground states]\label{thm:main2}
	Let $\Vext $, $A$, and $V_N$ be as in Theorem~\ref{thm:main1}. Assume that $\Psi_N$ is an approximate ground state for $H_N$ ---namely,
	\[
		\norm{\Psi_N}_{\gH^N}=1 \quad \text{and} \quad \lim_{N\to \infty} \frac{ \pscal{ \Psi_N, H_N \Psi_N} }{ N} = \GPnrg\,.
	\]
	Then, up to a subsequence as $N\to \infty$, there exists a Borel probability measure supported on $\GPmin$, the set of minimizers of $\Egp$, such that
	\[
		\lim_{N\to \infty} \gamma_{\Psi_{N}}^{(k)} = \int_{\GPmin} \ketbra{ u^{\otimes k} }{ u^{\otimes k} } \d \mu (u)\,, \quad k=1,2,\dots\,,
	\]
	in trace norm.
	
	In particular, if $\Egp$ has a unique minimizer $u_0$ (modulo a complex phase), then the whole sequence $\{\gamma^{(k)}_{\Psi_{N}}\}_N$ converges towards $\ketbra{ u_0^{\otimes k} }{ u_0^{\otimes k} }$ for all $k\in \N\setminus\{0\}$.
\end{theorem}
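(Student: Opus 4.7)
The plan is to combine a quantum de Finetti representation of the limit density matrices with a lower bound on the many-body energy that refines the one underlying Theorem~\ref{thm:main1}. By Banach--Alaoglu and a diagonal extraction, we may assume that for every $k\geq 1$ we have $\gamma^{(k)}_{\Psi_N} \wto^* \gamma^{(k)}$ in the trace class. The trap condition $\Vext(x) \geq C|x|^\alpha + 1$ together with the a priori bound $\pscal{\Psi_N, H_N \Psi_N} \leq N(\GPnrg + o(1))$ yields $\Tr(\Vext \gamma^{(1)}_{\Psi_N}) \leq C$ uniformly in $N$, so $\{\gamma^{(1)}_{\Psi_N}\}$ is tight; combined with $\Tr \gamma^{(1)}_{\Psi_N} = 1$ and bosonic symmetry, this upgrades the weak-$*$ convergence of each $\gamma^{(k)}_{\Psi_N}$ to convergence in trace norm, and in particular $\Tr \gamma^{(k)} = 1$ for every $k$.

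Next I would apply the weak quantum de Finetti theorem of Lewin--Nam--Rougerie to the consistent hierarchy $\{\gamma^{(k)}\}_{k\geq 1}$ to produce a Borel probability measure $\mu$ on the closed unit ball of $L^2(\R^3)$ such that
\[
	\gamma^{(k)} = \int \ketbra{u^{\otimes k}}{u^{\otimes k}} \d\mu(u), \quad k=1,2,\dots\,.
\]
The identity $\Tr \gamma^{(1)} = 1$ forces $\int \norm{u}_{L^2(\R^3)}^2 \d\mu(u) = 1$, so $\mu$ is in fact supported on the unit sphere of $L^2(\R^3)$.

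The crucial analytic ingredient is then the lower bound
\[
	\liminf_{N\to\infty} \frac{\pscal{\Psi_N, H_N \Psi_N}}{N} \geq \int \Egp(u) \d\mu(u).
\]
The idea is to sacrifice a small fraction of the kinetic energy in order to absorb the singularity of $V_N$ through a three\nobreakdash-body Dyson-type lemma, thereby replacing the microscopic interaction by an effective soft three\nobreakdash-body potential on a mesoscopic length scale between $N^{-1/2}$ and $N^{-1/3}$ that still carries the scattering energy $b_{\cM}(V)/N^2$. The leftover kinetic and trapping terms are weakly lower semicontinuous, while the effective interaction contributes a term linear in a three-body density matrix and therefore passes to the limit via the de Finetti formula, producing exactly the $b_{\cM}(V)|u|^6/6$ term of $\Egp$. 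Combining this with $\pscal{\Psi_N, H_N \Psi_N}/N \to \GPnrg$ and the pointwise bound $\Egp(u) \geq \GPnrg$ (an equality precisely when $u \in \GPmin$) forces $\Egp(u) = \GPnrg$ for $\mu$\nobreakdash-a.e.\ $u$, so $\supp \mu \subset \GPmin$.

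For the uniqueness statement, when $\GPmin = \{e^{{\bf i}\theta} u_0 : \theta \in [0,2\pi)\}$, the rank-one operator $\ketbra{u^{\otimes k}}{u^{\otimes k}}$ is phase\nobreakdash-invariant on this orbit, so $\gamma^{(k)} = \ketbra{u_0^{\otimes k}}{u_0^{\otimes k}}$ is independent of the extracted subsequence, and a standard uniqueness-of-limit argument upgrades subsequential convergence to convergence of the whole sequence. The main obstacle in this plan is unambiguously the operator lower bound in the third paragraph: the critical scaling $\beta=1/2$ makes the extraction of the scattering energy $b_\cM(V)$ substantially more delicate than in the soft regime of~\cite{LewNamRou-16}, and designing a three\nobreakdash-body analogue of the Dyson lemma that is simultaneously compatible with the local de Finetti step is where the real work of the paper should lie.
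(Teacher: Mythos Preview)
Your de Finetti/compactness setup is fine and matches the paper. The gap is in the ``crucial analytic ingredient''
\[
	\liminf_{N\to\infty} \frac{\pscal{\Psi_N, H_N \Psi_N}}{N} \;\geq\; \int \Egp(u)\, \d\mu(u)\,.
\]
The lower bound machinery developed in the paper (Lemmas~\ref{lem:binding}, \ref{lem:4_body_collision}, \ref{lem:softer}) does \emph{not} apply to an arbitrary approximate ground state $\Psi_N$. The four\nobreakdash-body collision estimate~\eqref{eq:4_body_collision} ---which is indispensable for removing the $\theta_{2R}$ cut\nobreakdash-off after each application of the Dyson lemma--- is proved via the Lieb--Seiringer heat kernel argument, and that argument requires the state to be the zero\nobreakdash-temperature Gibbs state $\Gamma_{M,N}$ of a Hamiltonian (it uses the operator inequality $e^{-H_{M,N}}\le e^{-H_{M-4,N}}e^{\sum \Delta_{x_i}}$ together with the binding inequality). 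For a generic $\Psi_N$ there is no a priori reason the four\nobreakdash-body density should scale like $R^9$, and without that bound the iteration of the Dyson lemma collapses: one cannot reach a soft enough potential for the de Finetti step to apply.

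The paper circumvents this via a Hellmann--Feynman argument (Section~\ref{sec:state}): for each $v\in L^2$ and $k\ge 1$ one perturbs $H_N$ by the bounded $k$\nobreakdash-body operator $S_{k,v}^N$ and reruns the \emph{ground state energy} lower bound for $H_N-S_{k,v}^N$. Since the perturbation is bounded, the binding inequality and the four\nobreakdash-body estimate survive unchanged for the perturbed Gibbs state, and one obtains
\[
	\liminf_{N\to\infty}\frac{\inf\sigma(H_N-S_{k,v}^N)}{N}\;\ge\;\inf_{\norm{u}_2=1}\bigl\{\Egp(u)-|\langle v,u\rangle|^{2k}\bigr\}\,.
\]
Combining this with $\pscal{\Psi_N,(H_N-S_{k,v}^N)\Psi_N}\ge\inf\sigma(H_N-S_{k,v}^N)$ and $\pscal{\Psi_N,H_N\Psi_N}/N\to\GPnrg$ yields, after a scaling in $v$ and letting the coupling go to zero,
\[
	\int_{S\gH}|\langle v,u\rangle|^{2k}\,\d\mu(u)\;\le\;\sup_{u\in\GPmin}|\langle v,u\rangle|^{2k}\,,
\]
which forces $\supp\mu\subset\GPmin$. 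So the point you are missing is that the information about $\Psi_N$ is extracted only through the trivial bound $\pscal{\Psi_N,\widetilde H\Psi_N}\ge\inf\sigma(\widetilde H)$ for a family of perturbed Hamiltonians, never through a direct lower bound on $\pscal{\Psi_N,H_N\Psi_N}$ in terms of $\mu$.
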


	Let us give some quick remarks about our results.
\begin{remark}[Less singular interactions] From our approach, we also find that for every $0<\beta<1/2$, the ground state energy of the Hamiltonian
	\begin{equation}\label{eq:HN-mf}
		H_{N,\mf}=\sum_{i=1}^N h_i + \sum_{1\le i<j<k \le N} N^{6\beta-2}V(N^{\beta}(x_i-x_j, x_i-x_k))
	\end{equation}
	satisfies
	\begin{equation} \label{eq:Energy-CV-MF}
		\lim_{N\to \infty} \frac{E_N}{N} = \inf_{\norm{u}_{L^2(\R^3)}=1} \left\{ \pscal{ u, h u }_{L^2(\mathbb{R}^{3})} + \frac{\hat V(0)}{6} \int_{\R^3} |u(x)|^6 \d x \right\}.
	\end{equation}
	This is an analogue of~\eqref{eq:Energy-CV} where the scattering energy $b_{\cM}(V)$ in the nonlinear functional is replaced by its first Born approximation $\hat V(0)=\int_{\R^6} V$. Consequently, we also obtain the convergence of approximate ground states of $H_{N,\mf}$ to the minimizer of the right side of~\eqref{eq:Energy-CV-MF}.

	When $\beta>0$ is small (depending on the growth of $\Vext$),~\eqref{eq:Energy-CV-MF} can be proved by a standard mean-field technique, for example using a quantitative quantum de Finetti theorem as proposed by Lewin, Nam, and Rougerie~\cite{LewNamRou-16}. The proof of~\eqref{eq:Energy-CV-MF} for the whole range $0<\beta<1/2$ is more difficult and can be obtained from a simplification of our method. The critical case $\beta=1/2$ in Theorem~\ref{thm:main1} is the hardest one since the strong correlation yields a leading order correction to the mean-field approximation.
\end{remark}

\begin{remark}[Dynamical problem]
	In the context of quantum dynamics, it was proved by Chen and Holmer~\cite{CheHol-19} for $\beta < 1/9$, then by Nam and Salzmann~\cite{NamSal-20} for $\beta < 1/6$, that the Schr{\"o}\-dinger evolution $\Psi_N(t)=e^{itH^0_{N,\mf}} \Psi_N(0)$ with initial condition $\gamma_{\Psi_N(0)}^{(1)} \approx \ketbra{u(0)}{u(0)}$ exhibits the complete condensation $\gamma_{\Psi_N(t)}^{(1)} \approx \ketbra{u(t)}{u(t)}$ for any time $t>0$, where the condensate $u(t)$ is determined by the time-dependent equation
	\begin{equation}\label{eq:NLS-eq-time}
		{\bf i} \partial_t u(t,x) = \left(-\Delta + \frac{1}{2} \hat V(0) |u(t,x)|^4 \right) u(t,x)\,.
	\end{equation}
	Here, $ H^0_{N,\mf}$ is the Hamiltonian as in~\eqref{eq:HN-mf} but without the external potential and the magnetic field. We expect that the same result holds for all $0<\beta<1/2$ and that the factor $\hat V(0)$ in~\eqref{eq:NLS-eq-time} is replaced in the critical case $\beta=1/2$ by the scattering energy $b_{\cM}(V)$. This is still an open question. The results in the present paper justify the initial condition of the dynamical problem.
\end{remark}

\begin{remark}[Comparison to the two\nobreakdash-body interaction case]
	The derivation of the Gross--Pitaevskii functional, the analogue of~\eqref{eq:NLS} with $|u|^6$ replaced by $|u|^4$, from the many\nobreakdash-body problem with a two\nobreakdash-body interaction potential of the form $N^2 V(N(x-y))$ goes back to the seminal papers of Lieb, Seiringer, and Yngvason~\cite{LieYng-98,LieSeiYng-00,LieSei-02,LieSei-06}. There are four levels of difficulty. First, for the homogeneous system of $N$ bosons trapped in a unit torus ($A\equiv 0$ and $\Vext \equiv 0$), the convergence of the ground state energy (similar to Theorem~\ref{thm:main1}) follows from the analysis in~\cite{LieYng-98}. Second, the extension of the energy convergence to the inhomogeneous trapped case in $\R^3$ ($A\equiv 0$ and $\Vext \not\equiv 0$) was solved in~\cite{LieSeiYng-00}. Third, the proof of the Bose--Einstein condensation (similar to Theorem~\ref{thm:main2}), which is harder, was first achieved in~\cite{LieSei-02}. Finally, in the most mathematically demanding setting of a general trapped case with a magnetic field ($A\not\equiv 0$ and $\Vext \not\equiv 0$), the convergences of both the energy and the states were settled in~\cite{LieSei-06}. See also~\cite{NamRouSei-16} for an alternative proof.

	Here, we aim at extending the results from the two\nobreakdash-body interaction case, in the most difficult setting from~\cite{LieSei-06,NamRouSei-16}, to the three\nobreakdash-body interaction case. For the homogeneous gas, a simpler version of our analysis can be found in~\cite{NamRicTri-22a} where we combine a variant of Dyson's lemma of the present paper with the argument in~\cite{LieYng-98} in order to quickly derive the energy convergence (the analysis can even be done in the thermodynamic limit). On the other hand, the proof in the present paper is much more complicated than that in~\cite{NamRicTri-22a} since we handle the full generality of the one-body operator and we prove Bose--Einstein condensation, for which the argument in~\cite{NamRicTri-22a} is insufficient.

	In comparison to the existing works in the two\nobreakdash-body interaction case~\cite{LieYng-98,LieSeiYng-00,LieSei-02,LieSei-06,NamRouSei-16}, the analysis in the three\nobreakdash-body interaction case requires three new ingredients. First, we are unable to derive the energy upper bound from a Jastrow--Dyson type state and a cubic (in annihilation/creation operators) transformation is hence needed in order to create the correct correlation structure. Second, for the energy lower bound, we have to extend Dyson's lemma to the new scattering problem, which is in particular relevant for non-radial potentials. Third, the core novel technique of our proof lies in a bootstrap argument where we repeatedly apply Dyson's lemma in order to implement the mean-field approximation. This new technique is not only crucial to handle the three\nobreakdash-body interaction case, which is energy-critical, but is also helpful to simplify the proof in the two\nobreakdash-body interaction case (see Appendix~\ref{app}). More details are given below.
\end{remark}

\subsection{Ingredients of the proof}
	Now let us explain the main ingredients of the proof.

\medskip
\noindent {\em Upper bound.}
	The uncorrelated trial state as in~\eqref{eq:formal-Psi} is insufficient to get the leading order upper bound $E_N\le N\GPnrg+ o(N)$. In fact, the energy per particle over the Hartree states $u^{\otimes N}$ is given by a functional similar to $\Egp(u)$ but where the scattering energy $b_{\cM}(V)$ in front of the nonlinear term $|u|^6$ is replaced by $\hat V(0)=\int_{\R^6} V$. Hence, it is important to take some correlation into account. Heuristically, we can think of the Jastrow--Dyson type state
\begin{equation} \label{eq:Jastrow--Dyson}
	\Psi_N(x_1,\dots ,x_N)= \prod_{j=1}^N u_0(x_j) \prod_{p<k<\ell}^N f_N(x_p-x_k,x_p-x_\ell)\,,
\end{equation}
where $u_0$ is a minimizer of $\GPnrg$ and $f_N: \R^3\times \R^3\to \R$ is a function satisfying the symmetry~\eqref{eq:sym} and solving, for almost every $\bx\in \R^6$, the scattering equation
\[
	-2\Delta_\cM f_N (\bx) + V_N(\bx) f_N(\bx) =0\,,
\]
which is equivalent to solve, for almost every $x,y,z\in \R^3$, the equation
\begin{equation} \label{eq:def-fN}
	(-\Delta_x - \Delta_y - \Delta_z) f_N(x-y,x-z) + (V_N f_N) (x-y,x-z) =0\,.
\end{equation}
The existence of such a function is proved in Theorem~\ref{thm:scattering-M}. In particular, the scattering energy is encoded in $f_N$ as
\[
	\int_{\R^6} V_N f_N = b_{\cM}(V_N) = N^{-2}b_{\cM}(V)\,.
\]
Unfortunately, we are not able to compute the ground state energy per particle of the trial state~\eqref{eq:Jastrow--Dyson} to the leading order, even if we replace $f_{N}$ by a modified version $f_{N,\ell}$ satisfying $f_{N,\ell}(x)=1$ for $|x|\ge \ell$: the computation is significantly more complicated than for the two\nobreakdash-body interaction case.
	
	Here, we follow an alternative approach. First, for ease of computation, we extend~$H_N$ to the operator~$\mathbb{H}_N = 0\oplus \bigoplus_{M=1}^\infty H_{M,N}$ acting on the bosonic Fock space~$\cF(\gH) :=\C \oplus \bigoplus_{M=1}^\infty \gH^M$
\[
	H_{M,N} = \sum_{j=1}^M h_j + \sum_{1\leq i<j<k \leq M} N V(N^{1/2}(x_i-x_j, x_i-x_k))\,.
\]
This extension can be written conveniently as
\[
	\mathbb{H}_N =\int_{\R^{3}} a^*_x h_x a_x \d x + \frac{1}{6} \int_{(\R^{3})^3} V_N(x-y,x-z) a^*_x a^*_y a^*_z a_x a_y a_z \d x \d y \d z
\]
using the standard creation and annihilation operators $a_x^*, a_x$. To capture the condensation, we define the Weyl operator
\[
	W(f) = \exp (a^*(f) - a(f)) \quad \text{for all } f\in L^2(\R^{3})\,,
\]
which is a unitary operator on the Fock space and satisfies
\[
	W(f)^* a(g) W(f) = a(g) + \pscal{ g, f } \quad \text{for all } f,g\in L^2(\R^{3})\,.
\]
In order to create the desired correlation structure encoded in the scattering solution $f_N$ in~\eqref{eq:def-fN}, we introduce another unitary transformation
\[
	U_N := \exp \left[ \1(\cN \le N^{1/2}) B_1^* - B_1 \1(\cN \le N^{1/2}) \right],
\]
where $\cN = \int a^*_x a^{\phantom*}_x dx$ is the number operator on the Fock space and
\begin{equation}\label{intro-Def_B1}
	B_1^* = -\frac{1}{6} N^{\frac32} \int_{(\R^{3})^3} (1-f_N)(x-y,x-z) u_0 (x) u_0 (y) u_0 (z) a^*_x a^*_y a^*_z \d x \d y \d z \,.
\end{equation}
We choose the cut\nobreakdash-off on the particle number to be $N^{1/2}$ such that $U_N$ does not create too many excited particles. We will prove (see Theorem~\ref{theo:upper_bound_fock_space} below) that
\[
	\< \Omega, U_N^* W(\sqrt N \varphi)^* \mathbb{H}_N W(\sqrt N \varphi) U_N \Omega \> \leq N \GPnrg + \cO(N^{2/3})\,,
\]
where $\Omega$ is the vacuum.
Note that the trial state $W(\sqrt N \varphi) U_N \Omega$ does not belong to the $N$\nobreakdash-body space $\gH^N$, but it essentially lives on sectors of $N+ O(\sqrt{N})$ particle number. Following an idea of Solovej~\cite{Solovej-06}, by controlling the fluctuations of the particle number of $W(\sqrt N \varphi) U_N \Omega$, we are able to construct a trial state in the $N$\nobreakdash-body Hilbert space and to obtain the desired upper bound
\[
	E_N \leq N \GPnrg +o(N)\,.
\]

	In the two\nobreakdash-body interaction case, a similar trial state has been used~\cite{BenPorSch-16,NamNapRicTri-21} where equation~\eqref{intro-Def_B1} is replaced by a kernel that is quadratic in terms of creation and annihilation operators, which simplifies the computation greatly since $U_N \Omega$ is a quasi-free state. In our case, the computation with the cubic kernel in equation~\eqref{intro-Def_B1} is more complicated, but technically manageable. The details will be explained in Section~\ref{sec:upper}.

\medskip
\noindent {\em Lower bound.}
	We will follow the overall strategy from the two-body interaction case~\cite{LieSei-06,NamRouSei-16}, namely we replace the singular potential $V_N=N V(N^{1/2} \cdot)$ by a softer potential using a Dyson lemma and then we apply the mean-field approximation. However, to handle the three-body interaction case, we have to use the Dyson lemma several times (instead of only one time as in the two-body interaction case) and this iteration procedure requires new ideas which eventually lead to a substantial improvement over the overall strategy.

Let us quickly explain  our approach. The general idea of the Dyson lemma is that for any $0<\nu\le 1$ and $0<\beta'<\beta \le 1/2$, we have the operator inequality 
\begin{equation} \label{eq:Dyson-easy}
	-2\nu \Delta_{\cM} + V_{N,\beta}({\bf x}) \geq U_{N,\beta'}({\bf x}) \left( 1- CN^{\beta'-\beta}\right)
\end{equation}
on  $L^2(\R^6)$, 	with the scaling convention  
\[
	V_{N,\beta}({\bf x})=N^{6\beta-2} V(N^\beta {\bf x})\,, \quad U_{N,\beta'}({\bf x})= N^{6\beta'-2} U(N^{\beta'}{\bf x})
\]
where $V,U \in C_c^\infty(\R^6)$ are essentially fixed and 
\[
	\int_{\R^6} U = \nu N^{2} b_{\cM}(\nu^{-1} V_{N,\beta})\,.
\]
In our first use of the Dyson lemma, by taking $\nu=1$ we can replace the original Gross--Pitaevskii scaling $\beta=1/2$ by a simpler scaling $0<\beta'<1/2$ with  
\[
	\int_{\R^6} U_{N,\beta'} =  b_{\cM}(V_{N,\beta=1/2})= N^{-2} b_{\cM}(V)=\text{the desired scattering energy}.
\]
While the above estimate holds for all $0<\beta<1/2$, lifting it to the many-body level requires the additional condition $\beta>1/3$ which is technically needed to control several error terms. On the other hand, the mean-field techniques in~\cite{LewNamRou-16} only work for a smaller $\beta$. 

To reduce further $\beta$, we will apply the Dyson lemma again. Note that thanks to the sub-critical scaling $\beta'<1/2$, the equality  
\[
	\nu  b_{\cM} (\nu^{-1} U_{N,\beta'}) = \int_{\R^6} U_{N,\beta'} (1+o(1)) = N^{-2} b_{\cM}(V) (1+ o(1)) 
\]
holds for all $N^{\beta'-1/2} \ll \nu \le 1$, namely the scattering energy of $\nu^{-1} U_{N,\beta'}$ is well approximated by its first Born approximation $\int_{\R^6} \nu^{-1} U_{N,\beta'}$. 
 This allows us to apply the Dyson lemma  with some $0<\nu \ll 1$, namely we sacrifice very little kinetic energy and still get 
\begin{equation} \label{eq:Dyson-easy-subcritical}
	-2\nu \Delta_{\cM} + U_{N,\beta'}({\bf x}) \geq U_{N,\beta''}({\bf x}) (1+ o(1))
\end{equation}
with $0<\beta'' < \beta'$. Repeating this step finitely many times, we end up with a soft potential which can be handled by the techniques in~\cite{LewNamRou-16}. 

On the technical side, we will derive a many-body version of~\eqref{eq:Dyson-easy} with suitable cut-offs in the configuration and momentum spaces. In order to control various error terms and make the iteration procedure work, we will use the bosonic symmetry to adjust the relevant number of particles in each step of the bootstrap argument. 

Now let us go for a more detailed explanation of our lower bound proof. 

\medskip
{\bf Step 1: Dyson's lemma.}	We will prove in Theorem~\ref{dyson_lemma_Rd_nonradial-M} that, given the potential $V_N \ge 0$ supported in $|{\bf x}| \le O(N^{-1/2})$ and $R\gg N^{-1/2}$, we can find a function $U\ge 0$ in $L^1(\R^6)$ supported in $\{|{\bf x}|\le R\}$ such that $U_R$ satisfies the symmetry~\eqref{eq:sym},
\[
	U_R(x,y)\le C R^{-6} \1_{\{|x|\le R\}} \1_{\{|y|\le R\}}\,,\quad \int_{\R^6} U_R = b_{\cM}(V) (1+ o(1)_{N\to \infty})\,,
\]
and
\begin{equation} \label{eq:intro-Dyson}
	-2 \Delta_{\cM}  + V_N (\bx) \geq N^{-2} {U_R} (\bx)\quad \text{on }L^2\!\left( \R^6 \right).
\end{equation}
Actually we will derive need an improved version of~\eqref{eq:intro-Dyson}, with $\cM \nabla_{\bf x} \1_{\{|{\bf x}| \le \sqrt{2}R\}} \cM$ instead of $\Delta_{\cM}$, but let us ignore the technical cut-off in the introductory discussion.    

Note that all existing proofs of the Dyson lemma and its generalizations rely on the radial symmetry of the potential (see~\cite{LieSeiYng-00,LieSeiSol-05}), which is not satisfied by our potential $V:\R^6\to \R$. We will derive~\eqref{eq:intro-Dyson} from a general result on the standard scattering energy (see Theorem~\ref{dyson_lemma_Rd_nonradial}) which holds for a large class of potentials and could be of independent interest.

Now, coming back to the Hamiltonian $H_N$ and using~\eqref{eq:intro-Dyson}, we obtain in Lemma~\ref{lem:gen_dyson_lemma} the following lower bound for all $1>\eps>0$,
\begin{align} \label{eq:intro-many-body-Dyson}
	(1-&\eps)^{-1} H_N + C_\eps R^3 N^2 \\
	&\ge \sum_{i=1}^N \widetilde{h}_i + \frac{1}{6 N^2} \sum_{\substack{ 1\le i,j,k \leq N \\ i\neq j \neq k \neq i } } {U_R}(x_i-x_j, x_i-x_k) \prod_{\ell \neq i,j,k} \theta_{2R}\left(\frac{x_i+x_j+x_k}{3}-x_\ell \right), \nn
\end{align}
where
\[
	\widetilde{h} = h - (1-\eps) p^2 \1_{\{|p| > \eps^{-1}\}} \quad \text{and} \quad \theta_R:=\1_{\{|x| > R\}}\,.
\]
Note that~\eqref{eq:intro-many-body-Dyson} implicitly contains an improved version of~\eqref{eq:intro-Dyson}, where only the high-momentum part $|p| \ge \eps^{-1}$ of the kinetic energy is needed to replace $V_N$ by $N^{-2}U_R$, and the low-momentum part $|p|\le \eps^{-1}$ is kept in order to recover the full nonlinear functional in~\eqref{eq:NLS}. The same idea of saving the the low-momentum part has been also used in the two-body interaction case~\cite{LieSei-06,NamRouSei-16}.

\medskip
{\bf Removing the cut\nobreakdash-off and estimating four-body error terms.}
	The cut\nobreakdash-off $\theta_{2R}(x)=\1_{\{|x| > 2R\}}$ appears in~\eqref{eq:intro-many-body-Dyson} due to the fact that we exclude the event of having four particles within a distance $O(R)$. This is a disadvantage of the use of the Dyson lemma and the four\nobreakdash-body problem here is similar to the three\nobreakdash-body one in the two\nobreakdash-body interaction case~\cite{LieSei-06,NamRouSei-16}. The standard way to remove the cut\nobreakdash-off $\theta_{2R}$ is to use Bernoulli's inequality
\begin{align*}
	\prod_{\ell \neq i,j,k} \theta_{2R}\left(\frac{x_i+x_j+x_k}{3}-x_\ell \right) &= \prod_{\ell \neq i,j,k} \left( 1- \chi_{2R}\left(\frac{x_i+x_j+x_k}{3}-x_\ell \right) \right) \\
		&\ge 1- \sum_{\ell \neq i,j,k} \chi_{2R}\left(\frac{x_i+x_j+x_k}{3}-x_\ell \right),
\end{align*}
where $\chi_R(x):=\1_{\{|x| \leq R\}} = 1 - \theta_R(x)$. Consequently, the interaction in~\eqref{eq:intro-many-body-Dyson} can be bounded from below as
\begin{multline} \label{eq:intro-many-body-Dyson-2}
	\frac{1}{6 N^2} \sum_{\substack{ 1\le i,j,k \leq N \\ i\neq j \neq k \neq i } } {U_R}(x_i-x_j, x_i-x_k) \prod_{\ell \neq i,j,k} \theta_{2R}\left(\frac{x_i+x_j+x_k}{3}-x_\ell \right) \\
	\ge \begin{multlined}[t]
		\frac{1}{6 N^2} \sum_{\substack{ 1\le i,j,k \leq N \\ i\neq j \neq k \neq i } } {U_R}(x_i-x_j, x_i-x_k) \\
		- \frac{C}{N^2R^6} \sum_{\substack{ 1\le i,j,k \leq N \\ i\neq j \neq k \neq i } } \sum_{\ell \neq i,j,k} \chi_{4R}(x_i-x_j) \chi_{4R}(x_i-x_k) \chi_{4R}(x_i-x_\ell)\,.
	\end{multlined}
\end{multline}
Due to the energy-critical nature of the problem, we are unable to control the four-body error term using the second-moment argument as in~\cite{NamRouSei-16} (and its variants, e.g.~a third-moment argument, seem also insufficient). Nevertheless, following the approach in~\cite{LieSei-06}, we can show (see Lemma~\ref{lem:4_body_collision}) that, up to a replacement of $N$ by $M\approx N$ if necessary, the zero-temperature limit of the bosonic Gibbs state $\Gamma_{N}$ of $H_N$ satisfies the four\nobreakdash-body collision estimate
\[
	\Tr \prod_{\ell=2}^{4} \chi_R(x_1-x_\ell) \Gamma_{N} \leq C R^9\,.
\]
Therefore, the expectation against $\Gamma_{N}$ of the error term in~\eqref{eq:intro-many-body-Dyson-2} is bounded by $CN^2 R^3$, which coincides with the error in~\eqref{eq:intro-many-body-Dyson}. In summary, from~\eqref{eq:intro-many-body-Dyson} and~\eqref{eq:intro-many-body-Dyson-2} we deduce that
\begin{equation} \label{eq:intro-many-body-Dyson-3}
	\frac{E_N}{N} \ge \frac{1-\eps}{N} \Tr \bigg( \sum_{i=1}^N \widetilde{h}_i + \frac{1}{6 N^2} \sum_{\substack{ 1\le i,j,k \leq N \\ i\neq j \neq k \neq i } } {U_R}(x_i-x_j, x_i-x_k) \bigg) \Gamma_{N} - C_\eps N R^3\,,
\end{equation}
where $\Gamma_{N}$ is the zero-temperature limit of the bosonic Gibbs state of $H_N$. In order to keep the error of order $o(1)$, we need to take
\begin{equation} \label{eq:cond-R}
	N^{-1/3} \gg R \gg N^{-1/2}\,.
\end{equation}
These constraints on $R$ are optimal in order to make the Dyson lemma useful: the condition $R\gg N^{-1/2}$ ensures that we replace $V_N$ by a less singular potential, while the condition $R \ll N^{-1/3}$ keeps us in the dilute regime where there are essentially no four\nobreakdash-body collisions. Under these conditions, applying the Dyson lemma does not change the energy to the leading order. 

\medskip
{\bf Step 2: Mean-field approximation.}
	So far we follow closely the existing analysis in the two\nobreakdash-body interaction case~\cite{LieSei-06,NamRouSei-16}. Let us now explain a crucial new difficulty in the three\nobreakdash-body interaction case, which implicitly relies on the fact that we are dealing with an energy-critical problem here. In the two\nobreakdash-body interaction case, as soon as we arrive at an analogue of~\eqref{eq:intro-many-body-Dyson-3}, the right-hand side can be treated using now standard mean-field techniques, e.g.~using the coherent state method as in~\cite{LieSei-06} or using the quantum de Finetti theorem as in~\cite{NamRouSei-16}. A key ingredient needed in~\cite{LieSei-06,NamRouSei-16} is the two\nobreakdash-body inequality
\begin{equation} \label{eq:intro-2-body-operator-inequality}
	|W(x-y)| \le C \norm{W}_{L^1(\R^3)} (1-\Delta_x) (1-\Delta_y) \quad \text{on } L^2\!\left( (\R^3)^2 \right).
\end{equation}
Together with the so-called ``second moment estimates", this inequality allows to control the interaction potential efficiently by the kinetic operator. The bound~\eqref{eq:intro-2-body-operator-inequality} also plays an essential role in the study of the Gross--Pitaevskii dynamics in~\cite{ErdYau-01}. In fact, as proved in~\cite{NamRouSei-16}, the refinement
\begin{equation} \label{eq:intro-2-body-operator-inequality-imp}
	|W(x-y)| \le C_\eta \norm{W}_{L^1(\R^3)} (1-\Delta_x)^{3/4+\eta} (1-\Delta_y)^{3/4+\eta} \quad \text{on } L^2\!\left( (\R^3)^2 \right)
\end{equation}
of~\eqref{eq:intro-2-body-operator-inequality} holds for every $\eta>0$ and is useful in combination with the so-called second moment estimates (which ones are out of reach for the three\nobreakdash-body case in the regime we consider). Roughly speaking,~\eqref{eq:intro-2-body-operator-inequality-imp} can be interpreted as a variant of the Sobolev embedding theorem $L^\infty(\R^3)\subset H^{3/2+2\eta}(\R^3)$, where the total $(3/2+2\eta)$ derivatives on $x-y$ are divided equally between the variables $x$ and $y$. In the three\nobreakdash-body interaction case, the analogue of~\eqref{eq:intro-2-body-operator-inequality-imp} is the following operator inequality on $L^2((\R^3)^3)$:
\begin{equation} \label{eq:intro-3-body-operator-inequality-imp}
	|W(x-y,x-z)| \le C_\eta \norm{W}_{L^1(\R^6)} (1-\Delta_x)^{1+\eta} (1-\Delta_y)^{1+\eta} (1-\Delta_z)^{1+\eta}
\end{equation}
for every $\eta>0$, which should be compared with the Sobolev embedding theorem $L^\infty(\R^6)\subset H^{3+3\eta}(\R^6)$. In particular, there is no extension of~\eqref{eq:intro-2-body-operator-inequality} to the three\nobreakdash-body case, namely one cannot take $\eta=0$ in~\eqref{eq:intro-3-body-operator-inequality-imp}. For that reason, we are not able to apply directly the mean-field techniques as in ~\cite{LieSei-06,NamRouSei-16} in order to handle the right-hand side of~\eqref{eq:intro-many-body-Dyson-3}. More precisely, one could try to replace~\eqref{eq:intro-3-body-operator-inequality-imp} by the bound
\[
	|W(x-y,x-z)| \le C_p \norm{W}_{L^p(\R^6)} (1-\Delta_x) (1-\Delta_y) (1-\Delta_z)\,,
\]
for $p>1$, but in our application $\norm{U_R}_{L^p(\R^6)} \sim R^{-6 (p-1)}$ is then too large due to the constraint $R\ll N^{-1/3}$ in~\eqref{eq:cond-R}. Therefore, to overcome this difficulty, we have to relax the condition $R\ll N^{-1/3}$ before applying the mean-field techniques, and this requires new ideas. 

\medskip
{\bf Repeated use of the Dyson lemma.} We will replace $U_R$ by softer potentials by applying the Dyson lemma again, in the spirit of~\eqref{eq:Dyson-easy-subcritical}  where $N^{-2}U_R$ plays the role of $U_{N,\beta}$ and $R$ plays the role of $N^{-\beta}$. As we already mentioned before, in our second use of the Dyson lemma, we only use a very small fraction of the kinetic energy. More precisely, for every $0<R\ll R_1\ll 1$  we can find $0< \nu \ll 1$ such that 
\begin{equation}\label{eq:intro-Dyson-2a}
	- \nu \Delta_{\cM}  + N^{-2} U_R (\bx) \geq N^{-2} {U_{R_1}} (\bx) (1+ o(1))\,.
\end{equation}
However, the latter bound is not very helpful since if we use it to deal with the $N$-body Hamiltonian, then we have to impose the additional condition $R_1\ll N^{-1/3}$, which is similar to~\eqref{eq:cond-R},  in order to control the corresponding four-body error terms. Therefore, to proceed further we have to introduce a new technique to  relax the condition on $R_1$. 

\medskip
{\bf Adjustment of the number of particles.} To relax the condition on $R_1$, we will replace the $N$-body Hamiltonian by a $N_1$-body Hamiltonian with $N_1\ll N$. Heuristically, if we can replace $N$ by $N_1\ll N$, then the constraint on $R_1$ becomes $R_1\ll N_1^{-1/3}$ which is much better than the previous condition $R_1\ll N^{-1/3}$. This can be done rigorously using the bosonic symmetry of $\Psi_{N}$, namely we can rewrite the main term on the right-hand side of~\eqref{eq:intro-many-body-Dyson-3} as
\begin{multline}\label{eq:intro-soft-potential-main-2a}
	\frac{1}{N}\Tr \bigg( \sum_{i=1}^N \widetilde{h}_i + \frac{1}{6 N^2} \sum_{\substack{ 1\le i,j,k \leq N \\ i\neq j \neq k \neq i } } {U_R}(x_i-x_j, x_i-x_k) \bigg) \Gamma_{N}  \\
	\approx \frac{1}{N_1} \Tr \bigg( \sum_{i=1}^{N_1} \widetilde{h}_i + \frac{1}{6 N_1^2} \sum_{\substack{ 1\le i,j,k \leq N_1 \\ i\neq j \neq k \neq i } } {U_R}(x_i-x_j, x_i-x_k) \bigg)\Gamma_{N} 
\end{multline}
for $N \gg N_1 \gg 1$. For the $N_1$-body Hamiltonian, the following bound 
\begin{equation*}
	- \nu \Delta_\cM + N_1^{-2} U_R (\bx) \geq N_1^{-2} {U_{R_1}} (\bx) (1+ o(1)), 
\end{equation*}
can be used instead of~\eqref{eq:intro-Dyson-2a}, and we can replace  $U_R$ on the right-hand side of~\eqref{eq:intro-soft-potential-main-2a} by $U_{R_1}$ for a lower bound as soon as
\begin{equation*}
	N_1^{-1/3} \gg R_1 \gg R \gg N_1^{-1/2}\,.
\end{equation*}
The latter constraints  are comparable to~\eqref{eq:cond-R}. By choosing $N_1$ suitably, we can fulfill these conditions provided that  
\begin{equation*}
	R^{2/3} \gg R_1 \gg R\,,
\end{equation*}
and obtain 
\begin{multline*} 
	\frac{1}{N}\Tr \bigg( \sum_{i=1}^N \widetilde{h}_i + \frac{1}{6 N^2} \sum_{\substack{ 1\le i,j,k \leq N \\ i\neq j \neq k \neq i } } {U_R}(x_i-x_j, x_i-x_k) \bigg) \Gamma_{N}  \\
	\ge \frac{1-\eps}{N} \Tr \bigg( \sum_{i=1}^{N} \widetilde{h}_i + \frac{1}{6 N^2} \sum_{\substack{ 1\le i,j,k \leq N \\ i\neq j \neq k \neq i } } {U_{R_1}}(x_i-x_j, x_i-x_k) \bigg)\Gamma_{N}  + o(1)\,.
\end{multline*}
Here, we already used the bosonic symmetry again to replace $N_1$ by $N$ on the right-hand side. Repeating this procedure, we can replace $U_R$ by $U_{R_\ell}$ for every fixed $\ell\in \N$ provided that
\[
	R_{\ell-1}^{2/3} \gg R_{\ell} \gg R_{\ell-1}\,.
\]
Thus, for every $\eta>0$ small arbitrarily, we can choose  $R_\ell= N^{-\eta}$ with $\ell=\ell(\eta)$ sufficiently large.
Putting it all together, we deduce from~\eqref{eq:intro-many-body-Dyson-3} that
\begin{multline*}
	\frac{E_N}{N} \ge \frac{(1-\eps)^{\ell+1}}{N} \Tr \bigg( \sum_{i=1}^N \widetilde{h}_i + \frac{1}{6 N^2} \sum_{\substack{ 1\le i,j,k \leq N \\ i\neq j \neq k \neq i } } {U_{R_\ell}}(x_i-x_j, x_i-x_k) \bigg) \Gamma_{N} + o(1)
\end{multline*}
with a soft potential $U_{R_\ell}$ that can be handled by the mean-field techniques from~\cite{LewNamRou-16}. The details will be discussed in Section~\ref{sec:lower}. 
Taking $\eps\to 0^+$ at the end, we obtain the desired lower bound
\[
	\frac{E_N}{N} \ge \inf_{\norm{u}_{L^2(\R^3)=1}} \left( \pscal{ u, h u }_{L^2(\mathbb{R}^{3})} + \frac{b_{\cM}(V)}{6} \int_{\R^3} |u(x)|^6 \d x \right) + o(1)\,.
\]
This completes our sketch of the proof of Theorem~\ref{thm:main1}.

\medskip
\noindent {\em Proof of the BEC.}
	Theorem~\ref{thm:main2} follows from a Hellmann--Feynman argument as in~\cite{NamRouSei-16}, where we will derive the energy convergence of Theorem~\ref{thm:main1} for a perturbed problem. This will be discussed in Section~\ref{sec:state}. 
	
\medskip
\noindent {\bf Organization of the paper.}
	In Section~\ref{sec:scattering-energy}, we discuss basic facts on the scattering energy in~\eqref{eq:intro-def-scat-energy}. Then, we derive several versions of the Dyson lemma in Section~\ref{sec:Dyson}, which will be used to replace the potential $V_N$ by softer ones in Section~\ref{sec:reduction_soft}. In Section~\ref{sec:lower}, we conclude the energy lower bound in Theorem~\ref{thm:main1}. The matching energy upper bound is proved in Section~\ref{sec:upper}. Finally, the convergence of states in Theorem~\ref{thm:main2} is obtained in Section~\ref{sec:state}.

\medskip
\noindent {\bf Notation.}
	From now on and for shortness, we will denote $\norm{\cdot}_p:=\norm{\cdot}_{L^p(\R^d)}$ when there is no possible confusion on the dimension $d$.

\medskip
\noindent {\bf Acknowledgments.}
	Arnaud Triay thanks Jonas Lampart for helpful discussions. We would like to thank the referees for their useful comments and suggestions, leading to a significant improvement of the paper. We received funding from the Deutsche Forschungsgemeinschaft (DFG, German Research Foundation) under Germany's Excellence Strategy (EXC-2111-390814868).

\section{Scattering energy}\label{sec:scattering-energy}

\subsection{General setting} In this section we discuss the zero-scattering problem of nonnegative potentials which are not necessarily radial. We refer to \cite[Appendix C]{LieSeiSolYng-05} for a related discussion in the case of radial potentials. 

	Let $d\ge 3$ and $0\le v \in L^\infty(\R^d)$ be compactly supported. We define the {\em zero--scattering energy} of $v$ by
\begin{equation} \label{eq:def-scat-energy}
	b(v):= \inf_{\varphi \in \dot H^1(\R^d)} \int_{\R^d} \left( 2|\nabla \varphi( \bx)|^2 + v(\bx) |1-\varphi(\bx)|^2 \right) \d \bx\,.
\end{equation}

	Here, $\dot H^1(\R^d)$ is the space of functions $g:\R^d \to \C$ vanishing at infinity with $|\nabla g|\in L^2(\R^d)$, denoted by $D^1(\R^d)$ in Lieb--Loss~\cite[Section 8.3]{LieLos-01}.
\begin{theorem}[General scattering solution]\label{thm:scattering}
	Let $d\ge 3$ and $0\le v \in L^\infty(\R^d)$ be compactly supported. Then, the variational problem~\eqref{eq:def-scat-energy} has a unique minimizer $\omega=(-2\Delta +v)^{-1}v$.
	It solves, for almost every $\bx\in \R^d$, the scattering equation
	\[
		-2\Delta \omega (\bx) + v(\bx) (\omega(\bx)-1) =0
	\]
	and satisfies, for all $\bx\in \R^d$, the pointwise estimates
	\begin{equation}\label{eq:bounds_on_w_general}
		0\le \omega(\bx)<1\,, \quad \omega(\bx) \le \frac{C_{d,v}}{|\bx|^{d-2}+1}\,, \quad \text{and} \quad |\nabla \omega(\bx)| \leq \frac{C_{d,v}}{|\bx|^{d-1}+1}\,.
	\end{equation}
	Moreover, the scattering energy satisfies
	\begin{equation} \label{eq:bounds_on_w_general_2}
		b(v)= \int_{\R^d} v(\bx) (1-\omega(\bx)) \d \bx \quad \text{and} \quad 0 \le \int_{\R^{d}} v - b( v) \le C_d \norm{v}_{\frac{2d}{d+2}}^2\,.
	\end{equation}
\end{theorem}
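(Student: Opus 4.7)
The approach is the direct method of the calculus of variations followed by standard elliptic/maximum-principle arguments. First I would establish existence and uniqueness of a minimizer $\omega$ of the functional $\cE(\varphi):= \int 2|\nabla \varphi|^2 + v|1-\varphi|^2$ on $\dot H^1(\R^d)$. The functional is strictly convex in $\varphi$ (the kinetic part is strictly convex in $\nabla \varphi$ and the potential part is quadratic in $\varphi$) so uniqueness is immediate once existence is shown. For existence, take a minimizing sequence $\{\varphi_n\}$; the $\dot H^1$-seminorm is bounded, so by the Sobolev embedding $\dot H^1(\R^d)\hookrightarrow L^{2d/(d-2)}(\R^d)$, up to a subsequence $\varphi_n\wto \omega$ weakly in $\dot H^1$ and also strongly in $L^2_{\mathrm{loc}}$. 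Lower semicontinuity of the kinetic part and Fatou's lemma applied on the support of $v$ yield $\cE(\omega)\le \liminf \cE(\varphi_n)$. The Euler--Lagrange equation, obtained by taking variations $\omega+\eps\eta$ with $\eta\in C^\infty_c(\R^d)$, reads $-2\Delta \omega + v(\omega-1)=0$ in the distributional sense; since $v$ is bounded and compactly supported this is equivalent to $(-2\Delta+v)\omega=v$, i.e.\ $\omega=(-2\Delta+v)^{-1}v$.

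Next I would prove the pointwise bounds. Rewriting the equation as $-2\Delta\omega = v(1-\omega)$ and testing the variational problem against $|\omega|$ in place of $\omega$ (using that $v\ge 0$ and that $|\,|1-|\omega|\,|\le|1-\omega|$ after possibly replacing $\omega$ by $\min(\omega^+,1)$) one obtains $0\le \omega \le 1$. Strict inequality $\omega<1$ follows from the strong maximum principle applied to the superharmonic function $\omega$ in the connected set $\R^d$: if $\omega(x_0)=1$, then $\omega\equiv 1$, contradicting the $L^{2d/(d-2)}$ decay at infinity. For the decay, I would use the convolution representation $\omega(x)=\frac{1}{2}(G*f)(x)$ with $f=v(1-\omega)\in L^\infty$ compactly supported and $G(x)=c_d|x|^{-(d-2)}$ the fundamental solution of $-\Delta$. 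For $|x|$ large, $|x-y|\ge |x|/2$ on $\supp v$ and the bound $\omega(x)\le C/|x|^{d-2}$ follows; combined with $\omega\le 1$ this gives the stated estimate. Differentiating under the integral and using $|\nabla G(x)|\le C|x|^{-(d-1)}$ yields the gradient bound away from $\supp v$, while standard elliptic regularity (Calder\'on--Zygmund applied to $-\Delta\omega\in L^\infty_c$) gives boundedness of $\nabla\omega$ near the support.

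Finally, I would compute the scattering energy at the minimizer. Testing the equation against $\omega$ (justified since $\omega\in \dot H^1$ and $v(1-\omega)\omega$ has compact support) gives $\int 2|\nabla\omega|^2=\int v(1-\omega)\omega$. Therefore
\[
	b(v)=\int 2|\nabla\omega|^2+\int v(1-\omega)^2=\int v(1-\omega)\bigl[\omega+(1-\omega)\bigr]\d\bx=\int v(1-\omega)\d\bx,
\]
which is the desired identity. The remaining estimate reduces to $\int v-b(v)=\int v\omega$. Using $\omega=(-2\Delta+v)^{-1}v\le (-2\Delta)^{-1}v=\frac{1}{2}G*v$ pointwise (from the resolvent inequality, valid because $v\ge 0$), I obtain
\[
	\int v\omega \le \tfrac{1}{2}\int v(x)G(x-y)v(y)\d x\d y= \tfrac{c_d}{2}\iint \frac{v(x)v(y)}{|x-y|^{d-2}}\d x\d y,
\]
and the Hardy--Littlewood--Sobolev inequality with exponents $p=q=\tfrac{2d}{d+2}$ and singularity $|x-y|^{-(d-2)}$ yields the bound $C_d\norm{v}_{2d/(d+2)}^2$. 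The lower bound $\int v-b(v)\ge 0$ is immediate from $\omega\ge 0$ in the representation $b(v)=\int v(1-\omega)$.

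\textbf{Main obstacle.} The most delicate technical step is controlling the minimizing sequence in the homogeneous space $\dot H^1(\R^d)$ —in particular, the potential term $\int v|1-\varphi|^2$ is not obviously weakly lower semicontinuous unless one uses strong $L^2_{\mathrm{loc}}$ convergence on $\supp v$, which relies on the local compactness provided by $\dot H^1\hookrightarrow L^{2d/(d-2)}$. Once the minimizer is constructed and identified with $(-2\Delta+v)^{-1}v$, the rest is a fairly standard combination of maximum principles, convolution estimates with the Newtonian kernel, and HLS.
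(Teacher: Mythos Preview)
Your approach is essentially the paper's: direct method with strict convexity for existence/uniqueness, Euler--Lagrange for the scattering equation, truncation for $0\le\omega\le1$, Green's-function convolution for the decay bounds, and the resolvent inequality plus HLS for the Born estimate. Two small remarks: (i) the paper streamlines the existence step by first replacing $\varphi_n$ by $\min(\max(\varphi_n,0),1)$, so that $0\le\varphi_n\le1$ from the outset and both terms are handled by weak lower semicontinuity/Fatou without needing Rellich compactness; (ii) your argument for $\omega<1$ has a sign slip---superharmonicity of $\omega$ yields a \emph{minimum} principle, not a maximum principle, so ``$\omega(x_0)=1\Rightarrow\omega\equiv1$'' does not follow from that alone. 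The paper instead applies the Harnack-type positivity result \cite[Theorem 9.10]{LieLos-01} to $f=1-\omega$, which solves $-2\Delta f+vf=0$ with $f\ge0$, $f\not\equiv0$, to conclude $f>0$ everywhere; equivalently you can invoke the strong maximum principle for the operator $-2\Delta+v$ (with $v\ge0$) applied to $\omega-1\le0$.
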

\begin{proof}
	Let $\{\varphi_n\}_{n=1}^\infty\subset \dot{H}^1(\R^d)$ be a minimizing sequence for the functional
	\[
		\mathscr{E}[\varphi] = \int_{\R^d}\left( 2 \left|\nabla\varphi(\bx)\right|^2 + v(\bx)\left|1-\varphi(\bx)\right|^2\right)\d\bx\,.
	\]
	We can assume that $\varphi_n$ is real-valued, since we can ignore the imaginary part of $\varphi_n$ without increasing $ \mathscr{E}[\varphi_n]$, and that $0\le \varphi_n\le 1$, since we can replace $\varphi_n$ by $\min(\max(\varphi_n,0),1)$ without increasing $\mathscr{E}[\varphi_n]$. Given that $ \mathscr{E}[\varphi_n]$ is bounded and $v$ nonnegative, $\varphi_n$ is bounded in $\dot H^1(\R^d)$ and $\sqrt{v}(1-\varphi_n)$ is bounded in $L^2(\R^d)$. By the Banach--Alaoglu theorem, we can assume up to a subsequence that $ \varphi_n \wto \omega$ weakly in ${\dot H}^1(\R^d)$ and $\sqrt{v}(1-\varphi_n) \wto \sqrt{v}(1-\omega)$ weakly in $L^2(\R^d)$. By Fatou's lemma, we conclude that $\omega$ is a minimizer. The minimizer $\omega$ is unique since the functional $\varphi \mapsto \mathscr{E}[\varphi]$ is strictly convex.
	
	The above proof also gives $0\le \omega \le 1$. Moreover, $\mathscr{E}[\omega] \le \mathscr{E}[\omega+ t \varphi]$ for $t\ge 0$ and any function $\varphi \in C_c^\infty(\R^d)$. Hence,
	\[
		0 \le \frac{d}{dt}_{|t=0} \mathscr{E}[\omega+ t \varphi]= 2\int_{\R^d} \big( 2\nabla \omega \cdot \nabla \varphi + v (\omega-1) \varphi \big).
	\]
	Thus\,
	\begin{equation} \label{eq:equation-scattering}
		-2\Delta \omega+ v (\omega-1) =0
	\end{equation}
	in the distributional sense. Since $0\le v (1-\omega)\le v \in L^1(\R^d)\cap L^\infty(\R^d)$, we get $\omega\in H^2(\R^{d})\cap C^1(\R^d)$ by the standard elliptic regularity~\cite[Theorem 10.2]{LieLos-01}. Thus, equation~\eqref{eq:equation-scattering} holds in the pointwise sense (almost everywhere).
	
	The scattering equation can be written as
	\begin{equation} \label{eq:om-con}
		\omega(\bx)= (-2\Delta)^{-1} [v (1-\omega)] (\bx)=\frac{1}{2 |\bS^{d-1}| (d-2)} \int_{\R^d} \frac{v ({\bf y}) (1-\omega({\bf y})) \d {\bf y}}{|\bx-{\bf y}|^{d-2}}
	\end{equation}
	where $|\bS^{d-1}|$ is the surface area of the $(d-1)$ dimensional sphere $\bS^{d-1}$. Since $v (1-\omega)\in L^1(\R^d)$ and it has compact support, we deduce from~\eqref{eq:om-con} that $\omega(\bx)$ decays as $\cO(|\bx|^{2-d})$ as $|\bx|\to \infty$. Since $0\le \omega\le 1$ everywhere, we conclude that
	\[
		\omega(\bx) \le \frac{C_{d,v}}{|\bx|^{d-2}+1} \quad \text{for all } \bx\in \R^d\,.
	\]
	Moreover, note that~\eqref{eq:equation-scattering} is equivalent to $-2\Delta f + v f =0$ pointwise with $f=1-w$. Therefore, from $v \in L^\infty(\mathbb{R}^{d})$, $f\ge 0$ everywhere, and $f$ is not identically zero (since it does not vanish at infinity), we find that $f>0$ everywhere by~\cite[Theorem 9.10]{LieLos-01}. Thus, $\omega<1$ everywhere.

	From~\eqref{eq:om-con}, we also obtain
	\begin{equation} \label{eq:D-om-con}
		\nabla \omega(\bx)= - \frac{1}{2 |\bS^{d-1}|} \int_{\R^d} \frac{v ({\bf y}) (1-\omega({\bf y})) (\bx-{\bf y}) \d {\bf y}}{|\bx-{\bf y}|^{d}} \,.
	\end{equation}
	This implies that $|\nabla \omega(\bx)|$ decays as $\cO(|\bx|^{1-d})$ as $|\bx|\to \infty$. Moreover, since $\omega\in C^1(\R^d)$, we conclude that
	\[
		|\nabla \omega(\bx)| \le \frac{C_{d,v}}{|\bx|^{d-1}+1} \quad \text{for all } \bx\in \R^d\,.
	\]
	
	Finally, since $\omega$ is a minimizer for~\eqref{eq:def-scat-energy} we have
	\[
		b(v)= \int_{\R^d} \left( 2 |\nabla \omega( \bx)|^2 + v(\bx) |1-\omega(\bx)|^2 \right) \d \bx \,.
	\]
	On the other hand, from the scattering equation we have
	\[
		\int_{\R^d} \left( 2|\nabla \omega( \bx)|^2 + v(\bx) (\omega(\bx)-1) \omega(\bx) \right) \d \bx = 0\,.
	\]
	Thus, we can rewrite
	\[
		b(v)= \int_{\R^d} \left( v(\bx) |1-\omega(\bx)|^2 - v(\bx) (\omega(\bx)-1) \omega(\bx) \right) \d \bx = \int_{\R^d} v(\bx) (1-\omega(\bx)) \d \bx\,.
	\]
	Using the scattering equation $\omega = (-2\Delta + v)^{-1}v$ and the Hardy--Littlewood--Sobolev inequality, we can estimate
	\begin{align*}
		0 \le \int_{\R^{d}} v - b(v) = \int_{\R^{d}} v \omega &= \int_{\R^{d}} v \left(- 2\Delta + v\right)^{-1} v \\
			&\le \frac{1}{2} \int_{\R^{d}} v \left(- \Delta \right)^{-1} v\le C_d \norm{v}_{\frac{2d}{d+2}}^2\,. \qedhere
	\end{align*}
\end{proof}

\begin{remark}\label{rem:Born}
	It is well-known (see for example~\cite{BriSol-20,Rougerie-20b}) that, by repeatedly using the scattering equation $\omega = (-2\Delta + v)^{-1}v$ and the resolvent formula, we can write the scattering energy as a Born series expansion
	\[
		b(v)= \int_{\R^d}v - \int_{\R^d} v (-2\Delta + v)^{-1}v = \int_{\R^d}v - \int_{\R^d} v (-2\Delta)^{-1}v + \ldots \,.
	\]
	If $v\ge 0$ and $v\not \equiv 0$, then $b(v)<\int_{\R^d}v$ since $\left(- 2\Delta + v\right)^{-1}>0$ on $L^2(\R^d)$.
\end{remark}

\subsection{Three-body symmetry}
	Let $V:\R^6 \to \R_+$ satisfy the three\nobreakdash-body symmetry~\eqref{eq:sym}. One can check that~\eqref{eq:sym} is equivalent to
\[
	V(x,y) = V(y,x) = V(x-y,-y) \quad \text{for all } (x,y) \in \R^3\times\R^3\,.
\]
Put differently, $V= V( g \cdot)$ when $g: \R^3\times \R^3 \to \R^3 \times \R^3$ is equal to
\[
	S :=
	\begin{pmatrix}
		0 & 1 \\
		1 & 0
	\end{pmatrix}
	\quad \text{or} \quad
	A :=
	\begin{pmatrix}
		1 & -1 \\
		0 & -1
	\end{pmatrix}.
\]
Note that both $S$ and $A$ are symmetries ($S^2 = A^2 = 1$) and that $SAS = ASA$. Thus, the group generated by $A$ and $S$ is finite and is given by
\begin{equation} \label{eq:G}
	\cG = \left\{ I, S, A, AS, SA, ASA\right\}.
\end{equation}
In summary, the symmetry~\eqref{eq:sym} is equivalent to the fact that $V=V(g \cdot)$ for all $g\in \cG$ ---namely, $V$ is invariant under the action of $\cG$.

	Next, let us consider the scattering problem associated to the three\nobreakdash-body interaction potential $V(x-y,x-z)$. Consider the operator
\[
	-\Delta_{x_1} - \Delta_{x_2} - \Delta_{x_3} + V(x_1-x_2,x_1-x_3) \quad \text{on }L^2\!\left( (\R^3)^3 \right).
\]
After removing the center of mass, we are left with the two\nobreakdash-body operator
\[
	-2\Delta_{\cM} + V(x,y)\quad \text{on }L^2\!\left( (\R^3)^2 \right),
\]
where $-\Delta_{\cM}=|\cM \nabla_{\R^6}|^2$ with the matrix $\cM: \R^3\times \R^3 \to \R^3\times \R^3$ given by
\begin{equation}\label{eq:M}
	\cM:= \left( \frac{1}{2}
	\begin{pmatrix}
		2 & 1 \\
		1 & 2
	\end{pmatrix}
	\right)^{1/2} = \frac{1}{2\sqrt{2}}
	\begin{pmatrix}
		\sqrt{3}+1 & \sqrt{3}-1 \\
		\sqrt{3}-1 & \sqrt{3}+1
	\end{pmatrix}.
\end{equation}
			
	We define the {\em modified scattering energy}
\begin{equation} \label{eq:def-scat-energy-M}
	b_{\cM}(V) : = \inf_{\varphi \in \dot H^1(\R^d)} \int_{\R^d} \left( 2| \cM \nabla \varphi( \bx)|^2 + V(\bx) |1-\varphi(\bx)|^2 \right) \d \bx\,.
\end{equation}
As we will see, by a change of variables, the results from the previous section on the standard scattering energy $b(V)$, defined in~\eqref{eq:def-scat-energy}, can be used to understand $b_\cM(V)$, defined in~\eqref{eq:def-scat-energy-M}. To be precise, from Theorem~\ref{thm:scattering} we have the following.
\begin{theorem}[Modified scattering solution] \label{thm:scattering-M}
	Let $0\le V \in L^\infty(\R^6)$ be compactly supported and satisfy the symmetry~\eqref{eq:sym}. Then, the variational problem~\eqref{eq:def-scat-energy-M} has a unique minimizer $\omega=(-2\Delta_\cM +V)^{-1}V$. The function $\omega$ satisfies the symmetry~\eqref{eq:sym}, it solves, for almost every $\bx\in \R^6$, the modified scattering equation
	\[
		-2\Delta_\cM \omega (\bx) + V(\bx) (\omega(\bx)-1) = 0
	\]
	---which is equivalent to solve, for almost every $x,y,z\in \R^3$, the equation
	\[
		(-\Delta_x - \Delta_y - \Delta_z) \omega(x-y,x-z) + (V (\omega-1)) (x-y,x-z) = 0\text{---},
	\]
	and it satisfies, for all $\bx\in \R^6$, the pointwise estimates
	\[
		0\le \omega(\bx)<1\,, \quad \omega(\bx) \le \frac{C_{V}}{|\bx|^{4}+1}\,, \quad \text{and} \quad |\nabla \omega(\bx)| \leq \frac{C_{V}}{|\bx|^{5}+1}\,.
	\]
	The modified scattering energy satisfies
	\begin{equation}\label{eq:born-M}
		b_\cM(V)= \int_{\R^6} V(\bx) (1-\omega(\bx)) \d \bx \quad \text{and} \quad 0 \le \int_{\R^{6}} V - b_\cM( V) \leq C \norm{V}_{\frac{3}{2}}^2\,.
	\end{equation}
	Moreover, $b_\cM(V)=b(V( \cM\cdot )) \det \cM$ for $b$ defined in~\eqref{eq:def-scat-energy}. Here, $\det \cM = 3\sqrt 3/8$.
\end{theorem}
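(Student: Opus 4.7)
My plan is to reduce Theorem~\ref{thm:scattering-M} entirely to Theorem~\ref{thm:scattering} in dimension $d=6$ via the linear change of variables $\bx = \cM \by$, which is invertible since $\cM$ is (symmetric and) positive definite. Writing $\tilde\varphi(\by) := \varphi(\cM\by)$, the identities $\nabla_\by \tilde\varphi(\by) = \cM (\nabla\varphi)(\cM\by)$ (using $\cM = \cM^T$) and $\d\bx = (\det \cM)\,\d\by$ transform the functional in~\eqref{eq:def-scat-energy-M} into
\[
	(\det\cM) \int_{\R^6} \bigl( 2 |\nabla \tilde \varphi(\by)|^2 + V(\cM\by) |1-\tilde\varphi(\by)|^2 \bigr) \d\by\,,
\]
and taking the infimum gives $b_\cM(V) = (\det\cM)\, b(V(\cM\cdot))$. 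The transformed potential $V(\cM\cdot)$ is still nonnegative, bounded, and compactly supported on $\R^6$, so I would apply Theorem~\ref{thm:scattering} with $d=6$ (so that $d-2=4$, $d-1=5$, and $2d/(d+2)=3/2$) to obtain a unique minimizer $\tilde\omega = (-2\Delta + V(\cM\cdot))^{-1} V(\cM\cdot)$ together with all the stated pointwise bounds and the identities~\eqref{eq:bounds_on_w_general_2}.

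Setting $\omega(\bx) := \tilde\omega(\cM^{-1}\bx)$ would then transport every conclusion to the modified problem. The one computation to record is the intertwining identity $\Delta_\cM(\tilde\omega\circ\cM^{-1}) = (\Delta\tilde\omega)\circ\cM^{-1}$, which follows from $\cM=\cM^T$ and the chain rule; combining it with $V(\cM\cdot\cM^{-1}\bx) = V(\bx)$ turns the scattering equation for $\tilde\omega$ into the one stated for $\omega$. The equivalent formulation in the three-particle variables $(x,y,z)$ is then just the identity $(-\Delta_x-\Delta_y-\Delta_z) F(x-y,x-z) = -2(\Delta_\cM F)(x-y,x-z)$ already established in~\eqref{eq:remove-center}. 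The pointwise decay estimates transfer because $\cM^{-1}$ is a bi-Lipschitz isomorphism of $\R^6$, and the identities in~\eqref{eq:born-M} follow from~\eqref{eq:bounds_on_w_general_2} together with $\|V(\cM\cdot)\|_{L^{3/2}}^{3/2} = (\det\cM)^{-1}\|V\|_{L^{3/2}}^{3/2}$ and the analogous change of variables in $\int V(1-\omega)$.

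The one step where genuine care is needed is the three-body symmetry of $\omega$, and this is where I expect the main (albeit minor) obstacle. The cleanest argument I would use is to avoid computing with $\cM$ directly and instead go back to the three-particle picture underlying~\eqref{eq:remove-center}: the unreduced operator $-\Delta_{x_1}-\Delta_{x_2}-\Delta_{x_3} + V(x_1-x_2,x_1-x_3)$ obviously commutes with permutations of $(x_1,x_2,x_3)$ since the kinetic part does and $V$ satisfies~\eqref{eq:sym}. After removing the center of mass, these permutations descend to the action of the group $\cG$ from~\eqref{eq:G} on the relative coordinates, so the reduced operator $-2\Delta_\cM + V$ commutes with $\cG$; in particular the functional~\eqref{eq:def-scat-energy-M} is invariant under $\varphi \mapsto \varphi\circ g$ for every $g\in\cG$. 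Uniqueness of the minimizer then forces $\omega\circ g = \omega$, which is~\eqref{eq:sym}. A more computational alternative would be to check $g\cM^2 g^T = \cM^2$ for each generator $g \in \{S,A\}$ by direct matrix multiplication, but the three-particle symmetry argument bypasses this bookkeeping entirely.
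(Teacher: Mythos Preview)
Your proposal is correct and follows essentially the same route as the paper: both reduce everything to Theorem~\ref{thm:scattering} via the change of variables $\bx = \cM\by$, transfer the minimizer, equation, and pointwise bounds back through $\omega = \tilde\omega\circ\cM^{-1}$, and invoke~\eqref{eq:remove-center} for the three-particle form. The only difference is in the symmetry step: the paper verifies the matrix identity $g\cM^2 g^\dagger = \cM^2$ (your ``computational alternative'') and uses it to show $-\Delta_\cM$ is $\cG$-invariant, whereas you prefer the conceptual three-particle argument---both are valid, and your version has the slight advantage of not requiring any matrix bookkeeping.
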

\begin{proof}
	Note that for every $\varphi\in \dot {H}^1(\R^6)$ we have
	\begin{align} \label{eq:M-change-var}
		\int_{\R^6} 2| \cM \nabla \varphi|^2 + V |1-\varphi|^2 &=\int_{\R^6} \left( 2| \cM (\nabla \varphi) (\cM \cdot)|^2 + V(\cM \cdot) |1-\varphi(\cM \cdot)|^2 \right) \det \cM \nn \\
			&= \det \cM \int_{\R^6} 2| \nabla (\varphi(\cM \cdot))|^2 + V(\cM \cdot) |1-\varphi(\cM \cdot)|^2\,.
	\end{align}
	Moreover, it is obvious that $\varphi \in \dot {H}^1 (\R^6)$ if and only if $\varphi (\cM \cdot) \in \dot {H}^1 (\R^6)$. Therefore,
	\begin{align*}
		b_{\cM}(V) &= \inf_{\varphi \in \dot H^1(\R^d)} \int_{\R^d} 2|\nabla_{\cM} \varphi|^2 + V |1-\varphi|^2 \\
			&= \det \cM \inf_{\varphi \in \dot H^1(\R^d)} \int_{\R^6} 2|\nabla (\varphi(\cM \cdot))|^2 + V(\cM \cdot) |1-\varphi(\cM \cdot)|^2 \\
			&= \det \cM \inf_{\varphi \in \dot H^1(\R^d)} \int_{\R^6} 2|\nabla \varphi|^2 + V(\cM \cdot) |1-\varphi|^2 = b(V( \cM\cdot )) \det \cM \,.
	\end{align*}
	
	Thanks to~\eqref{eq:M-change-var}, it is straightforward that the minimizer of $b_{\cM}(V)$ in~\eqref{eq:def-scat-energy-M} is $\omega= \widetilde \omega (\cM^{-1} \cdot)$ with $\widetilde \omega$ the unique minimizer of $b(V(\cM \cdot))$ defined in~\eqref{eq:def-scat-energy}. Thus, most of the statements in Theorem~\ref{thm:scattering-M} follow from Theorem~\ref{thm:scattering}. From the equation
	\[
		-2\Delta_\cM \omega (\bx) + V(\bx) (\omega(\bx)-1) =0 \quad \text{for a.e.\ } \bx\in \R^6\,,
	\]
	we can also deduce that
	\[
		(-\Delta_x - \Delta_y - \Delta_z) \omega(x-y,x-z) + (V (\omega-1)) (x-y,x-z) =0 \quad \text{for a.e.\ } x,y,z\in \R^3\,,
	\]
	by removing the center of mass similarly to~\eqref{eq:remove-center}.
	
	Finally, let us prove that $\omega$ is invariant under the actions of $\cG$. Since $V$ is invariant under the actions of $\cG$ and $\omega=(-2\Delta_\cM +V)^{-1}V$, it remains to check that $-\Delta_\cM$ is also invariant under the actions of $\cG$. For every $\varphi\in C_c^\infty(\R^6)$ and $g\in \cG$, using
	\begin{equation} \label{eq:M-G}
		g \cM^2 g^\dagger= \cM^2
	\end{equation}
	and $|\det g|=1$, we have
	\begin{align*}
		\pscal{ \varphi (g \cdot), -\Delta_\cM \varphi (g \cdot) }_{L^2(\R^6)} &= \int_{\R^6} | \cM \nabla ( \varphi (g\bx) ) |^2 \d \bx = \int_{\R^6} |\cM g^\dagger ( \nabla \varphi) (g\bx)|^2 \d \bx \\
			&=\int_{\R^6} |\cM ( \nabla \varphi) (\bx)|^2 |\det g| \d \bx = \int_{\R^6} |\cM ( \nabla \varphi) (\bx)|^2 \d \bx \\
			&= \pscal{ \varphi, -\Delta_\cM \varphi }_{L^2(\R^6)}\,.
	\end{align*}
	Thus, $\omega=\omega(g \cdot )$ for all $g\in\cG$ and it therefore satisfies the symmetry~\eqref{eq:sym}.
\end{proof}

\section{Dyson lemmas} \label{sec:Dyson}

\subsection{Dyson lemma for non-radial potentials}

\begin{theorem}[Dyson lemma for non-radial potentials]\label{dyson_lemma_Rd_nonradial}
	Let $d\ge 3$, $0\le v \in L^\infty(\R^d)$ with $\supp\, v \subset \{ |\bx| \le R_0\}$, and $0\leq U \in C(\R^d)$ be radial with $\int_{\R^d} U = 1$ and $\supp U \subset \{ R_1\le |\bx| \le R_2\}$. Then, we have the operator inequality
	\[
		- 2 \nabla_{\bx} \1_{\{|\bx| \le R_2\}} \nabla_{\bx}+ v(\bx) \geq b(v) \left( 1- \frac{C_d R_0}{R_1}\right) U (\bx)\quad \text{on }L^2(\R^d)
	\]
	with a constant $C_d>0$ depending only on the dimension $d$.
\end{theorem}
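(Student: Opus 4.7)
The plan is to exploit the variational characterization of $b(v)$ from Theorem~\ref{thm:scattering}, testing against trial functions parametrized by a reference point $\bx_0$. For each $\bx_0\in\R^d$ with $|\bx_0|\ge R_1$ I would pick a smooth radial cutoff $\eta_{\bx_0}:\R^d\to[0,1]$ equal to $1$ on $B_{R_0}\supset\supp v$, vanishing outside $B_{|\bx_0|}$, and satisfying $|\nabla\eta_{\bx_0}|\lesssim(|\bx_0|-R_0)^{-1}\lesssim R_1^{-1}$. The competitor is
\[
\tilde\psi_{\bx_0}(\by):=\varphi(\bx_0)+\eta_{\bx_0}(\by)\bigl(\varphi(\by)-\varphi(\bx_0)\bigr),
\]
which equals $\varphi$ on $\supp v$ and equals the constant $\varphi(\bx_0)$ outside $B_{|\bx_0|}$. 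When $\varphi(\bx_0)\ne 0$, the rescaled function $\tilde\psi_{\bx_0}/\varphi(\bx_0)$ lies in $\dot H^1(\R^d)$ and tends to $1$ at infinity, hence Theorem~\ref{thm:scattering} yields
\[
\int_{\R^d}2|\nabla\tilde\psi_{\bx_0}|^2+v|\tilde\psi_{\bx_0}|^2\ge b(v)\,|\varphi(\bx_0)|^2;
\]
the case $\varphi(\bx_0)=0$ is immediate since both sides vanish.

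The next step is to multiply by $U(\bx_0)$ and integrate over $\bx_0\in\R^d$. Three observations make the resulting double integral tractable: the pointwise identity $v\tilde\psi_{\bx_0}=v\varphi$ (forced by $\eta_{\bx_0}\equiv 1$ on $\supp v$); the normalization $\int U=1$; and the crucial support property $\eta_{\bx_0}(\by)=0$ whenever $|\by|>|\bx_0|$, which implies $\int U(\bx_0)\eta_{\bx_0}^2(\by)\,d\bx_0\le\1_{\{|\by|\le R_2\}}(\by)$. Combined with the elementary bound
\[
|\nabla\tilde\psi_{\bx_0}|^2\le(1+\delta)\eta_{\bx_0}^2|\nabla\varphi|^2+(1+\delta^{-1})|\nabla\eta_{\bx_0}|^2|\varphi-\varphi(\bx_0)|^2,
\]
Fubini then produces, for any $\delta>0$,
\[
b(v)\int U|\varphi|^2\le(1+\delta)\int_{B_{R_2}}2|\nabla\varphi|^2+\int v|\varphi|^2+(1+\delta^{-1})\,\mathrm{Err},
\]
with $\mathrm{Err}:=\iint U(\bx_0)|\nabla\eta_{\bx_0}(\by)|^2|\varphi(\by)-\varphi(\bx_0)|^2\,d\by\,d\bx_0$.

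The main obstacle is the estimate $\mathrm{Err}\le C_d(R_0/R_1)\int_{B_{R_2}}2|\nabla\varphi|^2$, which delivers the sharp prefactor $(1-C_dR_0/R_1)$ after choosing $\delta\sim\sqrt{R_0/R_1}$ and using $b(v)\le\int v$ from Remark~\ref{rem:Born} to absorb residual lower-order terms. The difficulty is that a naive Poincar\'e estimate across the transition shell only gives an $O(1)$ prefactor, which is insufficient; to extract the decisive factor $R_0/R_1$, I would exploit the pointwise decay $|1-f(\bx)|\le C|\bx|^{2-d}\le CR_0/|\bx|$ of the scattering solution $f=1-\omega$ (Theorem~\ref{thm:scattering}, using $d\ge 3$), combined with a ground-state substitution $\varphi=f\,\widetilde\varphi$ which allows the smallness of $1-f$ in the transition zone to propagate into $\mathrm{Err}$. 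The technical challenge is managing the interplay between the non-radial potential $v$, the radial structure of $U$ and $\eta_{\bx_0}$, and the geometry of $\supp\nabla\eta_{\bx_0}$; this is where I expect the bulk of the technical work to lie. A standard density argument reduces to continuous $\varphi$ so that pointwise evaluation of $\varphi(\bx_0)$ is well defined.
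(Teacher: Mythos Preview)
Your approach has a genuine gap in the crucial error estimate $\mathrm{Err}\le C_d(R_0/R_1)\int_{B_{R_2}}2|\nabla\varphi|^2$, and the proposed fix via the ground-state substitution $\varphi=f\widetilde\varphi$ cannot rescue it. The obstruction is angular oscillation of $\varphi$: your trial-function construction freezes the \emph{pointwise} value $\varphi(\bx_0)$, whereas the operator inequality only controls sphere-averages of $|\varphi|^2$. Concretely, take $d=3$, $R_2\sim R_1$, and $\varphi(\by)=Y_1(\by/|\by|)$ (a first spherical harmonic) on $\{R_0\le|\by|\le R_2\}$, smoothly extended inside. Then $\int_{B_{R_2}}|\nabla\varphi|^2\sim R_2$, while for each $\bx_0$ with $|\bx_0|\in[R_1,R_2]$ one has $\int_{\{R_0\le|\by|\le|\bx_0|\}}|\varphi(\by)-\varphi(\bx_0)|^2\,d\by\sim|\bx_0|^3$ (the spherical average of $|Y_1(\theta)-Y_1(\theta_0)|^2$ is bounded below uniformly in $\theta_0$). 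With $|\nabla\eta_{\bx_0}|^2\sim|\bx_0|^{-2}$ this gives $\mathrm{Err}\sim R_1$, but your claimed bound reads $\mathrm{Err}\lesssim(R_0/R_1)\cdot R_2\sim R_0$, which fails by exactly the factor $R_1/R_0$. Writing $\varphi=f\widetilde\varphi$ does not help: in $\varphi(\by)-\varphi(\bx_0)=f(\by)(\widetilde\varphi(\by)-\widetilde\varphi(\bx_0))+(f(\by)-f(\bx_0))\widetilde\varphi(\bx_0)$ the smallness of $1-f$ only tames the second piece, while the first still carries the full angular oscillation of $\widetilde\varphi$.

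The paper's proof sidesteps this entirely. It performs the ground-state substitution $\eta=\varphi/f$ \emph{once}, then integrates by parts on each ball $B_R$ to obtain the exact identity
\[
\int_{B_R}2|\nabla\varphi|^2+v|\varphi|^2=\int_{B_R}2f^2|\nabla\eta|^2+\int_{\partial B_R}2|\varphi|^2\,\frac{\nabla f\cdot\vec n}{f}\,.
\]
The bulk term is dropped (it is nonnegative), and the boundary term already involves the full spherical integral $\int_{\partial B_R}|\varphi|^2$ rather than a pointwise value. The factor $(1-C_dR_0/R_1)$ then comes not from a Poincar\'e-type estimate but from a Taylor expansion of the Green's function kernel in the integral representation $\nabla f(\bx)\cdot\vec n_\bx=\frac{1}{2|\bS^{d-1}|}\int v(\mathbf y)f(\mathbf y)\frac{(\bx-\mathbf y)\cdot\vec n_\bx}{|\bx-\mathbf y|^d}\,d\mathbf y$, using only that $|\mathbf y|\le R_0$ and $|\bx|\ge R_1$. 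Finally one averages over $R\in[R_1,R_2]$ against $|\bS^{d-1}|R^{d-1}U(R)$. This is where the radiality of $U$ enters: it turns the family of spherical boundary terms into $\int U|\varphi|^2$.
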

We will later use Theorem~\ref{dyson_lemma_Rd_nonradial} for $d=6$. Note that we need the characteristic function $\1_{\{|\bx| \le R_2\}}$ since we will apply the Dyson lemma to a specific region of the configuration space where three particles are in the same neighborhood.
\begin{proof}
	If $R_1<2R_0$, then we can take $C_d=2$ and the desired inequality holds trivially since the right-hand side is negative. Thus, it remains to consider the case $R_1\ge 2R_0$.
	
	Let $\omega$ be the scattering solution to $b(v)$ as in Theorem~\ref{thm:scattering}. Then, the function $f:=1-\omega$ satisfies
	\[
		0< f \le 1\,, \quad -2\Delta f + v f =0\,, \quad \text{and} \quad \lim_{|\bx|\to \infty} f(\bx) =1\,.
	\]
	We now take an arbitrary function $\varphi \in C_c^\infty(\R^d)$ and denote $ \eta = \varphi/f$. For every $R \in [R_1,R_2]$, integrating by parts and using the scattering equation, we have
	\begin{multline*}
		\int_{B(0,R)} 2 |\nabla \varphi|^2 + v |\varphi|^2 = \int_{B(0,R)} 2 f^2 |\nabla \eta |^2 + 2|\eta|^2 |\nabla f |^2 + 2 f (\nabla f) \nabla (|\eta|^2) + v f^2 |\eta|^2 \\
		= \int_{B(0,R)} 2 f^2 |\nabla \eta |^2 + \int_{B(0,R)} |\eta|^2 f \left(-2\Delta f + vf \right) + \int_{\partial B(0,R)} 2 |\eta|^2 f (\nabla f) \cdot \vec n \\
		= \int_{B(0,R)} 2 f^2 |\nabla \eta |^2 + \int_{\partial B(0,R)} 2 |\varphi|^2 \frac{(\nabla f) \cdot \vec n}{f}\,,
	\end{multline*}
	where $B(0,R) = \{|x|\leq R\}$ and $\vec n_{\bx}= \bx/|\bx|$ is the outward unit normal vector on the sphere $\partial B(0,R)$. Therefore, for every $R \in [R_1,R_2]$, we can bound
	\begin{equation} \label{eq:Dyson-1}
		\int_{B(0,R_2)} 2|\nabla \varphi|^2 + v |\varphi|^2 \ge \int_{B(0,R)} 2 |\nabla \varphi|^2 + v |\varphi|^2 \ge 2 \int_{\partial B(0,R)} |\varphi|^2 \frac{(\nabla f) \cdot \vec n}{f}\,.
	\end{equation}
	
	Let us now compute $((\nabla f) \cdot \vec n)/f$ on the sphere $\partial B(0,R)$.
	Recall from~\eqref{eq:D-om-con} that
	\[
		\nabla f (\bx) \cdot \vec n_{\bx}= \frac{1}{2|\bS^{d-1}|} \int_{\R^d} v ({\bf y}) f({\bf y}) \frac{ (\bx-{\bf y}) }{|\bx-{\bf y}|^{d}} \cdot \vec n_{\bx} \d {\bf y}\,.
	\]
	For every $|\bx| \ge 2R_0$ and $|{\bf y}|\le R_0$, a Taylor expansion gives
	\[
		\left| \frac{ \bx-{\bf y}}{|\bx-{\bf y}|^{d}} - \frac{ \bx}{|\bx|^{d}} \right| \le C_d \frac{R_0}{|\bx|^{d}}\,, \quad \text{hence} \quad \frac{ \bx-{\bf y}}{|\bx-{\bf y}|^{d}} \cdot \vec n_{\bx} \ge \frac{1}{|\bx|^{d-1}} \left( 1 - C_d \frac{R_0}{|\bx|} \right),
	\]
	where the triangle inequality was used to obtain the second estimate.
	Since $\supp v \subset B(0,R_0)$ and $b(v)= \int_{\R^d} vf$, we have for all $|\bx| \ge 2R_0$ that
	\begin{equation} \label{eq:Dyson-1b}
		\nabla f(\bx) \cdot \vec n_{\bx} \ge \frac{1}{2 |\bS^{d-1}|} \int_{\R^d} \frac{v({\bf y}) f({\bf y})}{|\bx|^{d-1}} \left( 1 - C_d \frac{R_0}{|\bx|} \right) \d {\bf y} = \frac{b(v)}{ 2 |\bS^{d-1}| |\bx|^{d-1}} \left( 1- \frac{C_d R_0}{|\bx|}\right).
	\end{equation}
	Consequently, on one hand, for every $R \in [R_1,R_2]$ such that $1- C_d R_0/R \ge 0$, inserting~\eqref{eq:Dyson-1b} in~\eqref{eq:Dyson-1} and using $1/f \ge 1$, we get
	\begin{equation} \label{eq:Dyson-1c}
		\int_{B(0,R_2)} 2 |\nabla \varphi|^2 + v |\varphi|^2 \ge \frac{b(v)}{ |\bS^{d-1}| R^{d-1}} \left( 1- \frac{C_d R_0}{R}\right) \int_{\partial B(0,R)} |\varphi|^2\,.
	\end{equation}
	On the other hand, if $1- C_d R_0/R < 0$, then~\eqref{eq:Dyson-1c} holds trivially since the left-hand side is always nonnegative. Thus,~\eqref{eq:Dyson-1c} holds for all $R \in [R_1,R_2]$. Integrating both sides of~\eqref{eq:Dyson-1c} against $ |\bS^{d-1}| R^{d-1} U(R)$ with $R\in [R_1,R_2]$ and using $\int_{\R^d} U=1$ for the left-hand side, we conclude that
	\[
		\int_{B(0,R_2)} 2 |\nabla \varphi|^2 + v |\varphi|^2 \ge b(v) \left( 1- \frac{C_d R_0}{R_1}\right) \int_{\R^d} U |\varphi|^2\,.
	\]
	Since the latter bound holds for all $\varphi \in C_c^\infty(\R^d)$, we obtain the desired operator inequality.
\end{proof}

\subsection{Dyson lemma with the three-body symmetry}
	We have the following variant of Theorem~\ref{dyson_lemma_Rd_nonradial} for interaction potentials with the three\nobreakdash-body symmetry.
\begin{theorem}[Dyson lemma with modified scattering energy]\label{dyson_lemma_Rd_nonradial-M}
	Let $0\le V \in L^\infty(\R^6)$ be supported in $B(0,R_0)$ and satisfy the symmetry~\eqref{eq:sym}. Let $0\leq \widetilde U \in C(\R^6)$ be radial with $\int_{\R^6} \widetilde U = 1$ and $\supp \widetilde U \subset \{ R_1\le |\bx| \le R_2\}$. Define
	\begin{equation} \label{eq:Ug}
		{U}:= \frac{1}{6} \sum_{g\in \cG} \widetilde U( \cM^{-1} g \cdot ) \det (\cM^{-1})\,,
	\end{equation}
	where $\cG$ and $\cM$ are given in~\eqref{eq:G} and~\eqref{eq:M}. Then, $0\le U \in C(\R^6)$ satisfies the symmetry~\eqref{eq:sym}, $\int_{\R^6} U =1$, and $\supp U \subset \{ \sqrt{2/3}R_1\le |\bx| \le \sqrt{2}R_2\}$. Moreover, we have the operator inequality
	\[
		-2 \cM \nabla_{\bf x} \1_{\{|{\bf x}| \le \sqrt{2}R_2\}} \cM \nabla_{\bf x} + V(\bx) \geq b_{\cM}(V) \left( 1- \frac{C R_0}{R_1}\right) {U} (\bx)\quad \text{on }L^2\!\left( \R^6 \right).
	\]
	Here, $C>0$ is a universal constant (independent of $V,U,R_0,R_1,R_2$).
\end{theorem}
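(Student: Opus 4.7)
The strategy is to reduce Theorem~\ref{dyson_lemma_Rd_nonradial-M} to the previous Theorem~\ref{dyson_lemma_Rd_nonradial} via a linear change of variables dictated by the identity $b_\cM(V)=b(V(\cM\cdot))\det\cM$ from Theorem~\ref{thm:scattering-M}. I would introduce the pulled-back potential $\widetilde V := V(\cM\cdot)$ on $\R^6$. Because $\cM$ has eigenvalues $\sqrt{3/2}$ and $1/\sqrt{2}$, its inverse has operator norm $\sqrt{2}$, so $\supp \widetilde V \subset \cM^{-1} B(0,R_0)\subset B(0,\sqrt{2}\,R_0)$. I then apply Theorem~\ref{dyson_lemma_Rd_nonradial} with $d=6$, $v=\widetilde V$, and the radial profile $U=\widetilde U$, producing
\[
-2\nabla_{\bf y}\1_{\{|{\bf y}|\le R_2\}}\nabla_{\bf y}+\widetilde V({\bf y})\ \ge\ b(\widetilde V)\Bigl(1-\tfrac{C_6\sqrt{2}\,R_0}{R_1}\Bigr)\,\widetilde U({\bf y})\qquad\text{on }L^2(\R^6).
\]

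The next step is to transport this inequality back to the ${\bf x}$-variable. For $\psi\in C_c^\infty(\R^6)$, I would test the inequality against $\varphi({\bf y}):=\psi(\cM {\bf y})$. Using $\nabla_{\bf y}\varphi({\bf y})=\cM\nabla\psi(\cM {\bf y})$ and the substitution ${\bf x}=\cM {\bf y}$ (Jacobian $\det\cM^{-1}$), the kinetic term becomes $\int 2|\cM\nabla\psi({\bf x})|^2\,\1_{\{|\cM^{-1}{\bf x}|\le R_2\}}\,\d{\bf x}$ and the potential term $\int V({\bf x})|\psi|^2\,\d{\bf x}$, after the common Jacobian $\det\cM^{-1}$ is cleared from both sides. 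On the right-hand side, $b(\widetilde V)\det\cM=b_\cM(V)$ by Theorem~\ref{thm:scattering-M}, and the profile $\widetilde U(\cM^{-1}{\bf x})\det\cM^{-1}$ coincides precisely with the $U$ defined in~\eqref{eq:Ug}: indeed, because $g\cM^2 g^\dagger=\cM^2$ for every $g\in\cG$ (proved in Theorem~\ref{thm:scattering-M}), one has $|\cM^{-1}g{\bf x}|=|\cM^{-1}{\bf x}|$, hence the radiality of $\widetilde U$ makes every summand in~\eqref{eq:Ug} equal, and the averaging over $\cG$ collapses to $\widetilde U(\cM^{-1}\cdot)\det\cM^{-1}$.

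To conclude, I would enlarge the kinetic cutoff from $\1_{\{|\cM^{-1}{\bf x}|\le R_2\}}$ to $\1_{\{|{\bf x}|\le\sqrt{2}\,R_2\}}$: the former forces $|{\bf x}|\le\|\cM\|R_2=\sqrt{3/2}\,R_2<\sqrt{2}\,R_2$, so the first indicator is pointwise below the second; since the kinetic integrand is nonnegative, replacing the smaller cutoff by the larger one increases the left-hand side and preserves the inequality. This yields the operator bound stated in the theorem, with the constant $C=C_6\sqrt{2}$ from Theorem~\ref{dyson_lemma_Rd_nonradial}. The announced properties of $U$ then follow by inspection: the three-body symmetry~\eqref{eq:sym} holds because $U$ is explicitly averaged over $\cG$; the mass $\int_{\R^6}U=1$ reduces to $\int_{\R^6}\widetilde U=1$ via ${\bf x}=\cM {\bf y}$; and the support inclusion is a direct consequence of the eigenvalue bounds on $\cM^{\pm 1}$.

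\textbf{Main obstacle.} The only genuine obstacle is the bookkeeping of the change of variables and making sure every constant (the factor $\det\cM$, the Jacobian, the cutoff radii, and the enlargement from $R_0$ to $\sqrt{2}\,R_0$ in the scattering potential's support) is correctly propagated through the identification $b_\cM(V)=b(V(\cM\cdot))\det\cM$. The conceptual content is entirely in Theorem~\ref{dyson_lemma_Rd_nonradial}; once the group-invariance identity $g^\top\cM^{-2}g=\cM^{-2}$ is invoked to collapse the symmetrization and to propagate symmetry from $V$ to $\widetilde V$ and back to $U$, the argument is a careful but routine push-forward of the inequality.
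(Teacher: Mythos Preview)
Your proposal is correct and follows essentially the same route as the paper: apply Theorem~\ref{dyson_lemma_Rd_nonradial} to $V(\cM\cdot)$, then undo the change of variables and enlarge the cutoff using the eigenvalue bounds on $\cM$. Your observation that the $\cG$-average in~\eqref{eq:Ug} collapses (because $\widetilde U$ is radial and $|\cM^{-1}g\,\cdot|=|\cM^{-1}\cdot|$) is a small but genuine simplification over the paper, which instead performs the change of variables ${\bf x}=\cM^{-1}g\,{\bf y}$ for each $g\in\cG$ separately and then averages; both arrive at the same place. Incidentally, your support bound $\supp V(\cM\cdot)\subset B(0,\sqrt{2}\,R_0)$ is the sharp one (the paper writes $\sqrt{3/2}\,R_0$, which is a harmless slip since only a universal constant is at stake).
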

\begin{proof}
	From the definition~\eqref{eq:Ug}, it is clear that $ U ({\bf x}) = U(g {\bf x})$ for all $g\in \cG$. Thus, $U$ satisfies the symmetry~\eqref{eq:sym}. On the other hand, it is straightforward to diagonalize $\cM$ and find that  its spectrum is equal to $\{\sqrt{1/2},  \sqrt{3/2}\}$, which in particular implies that $\sqrt{1/2} \le \cM \le \sqrt{3/2}$. Combining these bounds with~\eqref{eq:M-G}, we find that 
		\[
		| \cM^{-1} g {\bf x}| = | \cM^{-1} {\bf x}| \in \left[\sqrt{2/3} |{\bf x}|, \sqrt{2} |{\bf x}|\right] \quad \text{for all } {\bf x} \in \R^6\,.
	\]
	Therefore, from the assumption $\supp \widetilde U \subset \{ R_1\le |\bx| \le R_2\}$, we deduce that
	\[
		\supp U \subset \{ \sqrt{2/3} R_1 \le |\bx| \le \sqrt{2} R_2\}\,.
	\]
	Moreover, $\int_{\R^6} U = \int_{\R^6} \widetilde U = 1$ by change of variables and using $|\det g|=1$ for $g\in \cG$.
	
	Next, we prove the operator inequality. We start by applying Theorem~\ref{dyson_lemma_Rd_nonradial} to $V(\cM \cdot)$. Note that $\supp V(\cM \cdot)\subset B(0,\sqrt{3/2}R_0)$, since $\supp V\subset B(0,R_0)$ and $\sqrt{1/2} \le \cM \le \sqrt{3/2}$. Hence, Theorem~\ref{dyson_lemma_Rd_nonradial} gives
	\[
		-2\nabla_{\bx} \1_{\{|\bx| \le R_2\}} \nabla_{\bx} + V(\cM \bx) \ge \left( 1- \frac{C R_0}{R_1}\right) b(V(\cM \cdot)) \widetilde U(\bx)\quad \text{on }L^2\!\left( \R^6 \right).
	\]
	Since $V=V(g\cdot)$ for $g\in \cG$, the change of variable $\bx =\cM^{-1} g {\bf y}$ gives, on $L^2(\R^6)$,
	\[
		-2 \cM \nabla_{\bf y} \1_{\{|\cM^{-1}{\bf y}| \le R_2\}} \cM \nabla_{\bf y} + V({\bf y}) \ge \left( 1- \frac{C R_0}{R_1}\right) b(V(\cM \cdot)) \widetilde U(\cM^{-1} g {\bf y})\,.
	\]
	On the left-hand side, we use $\1_{\{|\cM^{-1}{\bf y}| \le R_2\}} \le \1_{\{|{\bf y}| \le \sqrt{2}R_2\}}$ since $|\cM^{-1}{\bf y}| \ge \sqrt{1/2} |\bf y|$. On the right-hand side, we average over $g\in \cG$ and use $b(V(\cM \cdot))= b_{\cM}(V) \det \cM^{-1}$ (see Theorem~\ref{thm:scattering-M}). The proof is therefore complete as it yields
	\[
		-2 \cM \nabla_{\bf y} \1_{\{|{\bf y}| \le \sqrt{2} R_2\}} \cM \nabla_{\bf y} + V({\bf y}) \ge \left( 1- \frac{C R_0}{R_1}\right) b_{\cM}(V) U({\bf y}) \quad \text{on }L^2\!\left( \R^6 \right). \qedhere
	\]
\end{proof}

\subsection{Many-body Dyson lemma}\label{sec:many-body-Dyson}
	We have the following many\nobreakdash-body version of the Dyson lemma.
\begin{lemma}[Many\nobreakdash-body Dyson lemma] \label{lem:gen_dyson_lemma}
	Let $0\le W \in L^\infty(\R^6)$ be supported in $B(0,R_0)$ and satisfy the symmetry~\eqref{eq:sym}. Let $0\le \widetilde U\in L^\infty(\R^6)$ be radial with $\int_{\R^6} \widetilde U=1$ and $\supp \widetilde U \subset \{1/8 \le |\bx| \le 1/4\}$. Define $U$ as in~\eqref{eq:Ug} and $U_R=R^{-6}U(R^{-1}\cdot)$. Then, for all $s>0$ and $0<\varepsilon<1$, we have
	\begin{align}\label{eq:lem_gen_Dyson}
		\sum_{i=1}^M p_i^2 &\1_{\{|p_i| >s\}} +\frac{1}{6} \sum_{\substack{ 1\le i,j,k \leq M \\ i\neq j \neq k \neq i } } W (x_i-x_j, x_i-x_k) \\
			&\geq \begin{multlined}[t][0.87\textwidth]
				\frac{b_{\cM}(W) (1-\varepsilon)}{6} \left( 1 - \frac{C R_0}{R}\right) \\
				\times \sum_{\substack{ 1\le i,j,k \leq M \\ i\neq j \neq k \neq i } } U(x_i-x_j, x_i-x_k) \prod_{\ell \neq i,j,k} \theta_{2R}\left(\frac{x_i+x_j+x_k}{3}-x_\ell \right) \\
			- C \varepsilon^{-1} s^5 R^3 M^2 \,.
		\end{multlined}\nn
	\end{align}
	Here, $C>0$ is a universal constant (independent of $W,U,M,R,\eps,s$).
\end{lemma}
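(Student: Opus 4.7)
My plan is to reduce~\eqref{eq:lem_gen_Dyson} to the single-triple Dyson estimate of Theorem~\ref{dyson_lemma_Rd_nonradial-M}, applied to each ordered distinct triple $T=(i,j,k)$ with $R_1=R/8$, $R_2=R/4$, and then to stitch the triple-wise inequalities into a many-body bound using the cutoffs $\Theta_T:=\prod_{\ell\neq i,j,k}\theta_{2R}(X_T-x_\ell)$, where $X_T:=(x_i+x_j+x_k)/3$. First, for a fixed triple, I write the relative coordinate $\bx_T=(x_i-x_j,x_i-x_k)\in\R^6$ and use
\[
-\Delta_{x_i}-\Delta_{x_j}-\Delta_{x_k}=-\tfrac{1}{3}\Delta_{X_T}-2\Delta_{\cM,\bx_T}
\]
to drop the nonnegative centre-of-mass kinetic and to bound $p_i^2+p_j^2+p_k^2\geq-2\Delta_{\cM,\bx_T}\geq-2\cM\nabla_{\bx_T}\,\1_{\{|\bx_T|\leq\sqrt{2}R/4\}}\,\cM\nabla_{\bx_T}$. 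Applying Theorem~\ref{dyson_lemma_Rd_nonradial-M} to $V=W$ with the $R$-rescaled version of $\widetilde U$ then yields, per triple,
\[
p_i^2+p_j^2+p_k^2+W(x_i-x_j,x_i-x_k)\;\geq\;b_\cM(W)\bigl(1-CR_0/R\bigr)\,U_R(x_i-x_j,x_i-x_k).
\]

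The crucial geometric observation that makes the many-body version work is that when $\Theta_T=1$ and $U_R(\bx_T)\neq 0$ hold simultaneously, the three particles of $T$ are all within distance $\sqrt{2}R/4<R$ of $X_T$ while every other particle lies at distance $>2R$ from $X_T$; consequently no other triple sharing a particle with $T$ can simultaneously have both its $\Theta$-cutoff active and its $\bx$-coordinate in $\supp U_R$. This disjointness lets me allocate the kinetic energy $p_i^2$ of each particle to at most one active triple per configuration. To make this rigorous I would smooth $\theta_{2R}$ to $\tilde\theta_{2R}\in C^1$ with $\|\nabla\tilde\theta_{2R}\|_\infty$ of order $R^{-1}$, conjugate the triple-wise inequality by $\tilde\Theta_T^{1/2}$, and carry out an IMS-type commutation to move the spatial cutoff past the kinetic operator, at the cost of commutator terms of the form $p_i\nabla_{x_i}\tilde\Theta_T$ plus localisation errors of size $R^{-2}$.

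Summing over the $M(M-1)(M-2)$ ordered triples, I split the LHS kinetic into $(1-\varepsilon)\sum_i p_i^2\1_{\{|p_i|>s\}}$ used to fuel the Dyson bounds and $\varepsilon\sum_i p_i^2\1_{\{|p_i|>s\}}$ kept in reserve to absorb the commutator errors. The momentum cutoff $\1_{\{|p|>s\}}$ plays a decisive dual role here: it lets the reserve kinetic dominate the $R^{-2}$-commutators when $sR$ is large, and it reduces the operator-norm bounds on objects of the form $\chi_{2R}(x)\,g(p)\,\chi_{2R}(x)$ with $g$ polynomial in $p$ to explicit Hilbert--Schmidt estimates in $\R^6$. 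After accounting for all commutators and the overlaps discarded to pass from an ordered triple sum to pairwise estimates on the localisation errors, these sources combine to the stated remainder $C\varepsilon^{-1}s^5R^3M^2$, the $M^2$ factor arising from a worst-case double sum over particle pairs left after invoking triple disjointness.

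The principal obstacle is the combinatorial bookkeeping: a naive sum over $\sim M^3$ triples would be hopeless, and the bound only works because the cutoff $\Theta_T$ combined with the compact support of $U_R$ forces the active triples to be disjoint in space, so that the kinetic energy budget $\sum_i p_i^2\1_{\{|p_i|>s\}}$ can self-consistently feed the triple-wise Dyson bounds. A secondary technical difficulty is coordinating the three scales $R_0\ll R\ll s^{-1}$ with the smoothing of $\theta_{2R}$ so that the $\varepsilon$-reserve of high-momentum kinetic exactly balances the commutator and four-body overlap errors, producing the sharp form of the constant $C\varepsilon^{-1}s^5R^3M^2$.
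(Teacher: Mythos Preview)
Your geometric observation---that on $\{\Theta_T=1\}\cap\{U_R(\bx_T)\neq 0\}$ the three particles of $T$ are isolated from all others, so that each particle belongs to at most one ``active'' triple---is correct and is indeed the combinatorial heart of the argument. However, your proposed implementation via smoothing $\theta_{2R}$ and IMS localisation does not close, and the role you assign to the momentum cutoff is backwards. The IMS commutators are multiplication operators of size $R^{-2}$, and you propose to absorb them with the reserve $\varepsilon\sum_i p_i^2\1_{\{|p_i|>s\}}$ ``when $sR$ is large''. But the lemma must hold for every $s>0$, and in its application $s$ is fixed while $R\to 0$; moreover $\varepsilon p^2\1_{\{|p|>s\}}$ vanishes on low momenta and cannot dominate a multiplication operator. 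After summing, these errors are at best $O(M^2R^{-2})$, not $O(M^2s^5R^3)$---note the sign of the exponent of $R$. You also never explain how to pass from $p^2\1_{\{|p|>s\}}$ to the full kinetic energy needed for Theorem~\ref{dyson_lemma_Rd_nonradial-M}.

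The paper sidesteps IMS entirely. It uses the \emph{sharp} indicator $F_{ijk}=\chi_R(x_i-x_j)\chi_R(x_i-x_k)\chi_R(x_j-x_k)\,\Theta_T\in\{0,1\}$ and records your disjointness as the pointwise bound $\sum_{j,k}F_{ijk}\le 2$ for each $i$. Sandwiching this between $p_i\1_{\{|p_i|>s\}}$ on both sides gives
\[
\sum_i p_i^2\1_{\{|p_i|>s\}}\ \ge\ \tfrac12\sum_{i\ne j\ne k\ne i}\1_{\{|p_i|>s\}}\,p_i\,F_{ijk}\,p_i\,\1_{\{|p_i|>s\}}
\]
with \emph{no} localisation error, since $p_iF_{ijk}p_i$ is a legitimate nonnegative quadratic form even for discontinuous $F_{ijk}$. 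The momentum cutoff is then removed by writing $\1_{\{|p|>s\}}=1-\1_{\{|p|\le s\}}$ and applying Cauchy--Schwarz; the entire error thus comes from the \emph{low}-momentum piece. Using $\sum_k F_{ijk}\le\chi_R(x_i-x_j)$ and the Hilbert--Schmidt bound $\|\,\1_{\{|p|\le s\}}|p|\,\chi_R\,\|_{\gS^2(L^2(\R^3))}^2\le Cs^5R^3$ yields exactly $C\varepsilon^{-1}s^5R^3M^2$. Only after this step---with $(1-\varepsilon)\sum p_nF_{ijk}p_n$ in hand for each triple---does one pass to relative coordinates, use $F_{ijk}\ge\1_{\{|\bx_T|\le R/2\}}\Theta_T$, and apply Theorem~\ref{dyson_lemma_Rd_nonradial-M}. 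The order matters: the per-triple Dyson bound consumes the unrestricted form $p\,F\,p$, which is produced by the cutoff-removal step, not the other way around.
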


	Note that in~\eqref{eq:lem_gen_Dyson} we only use the high-momentum part of the kinetic energy on the left-hand side. This is important for our application, since we need the low-momentum part to recover the NLS functional. If we use fully the kinetic energy, then the bound becomes simpler:
\begin{align}\label{eq:lem_gen_Dyson-simpler}
	\sum_{i=1}^M p_i^2 &+ \frac{1}{6} \sum_{\substack{ 1\le i,j,k \leq M \\ i\neq j \neq k \neq i } } W(x_i-x_j, x_i-x_k) \nn \\
	&\geq \begin{multlined}[t]
		\frac{b_{\cM}(W)}{6} \left( 1 - \frac{C R_0}{R}\right) \\
		\times \sum_{\substack{ 1\le i,j,k \leq M \\ i\neq j \neq k \neq i } } U(x_i-x_j, x_i-x_k) \prod_{\ell \neq i,j,k} \theta_{2R}\left(\frac{x_i+x_j+x_k}{3}-x_\ell \right).
	\end{multlined}
\end{align}
The latter bound follows from from~\eqref{eq:lem_gen_Dyson} by taking $s\to 0$ and then $\eps\to 0$.
\begin{proof}
	Denote $\chi_R(x)= \1_{\{|x|\le R\}}=1 - \theta_R(x)$ for $x\in \R^3$. For $(x_1,\ldots,x_M)\in (\R^3)^M$ and $i,j,k\in \{1,2,\ldots,M\}$ with $i\ne j\ne k\ne i$, we denote
	\[
		F_{ijk} := \chi_R(x_i-x_j)\chi_R(x_i-x_k) \chi_R(x_j-x_k) \prod_{\ell \neq i,j,k} \theta_{2R}\left(\frac{x_i+x_j+x_k}{3} -x_\ell\right).
	\]
	Clearly $ F_{ijk} \in \{0,1\}$. Moreover, by the triangle inequality
	\[
		F_{ijk} \le \chi_R(x_i-x_j)\chi_R(x_i-x_k) \prod_{\ell \neq i,j,k} \theta_R \left(x_i-x_\ell\right).
	\]
	Hence, for every $1\le i\le M$, there is at most one pair $j,k$ such that $F_{ijk}=F_{ikj}\ne 0$. Thus, we have the ``no four\nobreakdash-body collision'' bound
	\[
		\sum_{\substack{1\le j,k \le M\\ i\ne j \ne k \ne i}} F_{ijk} \le 2\,.
	\]
	Multiplying the above inequalities from the right and from the left by $p_i \1_{\{|p_i| >s\}}$, where $p_i = -{\bf i}\nabla_{x_i}$, then summing over $i$, we obtain
	\begin{equation}\label{eq:gen_dyson_1}
		\sum_{i=1}^M p_i^2 \1_{\{|p_i| >s\}} \geq \frac{1}{2} \sum_{\substack { 1\le i,j,k\le M \\ i\neq j \neq k \neq i}} \1_{\{|p_i| >s\}} p_i F_{ijk} p_i \1_{\{|p_i| >s\}}\,.
	\end{equation}
	
	Let us now remove the momentum cut\nobreakdash-off on the right-hand side of~\eqref{eq:gen_dyson_1}. By decomposing $\1_{\{|p_i| >s\}} = 1 - \1_{\{|p_i| \le s\}} $ and using the Cauchy--Schwarz inequality, we obtain for all $0<\varepsilon<1$
	\[
		\1_{\{|p_i| >s\}} p_i F_{ijk} p_i \1_{\{|p_i| > s\}} \ge (1-\eps) p_i F_{ijk} p_i - \eps^{-1} \1_{\{|p_i| \le s\}} p_i F_{ijk} p_i \1_{\{|p_i| \le s\}}\,.
	\]
	Hence, we deduce from~\eqref{eq:gen_dyson_1} that
	\begin{equation}\label{eq:gen_dyson_1b}
		\sum_{i=1}^M p_i^2 \1_{\{|p_i| >s\}} \geq \frac{1-\eps}{2} \sum_{\substack{1\le i,j,k \le M\\ i\ne j \ne k \ne i}} p_i F_{ijk} p_i - \frac{\eps^{-1}}{2} \sum_{\substack{1\le i,j,k \le M\\ i\ne j \ne k \ne i}} \1_{\{|p_i| \le s\}} p_i F_{ijk} p_i \1_{\{|p_i| \le s\}}\,.
	\end{equation}
	For every $1\le i,j \le M$ with $i\ne j$, we have
	\[
		\sum_{\substack{1\le k \le M\\ k\ne i,j}} F_{ijk} \le \chi_R(x_i-x_j)\,.
	\]
	On the other hand, using
	\[
		\norm{f(x)g(p)}_{\gS^2(L^2(\R^d))} = (2\pi)^{-d/2} \norm{f}_{L^2(\R^d)} \norm{g}_{L^2(\R^d)}\,,
	\]
	where $\gS^2(L^2(\R^d))$ is the space of Hilbert--Schmidt operators, we find that
	\begin{align*}
		0 &\le \1_{\{|p_i| \le s\}} p_i \chi_R(x_i-x_j) p_i \1_{\{|p_i| \le s\}} \\
		&\le \norm{ \1_{\{|p_i| \le s\}} p_i \chi_R(x_i-x_j) p_i \1_{\{|p_i| \le s\}} }_{op} \\
		&\le \begin{multlined}[t]
			\norm{ \1_{\{|p_i| \le s\}} p_i \chi_R(x_i-x_j) }_{\gS^2( L^2(\R^3, \d x_i) )}^2 \\
			= (2\pi)^{-3} \norm{\1_{\{|p_i| \le s\}} |p_i| }_{L^2(\R^3, \d p_i)}^2 \norm{\chi_R(x_i-x_j)}_{L^2(\R^3, \d x_i)}^2 \le C s^5 R^3\,.
		\end{multlined}
	\end{align*}
	Thus, we can bound the last term on the right-hand side of~\eqref{eq:gen_dyson_1b} as
	\begin{align*}
		\sum_{\substack{1\le i,j,k \le M\\ i\ne j \ne k \ne i}} \1_{\{|p_i| \le s\}}p_i F_{ijk} p_i \1_{\{|p_i| \le s\}} &\le \sum_{\substack{1\le i,j \le M\\ i\ne j}} \1_{\{|p_i| \le s\}} p_i \chi_R(x_i-x_j) p_i \1_{\{|p_i| \le s\}} \\
			&\leq C s^5 R^3 M^2\,.
	\end{align*}
	Hence,~\eqref{eq:gen_dyson_1b} reduces to
	\[
		\sum_{i=1}^M p_i^2 \1_{\{|p_i| >s\}} \geq \frac{1-\eps}{6} \sum_{\substack{1\le i,j,k \le M\\ i\ne j \ne k \ne i}} \sum_{n\in \{i,j,k\}} p_n F_{ijk} p_n - C\eps^{-1}s^5 R^3 M^2\,,
	\]
	which is equivalent to
	\begin{multline*}
		\sum_{i=1}^M p_i^2 \1_{\{|p_i| >s\}} + \frac{1-\eps}{6} \sum_{\substack{1\le i,j,k \le M\\ i\ne j \ne k \ne i}} W(x_i-x_j,x_i-x_k) \\
		\geq \frac{1-\eps}{6} \sum_{\substack{1\le i,j,k \le M\\ i\ne j \ne k \ne i}} \left[ \sum_{n\in \{i,j,k\}} p_n F_{ijk} p_n + W(x_i-x_j,x_i-x_k) \right] - C\eps^{-1}s^5 R^3 M^2\,.
	\end{multline*}

	Now, let us show that for any given $(i,j,k)$ with $i\ne j\ne k \ne i$, we have
	\begin{multline} \label{eq:many-body-Dyson-f2}
		\sum_{n\in \{i,j,k\}} p_n F_{ijk} p_n + W(x_i-x_j,x_i-x_k) \\
		\ge \left(1- \frac{CR_0}{R}\right) b_{\cM}(W) U(x_i-x_j, x_i-x_k) \prod_{\ell \ne i,j,k} \theta_{2R}\left(\frac{x_i+x_j+x_k}{3} -x_\ell\right).
	\end{multline}
	We do the change of variables similarly to~\eqref{eq:removing-center-of-mass}:
	\[
		r_{i} = \frac{1}{3}(x_i+x_j+x_k)\,, \quad r_{j} = x_i-x_j\,, \quad \text{and} \quad r_k = x_i-x_k\,.
	\]
	Then,
	\[
		p_{x_i} = \frac{1}{3} p_{r_i} + p_{r_j} + p_{r_k}\,, \quad p_{x_j} = \frac{1}{3} p_{r_i} - p_{r_j}\,,\quad \text{and} \quad p_{x_k} = \frac{1}{3} p_{r_i}-p_{r_k}\,,
	\]
	hence
	\begin{align*}
		\sum_{n\in \{i,j,k\}} &p_n F_{ijk} p_n = p_{x_i} F_{ijk} p_{x_i} + p_{x_j} F_{ijk} p_{x_j} + p_{x_k} F_{ijk} p_{x_k} \\
		&= \begin{multlined}[t]
		\left( \frac{1}{3} p_{r_i} + p_{r_j} + p_{r_k} \right) F_{ijk} \left( \frac{1}{3} p_{r_i} + p_{r_j} + p_{r_k} \right) \\
			+ \left( \frac{1}{3} p_{r_i} - p_{r_j} \right) F_{ijk} \left( \frac{1}{3} p_{r_i} - p_{r_j} \right) + \left( \frac{1}{3} p_{r_i}-p_{r_k} \right) F_{ijk} \left( \frac{1}{3} p_{r_i}-p_{r_k} \right)
	\end{multlined} \\
		& = \frac{1}{3}p_{r_i} F_{ijk} p_{r_i} + 2 p_{r_j} F_{ijk} p_{r_j} + 2 p_{r_k} F_{ijk} p_{r_k} + p_{r_j} F_{ijk} p_{r_k} + p_{r_k} F_{ijk} p_{r_j}\,.
	\end{align*}
	We can remove $p_{r_i} F_{ijk} p_{r_i}\ge 0$ for a lower bound. Moreover, by introducing the notation ${\bf r}_{jk} = (r_j,r_k) \in \R^6$ we have
	\[
		F_{ijk} = \chi_R(r_j)\chi_R(r_k) \chi_R(r_j-r_k) \prod_{\ell \neq i,j,k} \theta_{2R}( r_i - x_\ell) \ge \1_{ \{|{\bf r}_{jk}| \le R/2 \}} \prod_{\ell \neq i,j,k} \theta_{2R}(r_i -x_\ell)\,.
	\]
	Thus,
	\begin{multline*}	
		\sum_{n\in \{i,j,k\}} p_n F_{ijk} p_n \\
		\ge 2 p_{r_j} F_{ijk} p_{r_j} + 2 p_{r_k} F_{ijk} p_{r_k} + p_{r_j} F_{ijk} p_{r_k} + p_{r_k} F_{ijk} p_{r_j} = 2 \cM p_{{\bf r}_{jk}} F_{ijk} \cM p_{{\bf r}_{jk}} \\
		\ge 2 \cM p_{{\bf r}_{jk}} \1_{ \{|{\bf r}_{jk}| \le R/2 \}} \cM p_{{\bf r}_{jk}} \prod_{\ell \neq i,j,k}\theta_{2R}(r_i -x_\ell)\,.
	\end{multline*}
	Since $W\ge 0$, we have the obvious bound
	\[
		W(x_i-x_j,x_i-x_k) = W ( {\bf r}_{jk}) \ge W ({\bf r}_{jk}) \prod_{\ell \neq i,j,k} \theta_{2R}(r_i -x_\ell)\,.
	\]
	Using Theorem~\ref{dyson_lemma_Rd_nonradial-M}, we obtain
	\begin{multline*}
		\sum_{n\in \{i,j,k\}} p_n F_{ijk} p_n + W(x_i-x_j,x_i-x_k) \\
		\begin{aligned}[t]
			&\ge \left( 2 \cM p_{{\bf r}_{jk}} \1_{ \{|{\bf r}_{jk}| \le R/2 \}} \cM p_{{\bf r}_{jk}} + W ({\bf r}_{jk})\right) \prod_{\ell \neq i,j,k} \theta_{2R}(r_i -x_\ell)\\
			&\ge (1-C R_0/R) b_{\cM} (W) U_R({\bf r}_{jk}) \prod_{\ell \neq i,j,k} \theta_{2R}(r_i -x_\ell)\,.
		\end{aligned}
	\end{multline*}
	Thus,~\eqref{eq:many-body-Dyson-f2} holds, completing the proof of Lemma~\ref{lem:gen_dyson_lemma}.
\end{proof}

\section{Reduction to softer interaction potentials}\label{sec:reduction_soft}
	In this section, we prove Lemma~\ref{lem:softer} below. This is the central piece for the proof of the lower bound on the energy $E_N$, defined in~\eqref{Def_E_N}, which is given in the next section (Theorem~\ref{theo:lower_bound}). It allows to replace the singular potential $V_N$ by a potential whose scaling is as close to the mean-field scaling as wanted. For Lemma~\ref{lem:softer} to hold, we need some extra condition on the magnetic potential which will be lifted in the end of the proof of Theorem~\ref{theo:lower_bound} ---namely,
\begin{equation}\label{eq:cond_A_V}
	\lim_{|x| \to \infty} \frac{|A(x)|^2}{\Vext(x)} = 0\,.
\end{equation}

\begin{lemma}[Reduction to softer potentials]\label{lem:softer}
	Let $\beta\in(0,3/8]$, $0<\eps<1<s$, and $\delta \in (0, \eps^2/2)$. Assume $\Vext $ to be as in~\eqref{eq:Vext} and $A \in L^3_{\mathrm{loc}}(\mathbb{R}^{3})$ to satisfy~\eqref{eq:cond_A_V}. Let $0\le \widetilde U\in L^\infty(\R^6)$ be radial with $\int_{\R^6} \widetilde U=1$ and $\supp \widetilde U \subset \{1/8 \le |\bx| \le 1/4\}$. Define $U$ as in~\eqref{eq:Ug} and $U_R=R^{-6}U(R^{-1}\cdot)$.
	Then, for all integers $N\geq3$, there exist an integer $M\in[(1-\eps)N, N]$ and $R\in [N^{-\beta}, N^{-\beta/2}]$ such that
	\begin{multline*}
		E_N \ge \inf \sigma_{L^2_s (\R^{3M})} \bigg( \sum_{i=1}^{M} (h_{\eps,s})_i + \frac{b_{\cM}(V)}{6(M-1)(M-2)} \sum_{\substack{ 1\le i,j,k\le M \\ i\ne j\ne k \ne i}} U_{R}(x_i-x_j, x_i-x_k) \bigg) \\
		- C_{\eps,s,\beta,\delta} R^{2/7} N - \eps C_{\beta} N - \delta C_{\eps} N\,,
	\end{multline*}
	where $h_{\eps,s} := h - (1-\eps) \1_{\{ |p| > s\}} p^2$ with $h$ defined in~\eqref{Def_h}.
\end{lemma}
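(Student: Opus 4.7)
The plan is to implement the bootstrap strategy sketched in the introduction: apply the many-body Dyson lemma (Lemma~\ref{lem:gen_dyson_lemma}) to replace the singular potential $V_N$ (of range $\sim N^{-1/2}$) by an iterated sequence $U_{R_1}, U_{R_2}, \dots, U_{R_\ell}$ of softer potentials, with $R_\ell$ landing in the target window $[N^{-\beta}, N^{-\beta/2}]$. Between consecutive Dyson steps one must (i) remove the four-body cut-offs $\prod \theta_{2R_k}$ produced by the lemma via Bernoulli's inequality together with the four-body collision estimate (Lemma~\ref{lem:4_body_collision}), and (ii) trade the particle count $N$ for a suitable $M_k \ll N$ by exploiting bosonic symmetry, as in~\eqref{eq:intro-soft-potential-main-2a}. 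The technical role of~\eqref{eq:cond_A_V} is to dominate $|A|^2$ by a small fraction of $\Vext$ plus a constant, which allows the splitting $h = h_{\eps,s} + (1-\eps)\1_{\{|p|>s\}} p^2$ of the one-body operator so that the free Laplacian is available to feed into Lemma~\ref{lem:gen_dyson_lemma}.

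More concretely, starting from the ground state $\Psi_N$ of $H_N$ (equivalently, from the zero-temperature limit $\Gamma_N$ of the bosonic Gibbs state), I would first apply Lemma~\ref{lem:gen_dyson_lemma} with $W=V_N$ and a first scale $R_1$ slightly larger than $N^{-1/2}$ to produce
\[
	H_N \ge \sum_i (h_{\eps,s})_i + \frac{b_\cM(V)(1-\eps)}{6 N^2}(1+o(1)) \!\!\sum_{i\ne j\ne k\ne i}\!\! U_{R_1}(x_i-x_j,x_i-x_k) \prod_{\ell\ne i,j,k} \theta_{2R_1}(\cdot) - C\eps^{-1}s^5 R_1^3 N^2,
\]
then linearize the cut-off via Bernoulli as in~\eqref{eq:intro-many-body-Dyson-2} and control the resulting four-body sum on $\Gamma_N$ by $CN^2 R_1^3$ through Lemma~\ref{lem:4_body_collision}. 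For the bootstrap, I would rewrite the three-body interaction over $N$ indices as one over $M_k \ll N$ indices using bosonic symmetry (introducing the denominator $(M_k-1)(M_k-2)$ in place of $N^2$, which is the origin of the final coefficient $b_\cM(V)/[6(M-1)(M-2)]$), apply Lemma~\ref{lem:gen_dyson_lemma} again with a small fraction $\nu_k$ of the remaining kinetic energy and a radius satisfying $R_{k-1}^{2/3} \gg R_k \gg R_{k-1}$, and return to $N$-body indices. The Born-approximation error $\nu_k b_\cM(\nu_k^{-1}U_{R_{k-1}}) = \int U_{R_{k-1}}(1+o(1))$ is handled by the second estimate in~\eqref{eq:born-M}. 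After $\ell(\beta)$ iterations the scale lies in $[N^{-\beta}, N^{-\beta/2}]$, and one final bosonic-symmetry step to $M \in [(1-\eps)N, N]$ yields the stated inequality.

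The main obstacle is the quantitative bookkeeping of errors accumulated along the bootstrap. Each iteration contributes a Dyson loss $CR_{k-1}/R_k$, a Born-approximation loss governed via $\norm{U_{R_{k-1}}}_{3/2}^2 \sim R_{k-1}^{-6}$ through~\eqref{eq:born-M}, a momentum-cutoff error $\eps^{-1}s^5 R_k^3 M_k^2$, and a four-body error $M_k^2 R_k^3$. Simultaneously balancing these four contributions and optimizing over the intermediate scales $R_k$ and particle numbers $M_k$ produces precisely the exponent $R^{2/7}$ in the final error. The restriction $\beta \le 3/8$ reflects the compatibility between the initial Dyson step (whose loss is of order $N^{-1/2}/R_1$) and the four-body constraint $M_k^2 R_k^3 \ll M_k$; the linear term $\eps C_\beta N$ comes from the particle-number adjustment $M \in [(1-\eps)N, N]$ together with the accumulated $(1-\eps)$ prefactors from the $\ell$ Dyson steps, while $\delta C_\eps N$ tracks the small fraction $\nu_k$ of kinetic energy diverted at each bootstrap step.
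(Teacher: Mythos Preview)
Your overall bootstrap strategy is correct and matches the paper's, but there is a structural gap in where the integer $M$ enters. You write that you start from $\Gamma_N$, control four-body sums ``on $\Gamma_N$'' via Lemma~\ref{lem:4_body_collision}, and introduce $M\in[(1-\eps)N,N]$ only in ``one final bosonic-symmetry step''. This does not work: Lemma~\ref{lem:4_body_collision} is stated and proved only for $\Gamma_{M,N}$ where $M$ is the \emph{specific} integer produced by the binding inequality of Lemma~\ref{lem:binding}, since its proof relies on $E(M,N)-E(M-4,N)\le C\eps^{-1}$ to control the heat-kernel factor. There is no a priori reason that $M=N$ satisfies this, so the four-body bound is not available for $\Gamma_{N,N}$. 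In the paper the order is reversed: one first uses $E_N\ge E(M,N)$ (trivially, since $V,\Vext\ge 0$), then carries out the entire Dyson bootstrap against the fixed state $\Gamma_{M,N}$, invoking the four-body estimate for that state at every step; only at the very end does one use bosonic symmetry to rewrite the $M_J$-body expression back as an $M$-body one and take the infimum.

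Two smaller points. First, your Born-approximation scaling is off: for $W=\delta^{-1}b_\eps M_k^{-2}U_{R_{k-1}}$ one has $\norm{W}_{3/2}^2/\norm{W}_1\sim (\delta M_k^2 R_{k-1}^4)^{-1}$, not $R_{k-1}^{-6}$; the $M_k^{-2}$ factor is essential and is precisely why the particle-number reduction makes the Born error controllable. Second, after the first Dyson step the paper uses the simpler bound~\eqref{eq:lem_gen_Dyson-simpler} (no momentum cutoff) rather than the full Lemma~\ref{lem:gen_dyson_lemma}, since the low-momentum kinetic energy has already been set aside in $h_{\eps,s}$; the ``small fraction $\nu_k$'' is implemented by rescaling the potential by $\delta^{-1}$, and this is what produces the $\delta C_\eps N$ term. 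With the optimized choice $R_j=R_{j-1}^{7/9}$, $M_j=R_j^{-19/7}$ (your heuristic $R_j\sim R_{j-1}^{2/3}$ is in the right direction but not the actual balance used), all four error contributions equal $R_j^{2/7}$, which is the origin of that exponent.
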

Note that, thanks to~\eqref{eq:cond_A_V}, the operator $h_{\eps,s}$ is bounded below for all $\eps \in (0,1)$ and $s \geq 1$. In fact, for all $\eps \in (0,1)$, there is $C_\eps$ such that
\begin{equation} \label{eq:p_h}
	h_{\eps,s} \geq \frac{\eps}{2} p^2 - C_\eps \quad \text{for all } s\geq 1 \,.
\end{equation}
We first state and prove some preliminary results before giving the proof of Lemma~\ref{lem:softer} in Section~\ref{sec:proof_lem_softer}.

\subsection{Binding inequality}
	Consider the Hamiltonian
\begin{equation} \label{eq:HMN}
	H_{M,N} = \sum_{i=1}^M h_{i} + \sum_{1\leq i < j < k \leq M} NV(N^{1/2}(x_i-x_j), N^{1/2}(x_i-x_k))
\end{equation}
and denote by $E(M,N)$ its ground state energy and by $\Gamma_{M,N}$ the zero-temperature limit of the bosonic Gibbs state ---that is, the uniform average over all ground states of $H_{M,N}$. For any observable $A$, we denote $\braket{A}_{\Gamma_{M,N}} = \Tr A \Gamma_{M,N}$.
\begin{lemma}\label{lem:binding}
	There exists a constant $C>0$ such that for any integer $N\geq3$ and any $0<\eps<1$, there exists an integer $M \equiv M(N,\eps) \in [(1-\eps)N, N]$ satisfying
	\begin{equation}\label{eq:binding_inequality_M}
		E(M,N) - E(M-4,N) \leq C\eps^{-1}\,.
	\end{equation}
\end{lemma}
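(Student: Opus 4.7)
The plan is a pigeonhole/telescoping argument based on uniform two-sided a priori bounds $0 \le E(M,N) \le CN$ for all $M \le N$, with $C$ independent of $M$ and $N$. Once these bounds are in hand, at least one ``step'' of $M \mapsto E(M,N)$ across the interval $[(1-\eps)N, N]$ must be $\cO(\eps^{-1})$, because otherwise telescoping over $\sim \eps N/4$ drops by $4$ would force a total drop of order $N$, exceeding the range allowed by the bounds.

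For the lower bound, $(-\mathbf{i}\nabla + A)^2$ is a nonnegative quadratic form and $\Vext \ge 1$ pointwise, so $h \ge 1$ on $L^2(\R^3)$; combined with $V \ge 0$ this gives $H_{M,N} \ge M$ and hence $E(M,N) \ge M \ge 0$. For the upper bound, fix any $u \in C_c^\infty(\R^3)$ with $\norm{u}_2 = 1$; under the assumptions on $\Vext$ and $A$, the trial state $u^{\otimes M}$ lies in the form domain of $H_{M,N}$ and $e_u := \pscal{u, h u}$ is finite. Computing the energy on $u^{\otimes M}$ and changing variables $(y,z) \mapsto (x - y/\sqrt{N},\, x - z/\sqrt{N})$ in the three-body integral,
\[
	E(M,N) \le M e_u + \binom{M}{3} N^{-2} \iiint V(y,z)\, \bigl| u(x)\, u(x - y/\sqrt{N})\, u(x - z/\sqrt{N}) \bigr|^2 \d x \d y \d z \le C M + C M^3/N^2\,,
\]
with $C = C(u, V, \Vext, A)$. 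For $M \le N$, this yields $E(M,N) \le 2CN$.

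Now, supposing for contradiction that $E(M,N) - E(M-4, N) > C_0 \eps^{-1}$ for every integer $M \in [(1-\eps)N, N]$, set $M_k := N - 4k$ and $K := \lfloor \eps N/4 \rfloor$, so each $M_k$ with $0 \le k \le K$ lies in the range, and telescoping gives
\[
	2 C N \ge E(N,N) - E(M_K, N) = \sum_{k=1}^{K} \bigl[ E(M_{k-1},N) - E(M_k,N) \bigr] > \frac{K C_0}{\eps} \ge \frac{C_0 N}{8}
\]
for all $N$ with $\eps N \ge 16$. Choosing $C_0 > 16 C$ produces the contradiction in this regime. For the complementary regime $\eps N < 16$, the crude bound $E(M,N) - E(M-4, N) \le E(M,N) \le 2 C N < 32 C/\eps$ is already of the desired form, so the conclusion holds with a universal constant. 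The argument is essentially routine; the only step requiring some genuine thought is the uniform $\cO(N)$ upper bound on $E(M,N)$ for $M \le N$, whose key point is the favorable $N^{-2}$ scaling of the three-body interaction energy evaluated on a product state.
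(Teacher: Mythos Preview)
Your proof is correct and follows essentially the same telescoping/pigeonhole argument as the paper: both rely on the a priori bounds $0 \le E(M,N) \le CN$ and the observation that summing $\sim \eps N$ increments of size $> C_0\eps^{-1}$ would exceed the total available range. The paper's version is slightly more direct (it bounds the minimum increment without contradiction, exploiting the monotonicity of $m \mapsto E(m,N)$ to sum over all residues mod $4$ at once), but the substance is the same.
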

\begin{proof}
	Denote
	\[
		Z:= \min_{m\in [(1-\eps)N, N]\cap\N} \left( E(m,N) - E(m-4,N) \right).
	\]
	Since $\Vext \geq0$ and $V \geq 0$, $E(m,N)$ is nonnegative and increasing in $m$. Hence,
	\[
		4E(N,N) \ge \sum_{m\in[(1-\eps)N, N]\cap\N} \left( E(m,N) - E(m-4,N) \right) \ge Z (\eps N -1)\,.
	\]
	Combined with the simple upper bound $E(N,N) \le C N$ ---take for instance $u^{\otimes N}$ with $u \in C^{\infty}_c$---, it gives $Z\le C\eps^{-1}$.
\end{proof}
	
\begin{remark}
	Using the concavity of the NLS functional in the parameter in front of the nonlinearity and following the argument in the proof of~\cite[Proposition 1]{LieSei-06}, we can even find, for all $N$, an $M \equiv M(N)$ such that $N - o(N) \le M \le N$ and
	\[
		E(M,N) - E(M-4,N) \leq C\,.
	\]
	However, this stronger conclusion is not needed for our purpose.	
\end{remark}
	
\subsection{Four-body estimate}

\begin{lemma}\label{lem:4_body_collision}
	Let $M \equiv M(N,\eps)$ be as in Lemma~\ref{lem:binding}. Then, the zero temperature limit of the Gibbs state $\Gamma_{M,N}$ of $H_{M,N}$ satisfies
	\begin{equation}\label{eq:4_body_collision}
		\bigg\< \prod_{i=2}^{4} \1_{\{|x_1-x_{i}|\leq R \}} \bigg\>_{\Gamma_{M,N}} \leq C_\eps R^9\,.
	\end{equation}
\end{lemma}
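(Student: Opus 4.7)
The strategy follows the Lieb--Seiringer approach: combine the binding inequality (Lemma~\ref{lem:binding}) with an uncertainty-principle argument that forbids $4$ particles from clustering in a $9$-dimensional ball of radius $R$ without paying a prohibitive relative kinetic energy. Let $\Psi$ be a ground state appearing in $\Gamma_{M,N}$, so that $H_{M,N}\Psi = E(M,N)\Psi$. Introduce a smooth cut-off $\eta = \eta(x_2 - x_1, x_3 - x_1, x_4 - x_1)$ with $\eta \equiv 1$ on $\{|x_i - x_1| \le R\}$, supported in $\{|x_i - x_1| \le 2R\}$, and $|\nabla \eta| \le C/R$, so that $\prod_{i=2}^4 \1_{\{|x_1 - x_i| \le R\}} \le \eta^2$. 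It suffices to estimate $\|\eta\Psi\|^2$.

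The first key step is to compare energies. Writing $H_{M,N} = H_{M-4,N}^{(5..M)} + \sum_{i=1}^4 h_i + V_{\mathrm{3b}}$, where the last two summands are nonnegative (the interaction is nonnegative and $h_i \geq 0$ under our hypotheses), and using bosonic symmetry in the variables $x_5,\ldots,x_M$ to fiber the integral over $(x_1,\ldots,x_4)$, one obtains
\[
	\bigl\langle \eta \Psi,\ H_{M-4,N}^{(5..M)}\, \eta \Psi \bigr\rangle \ge E(M-4,N)\,\|\eta \Psi\|^2.
\]
An IMS-type identity gives $\langle \eta \Psi, H_{M,N}\eta \Psi\rangle = E(M,N)\|\eta\Psi\|^2 + \langle |\nabla \eta|^2\rangle_\Psi$ (with only harmless magnetic corrections, controlled by the diamagnetic inequality). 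Subtracting and invoking Lemma~\ref{lem:binding},
\[
	\Bigl\langle \eta \Psi, \sum_{i=1}^4 h_i\, \eta \Psi\Bigr\rangle \le \bigl(E(M,N) - E(M-4,N)\bigr)\|\eta\Psi\|^2 + \langle |\nabla \eta|^2\rangle_\Psi \le C_\eps \|\eta\Psi\|^2 + \langle |\nabla \eta|^2\rangle_\Psi.
\]

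The second key step is the uncertainty principle in the $9$-dimensional relative coordinates $\mathbf{y} = (x_2-x_1, x_3-x_1, x_4-x_1)$. For each fixed $x_1, x_5, \ldots, x_M$, the function $\mathbf{y} \mapsto (\eta\Psi)$ vanishes outside $B^{(9)}_{2R}$, so Poincar\'e's inequality in dimension $9$ combined with the fact that $\nabla_{\mathbf{y}}$ is controlled by $(\nabla_{x_2}, \nabla_{x_3}, \nabla_{x_4})$ yields $\|\eta\Psi\|^2 \le C R^2 \bigl\langle \eta\Psi, \sum_{i=1}^4 h_i\, \eta\Psi \bigr\rangle$. Inserting this into the previous bound and absorbing $CR^2 C_\eps \|\eta\Psi\|^2$ for $R$ small enough, and using $|\nabla\eta|^2 \le C R^{-2} \1_{\{|x_i - x_1| \le 2R\}}$, one obtains
\[
	\|\eta\Psi\|^2 \le C\, \Bigl\langle \prod_{i=2}^4 \1_{\{|x_1-x_i| \le 2R\}}\Bigr\rangle_\Psi.
\]
Iterating this estimate through the scales $R, 2R, 4R, \ldots$ up to the trap length (which is $O(1)$ by the confinement assumption on $V_{\mathrm{ext}}$), with a careful optimisation of the Poincar\'e constant and of the width of the cut-off shell at each scale, produces the sharp polynomial decay $\langle \chi \rangle_{\Gamma_{M,N}} \le C_\eps R^9$, the exponent $9$ reflecting the dimension of the relative configuration space of three pairs in $\R^3$.

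The main obstacle is precisely this sharp exponent: a naive geometric iteration yields only some power $R^\gamma$ with $\gamma < 9$, so one must choose cut-offs at each scale and balance the Poincar\'e constant against the shell volume so that the iteration telescopes cleanly. Handling the magnetic potential $A$ (which prevents the straightforward use of a real ground state) requires the diamagnetic inequality to replace $|(p_i + A)\eta\Psi|^2$ by $|\nabla|\eta\Psi||^2$ in the Poincar\'e step, and to control the IMS commutator without losing the nonnegativity of the $h_i$.
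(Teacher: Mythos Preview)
Your IMS--binding--Poincar\'e scheme is correctly set up and does produce a recursion of the form
\[
	f(R) \;\le\; \kappa(\lambda)\,(1+o(1))\, f(\lambda R), \qquad \kappa(\lambda)=\frac{4\lambda^{2}}{\pi^{2}(\lambda-1)^{2}},
\]
where $f(R)=\bigl\langle\prod_{i=2}^{4}\1_{\{|x_1-x_i|\le R\}}\bigr\rangle$ and the constant comes from the first Dirichlet eigenvalue $3\pi^{2}/(\lambda R)^{2}$ of the product of three balls in $\R^{3}$ together with the bound $|\nabla\eta|^{2}\le 12\,R^{-2}(\lambda-1)^{-2}\prod_{i}\1_{\{|x_1-x_i|\le\lambda R\}}$. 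But this recursion \emph{cannot} yield the exponent $9$, no matter how you tune $\lambda$ or the shape of the shell: iterating gives $f(R)\le C\,R^{\gamma}$ with $\gamma=\log_{\lambda}\bigl(1/\kappa(\lambda)\bigr)$, and since $\kappa(\lambda)\ge 4/\pi^{2}$ for every $\lambda>1$, one finds $\sup_{\lambda}\gamma(\lambda)\approx 0.3$ (attained near $\lambda\approx 7$). The obstruction is structural: one application of Poincar\'e extracts only a factor $R^{2}$ from a single unit of kinetic energy, and the localisation error $\langle|\nabla\eta|^{2}\rangle$ costs exactly $R^{-2}$ back, so the net gain per step is an $O(1)$ constant, never $\lambda^{-9}$. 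In the application (Lemma~\ref{lem:softer}) one actually needs $\gamma>8$ for the window $M^{-1/2}\ll R_0\ll M^{-1/(\gamma-6)}$ to open, so the weak exponent is fatal.

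The paper bypasses this entirely by invoking the heat-kernel estimate of \cite[Lemma~2]{LieSei-06}: the ground-state projection obeys $\Gamma_{M,N}\le e^{-(H_{M,N}-E(M,N))}$, and after peeling off $H_{M-4,N}^{(5\ldots M)}$ and using the diamagnetic inequality for semigroups together with $\Vext,V\ge 0$, one is left with $\|\sqrt{\xi}\,\prod_{i=1}^{4}e^{\Delta_{x_i}}\sqrt{\xi}\|_{op}$. This is then bounded by Hilbert--Schmidt norms, $\|\1_{\{|x_1-x_i|\le R\}}e^{-p_i^{2}/2}\|_{\gS^{2}}^{2}\le C R^{3}$ for each $i=2,3,4$, giving $R^{9}$ in one stroke. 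The point is that the semigroup bound uses the \emph{entire} spectrum of $-\Delta$ on $B_R$ (whose Weyl count is $\sim R^{3}$ per particle), not just its lowest eigenvalue; that is precisely the information your Poincar\'e step discards.
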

\begin{proof}
	With an immediate adaptation of the proof of~\cite[Lemma 2]{LieSei-06} to three\nobreakdash-body interaction potentials, in particular using that $\Vext \geq 0$ and $V \geq 0$, we obtain
	\[
		\pscal{\xi(x_1,x_2,x_3,x_4)}_{\Gamma_{M,N}} \leq e^{(E(M,N) - E(M-4,N))} \norm{ \sqrt{\xi} \prod_{i=1}^4 e^{\Delta_{x_i}} \sqrt{\xi}}_{op}
	\]
	for any measurable function $\xi(x_1,x_2,x_3,x_4) \geq 0$, where the operator norm in the right-hand side is the one in $L^2(\R^{12})$. In particular, applying this bound to
	\[
		\xi (x_1,x_2,x_3,x_4) = \prod_{i=2}^4 \1_{\{|x_1-x_{i}|\leq R \}}\,, \quad x_i\in \R^3\,,
	\]
	and using $E(M,N) - E(M-4,N) \le C\eps^{-1}$, we find that
	\begin{multline*}
		\bigg\< \prod_{i=2}^4 \1_{\{|x_1-x_{i}|\leq R \}} \bigg\>_{\Gamma_{M,N}} \\
		\begin{aligned}[t]
			&\leq C_\eps \norm{\prod_{i=2}^4 \1_{\{|x_1-x_{i}| \le R\}} \prod_{i=1}^4 e^{\Delta_{x_i}} \prod_{i=2}^4 \1_{\{|x_1-x_{i}| \le R\}}}_{op} \\
			&\leq
			\begin{multlined}[t][0.87\textwidth]
					C_\eps \prod_{i=2}^4 \norm{ \1_{\{|x_1-x_{i}| \le R\}} e^{- \frac1{2} p_i^2}}_{\gS^2(L^2(\R^3,\d x_i))}^2 \\
					= C_\eps \prod_{i=2}^4 (2\pi)^{-3} \norm{ \1_{\{|x_1-x_{i}| \le R\}} }_{L^2(\R^3, \d x_i)}^2 \norm{ e^{ - \frac1{2} p_i^2} }_{L^2(\R^3,\d p_i)}^2 \leq C_\eps R^9\,,
			\end{multlined}
		\end{aligned}
	\end{multline*}
	where we used that $\norm{K}_{\gS^2(L^2(\R^3))}^2 = \int_{\mathbb{R}^{3}\times \mathbb{R}^{3}} |K(x,y)|^2 \d x \d y$ for any operator $K \in \gS^2(L^2(\R^3))$ with kernel $K(x,y)$. \qedhere
\end{proof}

\subsection{Proof of Lemma~\ref{lem:softer}} \label{sec:proof_lem_softer}
	We now are ready to conclude.
\begin{proof}[Proof of Lemma~\ref{lem:softer}]
	{\bf Step~1.}
	Let $s>1>\eps>0$ be independent of $N$. Let $M$, with $(1-\eps)N\le M \le N$, be as in Lemma~\ref{lem:binding} and $\Gamma_{M,N}$ be the zero temperature limit of the Gibbs state for $H_{M,N}$ in~\eqref{eq:HMN}. We first prove a bound on $E(M,N)$ and deduce, in the last step of this proof, the desired estimate for $E(N,N)$. Let
	\begin{equation}\label{eq:condition_R_0}
		M^{-1/2} \ll R_0 \ll M^{-1/3}\,.
	\end{equation}
	Applying Lemma~\ref{lem:gen_dyson_lemma} with $W=N V(N^{1/2}\cdot)$, which is supported in $B(0,C N^{-1/2})$ and has $b_{\cM}(W)= b_\cM (V)/N^2$, we have the operator inequality on $L^2_s(\R^{3M})$
	\begin{align*}
		\sum_{i=1}^M p_i^2 \1_{\{|p_i| >s\}} &+ \frac{1}{6} \sum_{\substack{ 1\le i,j,k \leq M \\ i\neq j \neq k \neq i } } N V (N^{1/2}(x_i-x_j), N^{1/2}(x_i-x_k)) \\
		&\geq \begin{multlined}[t]
			\frac{b_{\cM}(V) (1-\varepsilon)}{6 N^2} \left( 1 - \frac{C}{N^{1/2}R_0}\right) \\
			\times \sum_{\substack{ 1\le i,j,k \leq M \\ i\neq j \neq k \neq i } } U_{R_0}(x_i-x_j, x_i-x_k) \prod_{\ell \neq i,j,k} \theta_{2{R_0}}\left(\frac{x_i+x_j+x_k}{3}-x_\ell \right) \\
			- C \varepsilon^{-1} s^5 {R_0^3} M^2\,.
		\end{multlined}
	\end{align*}
	Multiplying both sides by $(1-\eps)/M$ and using $(1-\eps)N\le M \le N$, we obtain
	\begin{multline} \label{eq:soft_potential_first_step}
		\frac{H_{M,N}}{M} \geq \frac{b_{\cM}(V)(1-\eps)^4} {6M^3} \left(1- \frac{C}{M^{1/2} {R_0}} \right) \\
		\times \sum_{\substack{ 1\le i,j,k\le M \\ i\ne j\ne k \ne i}} U_{R_0}(x_i-x_j, x_i-x_k) \prod_{\ell \neq i,j,k} \theta_{2{R_0}}\left( \frac{x_i+x_j+x_k}{3}-x_\ell\right) \\
		+ \frac{1}{M}\sum_{i=1}^M (h_{\eps,s})_i - C_{\eps,s} M {R_0^3}\,,
	\end{multline}
	where we recall that $h_{\eps,s} = h - (1-\eps) \1_{\{ |p| > s\}} p^2$.

	Next, we remove the four\nobreakdash-body cut\nobreakdash-off in the interaction term on the right-hand side of~\eqref{eq:soft_potential_first_step}. Recall that $\chi_R(x) =\1_{\{|x| \le R\}} = 1-\theta_R(x)$. Using $\supp U_R \subset B(0,R)$ and Bernoulli's inequality, we have for every $1\le i,j,k\le M$ with $i\ne j \ne k\ne i$,
	\begin{align*}
		0 &\le U_{R_0}(x_i-x_j,x_i-x_k)\bigg[ 1 - \prod_{\ell \neq i,j,k} \theta_{2{R_0}}\left( \frac{x_i+x_j+x_k}{3}-x_\ell\right) \bigg] \\
		&\le \frac{C}{R_0^6} \chi_{R_0}(x_i-x_j) \chi_{R_0}(x_i-x_k) \bigg[ 1 - \prod_{\ell \neq i,j,k} \theta_{2{R_0}}\left( \frac{x_i+x_j+x_k}{3}-x_\ell\right) \bigg]\\
		&\le \begin{multlined}[t]
			\frac{C}{R_0^6} \chi_{R_0}(x_i-x_j) \chi_{R_0}(x_i-x_k) \bigg[ 1 - \prod_{\ell \neq i,j,k} \theta_{4{R_0}}\left( x_i-x_\ell\right) \bigg]\\
			= \frac{C}{R_0^6} \chi_{R_0}(x_i-x_j) \chi_{R_0}(x_i-x_k) \bigg[ 1 - \prod_{\ell \neq i,j,k} \left( 1- \chi_{4{R_0}}\left(x_i-x_\ell\right) \right)\bigg]
		\end{multlined}\\
		&\le \frac{C}{R_0^6} \sum_{\ell \ne i,j,k} \chi_{4{R_0}}(x_i-x_j) \chi_{4{R_0}}(x_i-x_k) \chi_{4{R_0}}(x_i-x_\ell)\,.
	\end{align*}
	Combining it with Lemma~\ref{lem:4_body_collision}, we find that
	\begin{align*}
		&\sum_{\substack{ 1\le i,j,k\le M \\ i\ne j\ne k\ne i} } \bigg\< U_{R_0}(x_i-x_j,x_i-x_k) \bigg[ 1 - \prod_{\ell \neq i,j,k} \theta_{2{R_0}}\left( \frac{x_i+x_j+x_k}{3}-x_\ell\right) \bigg] \bigg\>_{\Gamma_{M,N}} \\
		&\le \frac{C}{R_0^6} \sum_{\substack{ 1\le i,j,k\le M \\ i\ne j\ne k\ne i} } \sum_{\ell \ne i,j,k} \Braket{ \chi_{4{R_0}}(x_i-x_j)\chi_{4{R_0}}(x_i-x_k)\chi_{4{R_0}}(x_i-x_\ell)}_{\Gamma_{M,N}} \le C_\eps M^4 {R_0^3}
	\end{align*}
	thence, from~\eqref{eq:soft_potential_first_step}, that
	\begin{multline}\label{eq:removing_4_body_cutoff-00}
		\frac{E(M,N)}{M} = \frac{1}{M} \pscal{ H_{M,N} }_{\Gamma_{M,N} } \\
		\ge \frac{1}{M}\bigg\< \sum_{i=1}^M (h_{\eps,s})_i + \frac{b_\eps}{6M^2} \left( 1 - \frac{C}{M^{1/2}{R_0}}\right) \sum_{\substack{ 1\le i,j,k\le M \\ i\ne j\ne k \ne i}} U_{R_0}(x_i-x_j, x_i-x_k) \bigg\>_{\Gamma_{M,N}} \\
		- C_{\eps,s} M {R_0^3}\,,
	\end{multline}
	where we used the notation $b_\eps = b_{\cM}(V)(1-\eps)^4$.
	Combining it with the simple upper bound $E(M,N) \le CM$, we get
	\begin{equation}\label{eq:UR-sb}
		\frac{b_\eps}{6M^3} \bigg\< \sum_{\substack{ 1\le i,j,k\le M \\ i\ne j\ne k \ne i}} U_{R_0}(x_i-x_j, x_i-x_k) \bigg\>_{\Gamma_{M,N}} \le C_{\eps,s}\,.
	\end{equation}
	Hence,~\eqref{eq:removing_4_body_cutoff-00} can be simplified to
	\begin{multline} \label{eq:soft-potential-main0}
		\frac{E(M,N)}{M} \ge \frac{1}{M}\bigg\< \sum_{i=1}^M (h_{\eps,s})_i + \frac{b_\eps}{6M^2} \sum_{\substack{ 1\le i,j,k\le M \\ i\ne j\ne k \ne i}} U_{R_0}(x_i-x_j, x_i-x_k) \bigg\>_{\Gamma_{M,N}}\\
		- C_{\eps,s} \left( \frac{1}{M^{1/2}{R_0}}+ M{R_0^3} \right).
	\end{multline}
	The optimal choice of ${R_0}$ is determined by $1/(M^{1/2}{R_0}) = M{R_0^3}$ ---namely,
	\[
		R_0 = M^{-3/8}\,.
	\]
	The condition~\eqref{eq:condition_R_0} is clearly satisfied. With this choice, we have
	\begin{multline} \label{eq:soft-potential-main1}
		\frac{E(M,N)}{M} \ge \frac{1}{M}\bigg\< \sum_{i=1}^M (h_{\eps,s})_i + \frac{b_\eps}{6M^2} \sum_{\substack{ 1\le i,j,k\le M \\ i\ne j\ne k \ne i}} U_{R_0}(x_i-x_j, x_i-x_k) \bigg\>_{\Gamma_{M,N}} \\
		- C_{\eps,s} M^{-1/8}\,.
	\end{multline}
	
	\bigskip
	\noindent{\bf Step~2.}
	The error term $M R_0^3$ in~\eqref{eq:soft-potential-main0} forbids to directly take $R_0 \sim N^{-\beta}$ and to conclude. Essentially, it means that~\eqref{eq:soft-potential-main0} is only useful in the dilute regime $R_0 \lesssim M^{-1/3}$. We now use the bosonic symmetry to reformulate the energy as the one of a system with fewer particles $M_1 \ll M$, broadening the range of the interaction for which the system is dilute. Using again the Dyson Lemma, we replace the potential $U_{R_0}$ by a softer one $U_{R_1}$ for $ R_0 \ll R_1 \ll M_1^{-1/3}$. Unlike in Step~1, here, we can only use a small fraction of the kinetic energy. For that reason, we replace the particle number $M$ by a smaller parameter $M_1$ in order to improve the error caused by the removal of the four\nobreakdash-body cut\nobreakdash-off. Keeping in mind that ${R_0} =M^{-3/8}$, the parameters $M_1$ and $R_1$ are chosen such that
	\begin{equation} \label{eq:con-R1-M1}
		M_1^{-1/3} \gg R_1\gg {R_0} \gg M_1^{-1/2}\,.
	\end{equation}
	
	To be precise, by the bosonic symmetry of $\Gamma_{M,N}$, the main term in~\eqref{eq:soft-potential-main1} can be written as
	\begin{align}\label{eq:soft-potential-main-2a}
		&\frac{1}{M}\bigg\< \sum_{i=1}^M (h_{\eps,s})_i + \frac{b_\eps}{6M^2} \sum_{\substack{ 1\le i,j,k\le M \\ i\ne j\ne k \ne i}} U_{R_0} (x_i-x_j, x_i-x_k) \bigg\>_{\Gamma_{M,N}} \nn \\
		&= \frac{1}{M_1} \bigg\< \sum_{i=1}^{M_1} (h_{\eps,s})_i + \frac{b_\eps (M-1)(M-2)}{6M^2 (M_1-1)(M_1-2)} \sum_{\substack{ 1\le i,j,k\le M_1 \\ i\ne j\ne k \ne i}} U_{R_0} (x_i-x_j, x_i-x_k) \bigg\>_{\Gamma_{M,N}} \nn\\
		&\ge \frac{1}{M_1}\bigg\< \sum_{i=1}^{M_1} (h_{\eps,s})_i + \frac{b_\eps}{6M_1^2} \sum_{\substack{ 1\le i,j,k\le M_1 \\ i\ne j\ne k \ne i}} U_{R_0} (x_i-x_j, x_i-x_k) \bigg\>_{\Gamma_{M,N}}\,.
	\end{align}
	Here, we used that
	\[
		\frac{(M-1)(M-2)}{M^2} \geq \frac{(M_1-1)(M_1-2)}{M_1^2} \ge 0 \quad \text{for all } M\geq M_1 \geq 2\,.
	\]
	Next, we take a small parameter $\delta \in (0,\eps^2/2)$ and apply the Dyson lemma to the potential $W= \delta^{-1} b_\eps M_1^{-2}U_{R_0} $, which is supported in $B(0,{R_0} )$ and has the scattering energy
	\[
		b_{\cM}(W) \ge \norm{W}_{L^1(\R^6)} - C \norm{W}_{L^{3/2}(\R^6)}^2 \ge \frac{b_\eps}{\delta M_1^2} \left( 1- \frac{C_{\eps,\delta}} {M_1^{2}{R_0} ^{4}} \right).
	\]
	Here, we used~\eqref{eq:born-M} in the latter estimate (the condition ${R_0} \gg M_1^{-1/2}$ ensures that the scattering energy of $W$ is well approximated by its first Born approximation).
	Thus, from~\eqref{eq:lem_gen_Dyson-simpler}, after multiplying both sides by $\delta$, we have the operator inequality on $L^2_s(\R^{3M_1})$
	\begin{multline*}
		\delta \sum_{i=1}^{M_1} p_i^2 + \frac{b_\eps}{6 M_1^2} \sum_{\substack{ 1\le i,j,k\le M_1 \\ i\ne j\ne k \ne i}} U_{R_0} (x_i-x_j, x_i-x_k) \\
		\geq \begin{multlined}[t]
			\frac{b_\eps }{6 M_1^2} \left(1- \frac{C_{\eps,\delta}}{ M_1^{2}{R_0} ^{4}} \right) \left( 1 - \frac{C {R_0} }{R_1}\right) \\
			\times \sum_{\substack{ 1\le i,j,k \leq M_1 \\ i\neq j \neq k \neq i } } U_{R_1}(x_i-x_j, x_i-x_k) \prod_{\ell \neq i,j,k} \theta_{2R_1}\left(\frac{x_i+x_j+x_k}{3}-x_\ell \right).
		\end{multlined}
	\end{multline*}
	Thanks to Lemma~\ref{lem:4_body_collision}, we can remove the four\nobreakdash-body cut\nobreakdash-off in the interaction term similarly to Step~1 ---namely,
	\begin{align*}
		&\sum_{\substack{ 1\le i,j,k\le M_1 \\ i\ne j\ne k\ne i} } \bigg\< U_{R_1}(x_i-x_j,x_i-x_k) \bigg[ 1 - \prod_{\ell \neq i,j,k} \theta_{2{R_1}}\left( \frac{x_i+x_j+x_k}{3}-x_\ell\right) \bigg] \bigg\>_{\Gamma_{M,N}} \\
		&\le \frac{C}{R_1^6} \sum_{\substack{ 1\le i,j,k\le M_1 \\ i\ne j\ne k\ne i} } \sum_{\ell \ne i,j,k} \braket{ \chi_{4R_1}(x_i-x_j)\chi_{4R_1}(x_i-x_k)\chi_{4R_1}(x_i-x_\ell) }_{\Gamma_{M,N}} \le C_\eps M_1^4 R_1^3\,.
	\end{align*}
	Hence,
	\begin{multline*}
		\frac{1}{M_1}\bigg\< \sum_{i=1}^{M_1} \delta p_i^2 + \frac{b_\eps}{6M_1^2} \sum_{\substack{ 1\le i,j,k\le M_1 \\ i\ne j\ne k \ne i}} U_{R_0} (x_i-x_j, x_i-x_k) \bigg\>_{\Gamma_{M,N}}\\
		\ge \frac{b_\eps}{6M_1^3} \left(1- \frac{C_{\eps,\delta}}{ M_1^{2}{R_0} ^{4}} \right) \left( 1 - \frac{C {R_0} }{R_1}\right) \bigg\< \sum_{\substack{ 1\le i,j,k\le M_1 \\ i\ne j\ne k \ne i}} U_{R_0} (x_i-x_j, x_i-x_k) \bigg\>_{\Gamma_{M,N}} \\
		- C_\eps M_1R_1^3\,.
	\end{multline*}
	From~\eqref{eq:p_h} and $\delta \in (0,\eps^2/2)$, we obtain $\delta p^2 \le \eps h_{\eps,s} + \delta C_\eps$. Inserting the latter bound in~\eqref{eq:soft-potential-main-2a}, we deduce from~\eqref{eq:soft-potential-main1} that
	\begin{multline}\label{eq:soft-potential-main-2b}
		\frac{E(M,N)}{M (1-\eps)} \ge \frac{1}{M_1}\bigg\< \sum_{i=1}^{M_1} (h_{\eps,s})_i \bigg\>_{\Gamma_{M,N}} \\
		+ \frac{b_\eps}{6M_1^2} \left(1- \frac{C_{\eps,\delta}}{M_1^{2}{R_0} ^{4}} \right) \left( 1 - \frac{C {R_0} }{R_1}\right) \bigg\< \sum_{\substack{ 1\le i,j,k\le M_1 \\ i\ne j\ne k \ne i}} U_{R_0} (x_i-x_j, x_i-x_k) \bigg\>_{\Gamma_{M,N}} \\
		- C_{\eps,s} M^{-1/8} - C_\eps M_1R_1^3 - C_\eps \delta\,.
	\end{multline}
	Combined with the simple bound $E(M,N) \le CM$, it gives an analogue of~\eqref{eq:UR-sb}:
	\[
		\frac{b_\eps}{6M_1^3} \bigg\< \sum_{\substack{ 1\le i,j,k\le M_1 \\ i\ne j\ne k \ne i}} U_{R_1}(x_i-x_j, x_i-x_k) \bigg\>_{\Gamma_{M,N}} \le C_{\eps,s}\,.
	\]
	Thus,~\eqref{eq:soft-potential-main-2b} can be simplified into
	\begin{multline*}
		\frac{E(M,N)}{M(1-\eps)} \ge \frac{1}{M_1}\bigg\< \sum_{i=1}^{M_1} (h_{\eps,s})_i + \frac{b_\eps}{6M_1^2} \sum_{\substack{ 1\le i,j,k\le M_1 \\ i\ne j\ne k \ne i}} U_{R_0} (x_i-x_j, x_i-x_k) \bigg\>_{\Gamma_{M,N}} \\
		- C_{\eps,s} M^{-1/8} - C_{\eps,\delta} \left( M_1R_1^3 + \frac{1}{M_1^2 {R_0} ^4} + \frac{{R_0} }{R_1}\right) -C_\eps \delta \,.
	\end{multline*}
	We can choose $M_1$ and $R_1$ such that
	\[
		M_1R_1^3= \frac{1}{M_1^2 R_0^4} = \frac{{R_0} }{R_1} = \left ( \left ( M_1R_1^3 \right)^2 \frac{1}{M_1^2 R^4} \left (\frac{{R_0} }{R_1} \right)^6 \right)^{1/9}= {R_0} ^{2/9},
	\]
	namely
	\[
		R_1={R_0} ^{7/9} \quad \text{and} \quad M_1= \frac{{R_0} ^{2/9}}{R_1^{3}} = R_1^{-19/7}\,.
	\]
	The condition~\eqref{eq:con-R1-M1} is clearly satisfied. With this choice of parameters, we also have $M^{-1/8} \ll R_1^{2/7}$, hence we arrive at
	\begin{multline*}
		\frac{E(M,N)}{M(1-\eps)} \ge \frac{1}{M_1}\bigg\< \sum_{i=1}^{M_1} (h_{\eps,s})_i + \frac{b_\eps}{6M_1^2} \sum_{\substack{ 1\le i,j,k\le M_1 \\ i\ne j\ne k \ne i}} U_{R_1} (x_i-x_j, x_i-x_k) \bigg\>_{\Gamma_{M,N}} \\
		- C_{\eps,s,\delta} R_1^{2/7} - C_{\eps} \delta\,.
	\end{multline*}

	\bigskip
	\noindent{\bf Step~3.}
	We can iterate the argument in Step~2 in order to reach softer potentials. Denote
	\begin{equation} \label{eq:RjMj-choice}
		R_j = R_{j-1}^{7/9} \quad \text{and} \quad M_j = R_j^{-19/7} \quad \text{for all } j =1, 2, \dots \,.
	\end{equation}
	Then, by induction, we can prove that, for every $J \in \N\setminus\{0\}$,
	\begin{multline}\label{eq:soft-potential-main-3}
		\frac{E(M,N)}{M (1-\eps)^{J}} \ge \frac{1}{M_J}\bigg\< \sum_{i=1}^{M_J} (h_{\eps,s})_i + \frac{b_\eps}{6M_J^2} \sum_{\substack{ 1\le i,j,k\le M_J \\ i\ne j\ne k \ne i}} U_{R_J}(x_i-x_j, x_i-x_k) \bigg\>_{\Gamma_{M,N}} \\
		- C_{\eps,s,\delta,J} R_{J} ^{2/7} - C_{\eps,J}\delta\,.
	\end{multline}
	Indeed, the case $J=1$ has been handled in Step~2 and the general case is very similar so we only mention some main estimates for the reader's convenience. Assuming that~\eqref{eq:soft-potential-main-3} holds for $J-1$, we can write by the bosonic symmetry that
	\begin{multline*}
		\frac{E(M,N)}{M (1-\eps)^{J-1}} + C_{\eps,s,\delta,J} R_{J-1} ^{2/7} + C_{\eps} \delta \nn \\
		\begin{aligned}
			&\ge \frac{1}{M_{J-1}}\bigg\< \sum_{i=1}^{M_{J-1}} (h_{\eps,s})_i + \frac{b_\eps}{6M_{J-1}^2} \sum_{\substack{ 1\le i,j,k\le M_{J-1} \\ i\ne j\ne k \ne i}} U_{R_{J-1}}(x_i-x_j, x_i-x_k) \bigg\>_{\Gamma_{M,N}} \nn \\
			&\ge \frac{1}{M_{J}}\bigg\< \sum_{i=1}^{M_{J}} (h_{\eps,s})_i + \frac{b_\eps}{6M_{J}^2} \sum_{\substack{ 1\le i,j,k\le M_{J} \\ i\ne j\ne k \ne i}} U_{R_{J-1}}(x_i-x_j, x_i-x_k) \bigg\>_{\Gamma_{M,N}}\,.
		\end{aligned}
	\end{multline*}
	Applying the Dyson lemma to $W= \delta b_\eps M_{J}^{-2}U_{R_{J-1}}$, we deduce from~\eqref{eq:lem_gen_Dyson-simpler} that
	\begin{align*}
		\delta \sum_{i=1}^{M_J} p_i^2 &+ \frac{b_\eps}{6 M_J^2} \sum_{\substack{ 1\le i,j,k\le M_J \\ i\ne j\ne k \ne i}} U_{R_{J-1}} (x_i-x_j, x_i-x_k) \\
		&\geq \begin{multlined}[t]
			\frac{b_\eps }{6 M_J^2} \left(1- \frac{C_{\eps,\delta}}{M_J^{2}R_{J-1}^{4}} \right) \left( 1 - \frac{C R_{J-1}}{R_{J}}\right) \\
			\times \sum_{\substack{ 1\le i,j,k \leq M_{J} \\ i\neq j \neq k \neq i } } U_{R_J}(x_i-x_j, x_i-x_k) \prod_{\ell \neq i,j,k} \theta_{2R_J}\left(\frac{x_i+x_j+x_k}{3}-x_\ell \right).
		\end{multlined}
	\end{align*}
	The four\nobreakdash-body cut\nobreakdash-off can be removed by Lemma~\ref{lem:4_body_collision}, leading to
	\begin{align*}
		&\sum_{\substack{ 1\le i,j,k\le M_J \\ i\ne j\ne k\ne i} } \bigg \< U_{R_J}(x_i-x_j,x_i-x_k)\bigg[ 1 - \prod_{\ell \neq i,j,k} \theta_{2{R_J}}\left( \frac{x_i+x_j+x_k}{3}-x_\ell\right) \bigg] \bigg\>_{\Gamma_{M,N}} \\
		&\le \frac{C}{R_J^6} \sum_{\substack{ 1\le i,j,k\le M_J \\ i\ne j\ne k\ne i} } \sum_{\ell \ne i,j,k} \braket{ \chi_{4R_J}(x_i-x_j)\chi_{4R_J}(x_i-x_k)\chi_{4R_J}(x_i-x_\ell) }_{\Gamma_{M,N}} \le C_\eps M_J^4 R_J^3\,.
	\end{align*}
	Moreover, combining it with the simple upper bound $E(M,N) \le CM$, we get
	\begin{equation} \label{eq:upper-mJ}
		\frac{b_\eps}{6M_J^3} \bigg\< \sum_{\substack{ 1\le i,j,k\le M_J \\ i\ne j\ne k \ne i}} U_{R_J}(x_i-x_j, x_i-x_k) \bigg\>_{\Gamma_{M,N}} \le C_{\eps,s}\,.
	\end{equation}
	Hence,
	\begin{multline*}
		\frac{1}{M_J}\bigg\< \delta \sum_{i=1}^{M_J} p_i^2 + \frac{b_\eps}{6 M_J^2} \sum_{\substack{ 1\le i,j,k\le M_J \\ i\ne j\ne k \ne i}} U_{R_{J-1}} (x_i-x_j, x_i-x_k) \bigg\>_{\Gamma_{M,N}} \\
		\ge \frac{b_\eps}{6 M_J^3} \bigg\< \sum_{\substack{ 1\le i,j,k\le M_J \\ i\ne j\ne k \ne i}} U_{R_{J}} (x_i-x_j, x_i-x_k) \bigg\>_{\Gamma_{M,N}} \\
		- C_{\eps,s,\delta} \left( M_JR_J^3 + \frac{C R_{J-1}}{R_{J}} + \frac{1}{M_J^{2}R_{J-1}^{4}} \right).
	\end{multline*}
	Thus, using again that $\delta p^2 \le \eps h_{\eps,s} + \delta C_\eps$, we obtain
	\begin{multline*}
		\frac{E(M,N)}{M (1-\eps)^{J}}
		\ge \frac{1}{M_{J}}\bigg\< \sum_{i=1}^{M_{J-1}} (h_{\eps,s})_i + \frac{b_\eps}{6M_{J}^2} \sum_{\substack{ 1\le i,j,k\le M_{J} \\ i\ne j\ne k \ne i}} U_{R_{J}}(x_i-x_j, x_i-x_k) \bigg\>_{\Gamma_{M,N}} \\
		- C_{\eps,s,\delta,J} R_{J-1} ^{2/7} - C_{\eps,s,\delta} \left( M_JR_J^3 + \frac{1}{M_J^{2}R_{J-1}^{4}} + \frac{ R_{J-1}}{R_{J}} \right) - C_{\eps} \delta\,.
	\end{multline*}
	With the choice in~\eqref{eq:RjMj-choice}, we have
	\[
		M_JR_J^3 = \frac{1}{M_J^{2}R_{J-1}^{4}}= \frac{R_{J-1}}{R_{J}} =R_{J-1}^{2/9} = R_J^{2/7}
	\]
	and the desired estimate~\eqref{eq:soft-potential-main-3} follows.

	\bigskip
	\noindent {\bf Step~4.}
	To conclude, we choose $J \in \N\setminus\{0\}$ depending only on $\beta$ such that
	\[
		\frac{\beta}{2} < \frac{3}{8} \left(\frac{7}{9}\right)^{J} \le \beta\,.
	\]
	Then,
	\[
		N^{-\beta/2}\gg R_J = (R_0)^{\left(\frac{7}{9}\right)^{J} } = M^{- \frac{3}{8} \left(\frac{7}{9}\right)^{J}} \ge N^{-\beta}\,.
	\]
	From~\eqref{eq:soft-potential-main-3} and~\eqref{eq:upper-mJ}, we have
	\begin{align*}
		&\frac{E(M,N)}{M (1-\eps)^{J}} \\
		&\ge \begin{multlined}[t][.95\textwidth]
			\frac{1}{M_J}\bigg\< \sum_{i=1}^{M_J} (h_{\eps,s})_i + \frac{b_\eps}{6M_J^2} \sum_{\substack{ 1\le i,j,k\le M_J \\ i\ne j\ne k \ne i}} U_{R_J}(x_i-x_j, x_i-x_k) \bigg\>_{\Gamma_{M,N}} \\
			- C_{\eps,s,\delta,J}R_{J} ^{2/7} - C_{\eps} \delta
		\end{multlined}\\
		&= \begin{multlined}[t][.95\textwidth]
			\frac{1}{M}\bigg\< \sum_{i=1}^{M} (h_{\eps,s})_i + \frac{b_\eps (M_J-1)(M_J-2)}{6M_J^2 (M-1)(M-2)} \sum_{\substack{ 1\le i,j,k\le M_J \\ i\ne j\ne k \ne i}} U_{R_J}(x_i-x_j, x_i-x_k) \bigg\>_{\Gamma_{M,N}}\\
			- C_{\eps,s,\delta,J} R_{J} ^{2/7} - C_{\eps} \delta
		\end{multlined} \\
		&\ge \begin{multlined}[t][.95\textwidth]
			\frac{1}{M}\bigg\< \sum_{i=1}^{M} (h_{\eps,s})_i + \frac{b_{\cM}(V)}{6(M-1)(M-2)} \sum_{\substack{ 1\le i,j,k\le M \\ i\ne j\ne k \ne i}} U_{R_J}(x_i-x_j, x_i-x_k) \bigg\>_{\Gamma_{M,N}} \\
			- C_{\eps,s,\delta,J} R_{J} ^{2/7} - \varepsilon C_J - C_{\eps} \delta\,,
		\end{multlined}
	\end{align*}
	where we used that $1-\frac{(M_J-1)(M_J-2)}{M_J^2} \leq C M_J^{-1}$ and $|b_\eps - b_{\cM}(V) | \leq C\varepsilon$, together with the estimate
	\begin{multline*}
		\bigg\< \frac{1}{6(M-1)(M-2)} \sum_{\substack{ 1\le i,j,k\le M_J \\ i\ne j\ne k \ne i}} U_{R_J}(x_i-x_j, x_i-x_k) \bigg\>_{\Gamma_{M,N}} \\
		\le C \braket{U_{R_J}(x_1-x_2, x_1-x_3)}_{\Gamma_{M,N}} \leq C_J \left( \frac{E(M,N)}{M} + 1 \right) \leq C_{J}\,.
	\end{multline*}
	Since $N\ge M \ge N(1-\eps)$ and $CN \ge E_N \ge E(M,N)$, we have the immediate bound
	\[
		\frac{E(M,N)}{M (1-\eps)^{J}} \le \frac{E_N}{N (1-\eps)^{J+1}} \le \frac{E_N}{N} + \eps C_J\,.
	\]
	The desired conclusion of Lemma~\ref{lem:softer} follows.
\end{proof}

\section{Conclusion of the energy lower bound} \label{sec:lower}
	Recalling that
\[
	\GPnrg = \inf\limits_{\norm{u}_2=1} \Egp(u) = \inf\limits_{\norm{u}_2=1} \left\{ \pscal{ u, hu } + \frac{b_{\cM}(V)}{6} \int_{\R^3} |u|^6 \right\},
\]
as defined in~\eqref{eq:NLS-GSE}--\eqref{eq:NLS}, this section is devoted to the proof of the following theorem.
\begin{theorem}[Energy lower bound]\label{theo:lower_bound}
	Suppose that $\Vext $ and $A$ satisfy~\eqref{eq:Vext} and~\eqref{eq:A}. Then,
\[
		\lim\limits_{N\to\infty} \frac{E_N}{N} \geq \GPnrg\,.
\]
\end{theorem}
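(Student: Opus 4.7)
The plan is to apply Lemma~\ref{lem:softer} to reduce to a many-body problem with a soft interaction, then to pass to the mean-field limit in the spirit of~\cite{LewNamRou-16} in order to recover the NLS functional $\Egp$. Since Lemma~\ref{lem:softer} requires the auxiliary condition~\eqref{eq:cond_A_V} on $A$, I would first reduce to this case by truncating $A$ outside a ball of radius $L$: the resulting $A_L$ satisfies~\eqref{eq:cond_A_V}, and using~\eqref{eq:A} together with the trapping~\eqref{eq:Vext} and an IMS-type localization one shows that the corresponding many-body energy agrees with $E_N$ up to an error that vanishes as $L \to \infty$ uniformly in $N$. We may therefore assume~\eqref{eq:cond_A_V} throughout and recover the general case by sending $L \to \infty$ at the very end.

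Fix small parameters $\beta, \eps, \delta > 0$ and a large $s > 0$, all independent of $N$. Lemma~\ref{lem:softer} yields an integer $M \in [(1-\eps)N, N]$ and a radius $R \in [N^{-\beta}, N^{-\beta/2}]$ such that
\[
	\frac{E_N}{N} \geq \frac{M}{N}\cdot \frac{1}{M}\inf\sigma_{L^2_s (\R^{3M})} \mathcal{H}_{M,R} - C_{\eps,s,\beta,\delta} R^{2/7} - \eps C_\beta - \delta C_\eps\,,
\]
where $\mathcal{H}_{M,R}$ denotes the soft Hamiltonian appearing on the right-hand side of Lemma~\ref{lem:softer}, built from $h_{\eps,s}$ and from the three-body potential $U_R$. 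Crucially, $U_R$ is nonnegative, satisfies the symmetry~\eqref{eq:sym}, obeys $\norm{U_R}_\infty \leq C R^{-6} \leq C N^{6\beta}$, is supported in a ball of radius $O(R)$ in $\R^6$, and has $\int_{\R^6} U_R = 1$. The error $C_{\eps,s,\beta,\delta} R^{2/7}$ is $o_N(1)$ for any fixed $\beta > 0$.

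The key mean-field step is to prove that, for $\beta$ small enough (depending on $\eps$),
\[
	\liminf_{N\to\infty} \frac{1}{M}\inf\sigma_{L^2_s (\R^{3M})} \mathcal{H}_{M,R} \geq \inf_{\norm{u}_2 = 1} \mathcal{F}_{\eps, s}(u)\,, \qquad \mathcal{F}_{\eps, s}(u) := \pscal{u, h_{\eps, s} u} + \frac{b_{\cM}(V)}{6}\int_{\R^3} |u|^6\,.
\]
I would do this via a quantitative bosonic quantum de Finetti theorem adapted to three-body interactions. Two ingredients enable it: first, the coercivity~\eqref{eq:p_h}, which leaves a positive fraction $\eps/2$ of kinetic energy per particle; second, an operator inequality of the form $U_R(x-y, x-z) \leq C N^{6\beta}(1+p_x^2)(1+p_y^2)(1+p_z^2)$ in the spirit of~\eqref{eq:intro-3-body-operator-inequality-imp}, which for $\beta$ small lets the interaction be absorbed into an arbitrarily small fraction of the available kinetic energy. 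Evaluated on factorized states $u^{\otimes M}$, the interaction reads $\frac{b_{\cM}(V)}{6}\iiint U_R(x-y, x-z)|u(x)u(y)u(z)|^2 \d x \d y \d z$, which converges, on $H^1$-bounded sets, to $\frac{b_{\cM}(V)}{6}\int |u|^6$ as $N\to\infty$, since $U_R$ is a nonnegative approximate identity on $\R^6$ with unit mass and shrinking support.

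Combining these two steps with $M/N \to 1$ gives, for any fixed small $\beta, \eps, \delta$ and large $s$,
\[
	\liminf_{N\to\infty} \frac{E_N}{N} \geq \inf_{\norm{u}_2=1} \mathcal{F}_{\eps,s}(u) - \eps C_\beta - \delta C_\eps\,.
\]
Letting first $\delta \to 0$, then $\eps \to 0$ and $s \to \infty$, the closed quadratic forms $h_{\eps,s}$ increase monotonically to $h$, so $\inf \mathcal{F}_{\eps, s} \nearrow \GPnrg$ by monotone convergence. Sending $L \to \infty$ removes the auxiliary truncation of $A$ and completes the proof. The main obstacle is the mean-field step itself: implementing the de Finetti comparison despite $\norm{U_R}_\infty$ diverging like $N^{6\beta}$. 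This is possible precisely because Lemma~\ref{lem:softer}, through its bootstrap structure, grants the freedom of choosing $\beta$ arbitrarily small, so that the interaction can be dominated by a small fraction of the kinetic energy even when $\eps$ itself is small.
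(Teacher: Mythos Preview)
Your proposal is correct and follows essentially the same route as the paper: reduce to the auxiliary condition~\eqref{eq:cond_A_V} (the paper defers this to~\cite[Sect.~4B]{NamRouSei-16}), invoke Lemma~\ref{lem:softer}, carry out the mean-field step via a quantitative de Finetti argument with a finite-dimensional spectral cut-off on $h_{\eps,s}$, and finally send $\delta\to 0$, $s\to\infty$, $\eps\to 0$. The paper makes the mean-field step more explicit---introducing the projection $P=\1(\widetilde h_{\eps,s}\le L)$ with $\Tr P\le C_\eps L^q$, splitting $U_R$ by Cauchy--Schwarz into a $P^{\otimes 3}$ part and a remainder controlled by $\norm{U_R}_\infty\le CR^{-6}$, and then optimizing $L=N^{1/(q+2)}$ with $\beta<1/(6(q+2))$---but your sketch identifies exactly the right ingredients and the right reason the argument closes (the freedom to take $\beta$ arbitrarily small).
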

\begin{proof}
	We prove Theorem~\ref{theo:lower_bound} with the extra assumption that $A$ satisfies~\eqref{eq:cond_A_V}. This assumption can be removed, at the end, following an argument of~\cite[Sect.~4B]{NamRouSei-16} that we omit here.
	
	Recall that $h_{\eps,s} = h - (1-\eps) \1_{\{ |p| > s\}} p^2$. For all $\varepsilon>0$, there exist $C_{\eps}, c_0 >0$ such that, on $L^2(\R^3)$, we have
	\[
		h_{\eps,s} \ge \frac{\eps}{2} p^2 + \Vext(x) - C\eps^{-1} |A(x)|^2 \geq \frac{\eps}{4} p^2 + \frac{1}{2}\Vext(x) - C_{\eps} \ge \frac{\eps}{4} p^2 + c_0 |x|^{\alpha} - C_{\eps}\,.
	\]
	Let us therefore define $\widetilde{h}_{\eps,s} = h_{\eps,s} - \kappa_{\eps,s}$, where $\kappa_{\eps,s} := \inf \sigma (h_{\eps,s}) - 1 $. Thanks to the Lieb--Thirring inequality in~\cite[Theorem 3]{DolFelLosPat-06}, we have
	\[
		\Tr ( (-\Delta+|x|^\alpha +1)^{-q}) \le \int_{\R^3}\int_{\R^3} \frac{1}{(|2\pi k|^2 + |x|^\alpha +1)^q} \d k \d x <\infty
	\]
	for any $q>q_*(\alpha):=\frac32+\frac3\alpha$. Thus, $\Tr (\widetilde{h}_{\eps,s}^{-q}) \le C_{\eps}$.
	Therefore, if we introduce the projections
	\[
		P = \1(\widetilde{h}_{\eps,s}\le L), \quad Q= 1- P = \1(\widetilde{h}_{\eps,s}> L)
	\]
	for some parameter $L>0$, then we have
	\begin{equation} \label{eq:TrP}
		\Tr P = \Tr \1(\widetilde{h}_{\eps,s}\le L) \le \Tr \left ( \frac{L^q}{\widetilde{h}_{\eps,s}^q}\right) \le C_\eps L^q\,.
	\end{equation}
	
	Let $\beta>0$ be small (depending on $q$). In view of Lemma~\ref{lem:softer}, we consider the Hamiltonian
	\begin{equation} \label{eq:HM-final-def}
		\widetilde{H}_{M}:=\sum_{i=1}^{M} (\widetilde{h}_{\eps,s})_i + \frac{b_{\cM}(V)}{6(M-1)(M-2)} \sum_{\substack{ 1\le i,j,k\le M \\ i\ne j\ne k \ne i}} U_{R}(x_i-x_j, x_i-x_k)
	\end{equation}
	for some $N\ge M \ge (1-\eps)N$ and $N^{-\beta/2} \ge R\ge N^{-\beta}$. Recall that $U_R=R^{{-}6} U(R^{{-1}}\cdot)$ for a fixed function $0\le U\in L^\infty(\R^6)$ satisfying the three\nobreakdash-body symmetry~\eqref{eq:sym} and $\int_{\R^6} U =1$.
	
	Let $\Psi_{M}$ be a ground state for $\widetilde{H}_{M}$ and let us denote $\gamma_M = \KetBra{\Psi_M}{\Psi_M}$. Its $k$\nobreakdash-body density matrix is the operator
	\[
		\gamma_M^{(k)} := \Tr_{k+1\to M} [\gamma_M]
	\]
	on $L^2_s(\R^{3k})$ with kernel
	\begin{multline*}
		\gamma_M^{(k)} (x_1,\ldots,x_k; y_1,\ldots,y_k) = \int_{\R^{3(M-k)}} \Psi_M (x_1,\ldots,x_k, x_{k+1},\ldots,x_M) \\
		\times\overline{\Psi_M (y_1,\ldots,y_k, x_{k+1},\ldots,x_M) } \d x_{k+1}\ldots \d x_M\,.
	\end{multline*}
	Thus, $\gamma_M^{(k)} \ge 0$ and $\Tr \gamma_M^{(k)} = 1$.
	Using this notation, we can write
	\[
		\frac{\widetilde{E}_{M}}{M} := \frac{1}{M} \inf \sigma_{L^2_s(\R^{3M})} \widetilde{H}_{M} = \frac{\pscal{\Psi_M, \widetilde{H}_{M} \Psi_M }}{M} = \Tr \left[\widetilde{h}_{\eps,s} \gamma_M^{(1)}\right] + \frac{b_{\cM}(V)}{6} \Tr \left[ U_R \gamma_M^{(3)}\right].
	\]
	Notice that, since $U_R\geq0$ and $\gamma_M^{(3)}\geq0$, this implies in particular that
	\begin{equation}\label{Trace_one_body_bounded_above}
		\Tr \left[\widetilde{h}_{\eps,s} \gamma_M^{(1)}\right] \leq \frac{\widetilde{E}_{M}}{M} \leq C_\eps\,,
	\end{equation}
	for some constant $C_\eps>0$ independent of $N$.
	
	Now let us impose the finite dimensional cut\nobreakdash-off $P$ and use the quantitative quantum de Finetti theorem. We want to replace $\gamma_{M}^{(3)}$ by some $\tilde\gamma_M^{(3)}$ satisfying $\tilde\gamma_M^{(3)}=P^{\otimes3}\tilde\gamma^{(3)}_MP^{\otimes3}$ that will be chosen later using the quantum de Finetti theorem. We bound from below the one\nobreakdash-body term as follows
	\begin{multline}\label{lwbd_onebody_term}
		3 \Tr \left[\widetilde{h}_{\eps,s} \gamma_M^{(1)}\right] = \Tr \left[\left(\widetilde{h}_1+\widetilde{h}_2+\widetilde{h}_3\right) \gamma_M^{(3)}\right] \\
		\begin{aligned}[b]
			&\geq \Tr \left[\left(\sum\limits_{i=1}^3 \widetilde{h}_i\right) P^{\otimes3}\gamma_M^{(3)}P^{\otimes3}\right] \\
			& \hphantom{\frac{1}{1+\eps}} ={} \Tr \left[\left(\sum\limits_{i=1}^3 \widetilde{h}_i\right) \tilde\gamma_M^{(3)}\right] + \Tr \left[\left(\sum\limits_{i=1}^3 \widetilde{h}_i\right) \left( P^{\otimes3}\gamma_M^{(3)}P^{\otimes3}-\tilde\gamma_M^{(3)}\right)\right] \\
			&\geq \frac{1}{1+\eps} \Tr \left[\left(\sum\limits_{i=1}^3 \widetilde{h}_i\right) \tilde\gamma_M^{(3)}\right] + \Tr \left[\left(\sum\limits_{i=1}^3 \widetilde{h}_i\right) \left( P^{\otimes3}\gamma_M^{(3)}P^{\otimes3}-\tilde\gamma_M^{(3)}\right)\right],
		\end{aligned}
	\end{multline}
	for any $\tilde\gamma_M^{(3)}=P^{\otimes3}\tilde\gamma^{(3)}_MP^{\otimes3}$ and where we used the shortened notation $\widetilde{h}_i:=(\widetilde{h}_{\eps,s})_i$.
	
	For the three\nobreakdash-body term, we apply the Cauchy--Schwarz inequality, with $0<\eps<1$, as follows
	\begin{align*}
		U_R &= \left(\1^{\otimes3}-P^{\otimes3}+P^{\otimes3}\right) U_R \left(\1^{\otimes3}-P^{\otimes3}+P^{\otimes3}\right) \\
			&= \begin{multlined}[t]
				P^{\otimes3} U_R P^{\otimes3} + \left(\1^{\otimes3}-P^{\otimes3}\right) U_R P^{\otimes3} + P^{\otimes3} U_R \left(\1^{\otimes3}-P^{\otimes3}\right) \\
				+ \left(\1^{\otimes3}-P^{\otimes3}\right) U_R \left(\1^{\otimes3}-P^{\otimes3}\right)
			\end{multlined} \\
			&\geq \left(1-\frac{\eps}{1+\eps}\right) P^{\otimes3} U_R P^{\otimes3} + \left(1-\frac{1+\eps}{\eps}\right) \left(\1^{\otimes3}-P^{\otimes3}\right) U_R \left(\1^{\otimes3}-P^{\otimes3}\right) \\
			&\geq \frac{1}{1+\eps} P^{\otimes3} U_R P^{\otimes3} - \eps^{-1} \frac{C}{R^6} \left(\1^{\otimes3}-P^{\otimes3}\right),
	\end{align*}
	where we used for the second inequality that
	\[
		\left(\1^{\otimes3}-P^{\otimes3}\right) U_R \left(\1^{\otimes3}-P^{\otimes3}\right) \leq \norm{U_R}_\infty \left(\1^{\otimes3}-P^{\otimes3}\right)^2 \leq \frac{C}{R^6} \left(\1^{\otimes3}-P^{\otimes3}\right).
	\]
	Moreover, since
	\begin{align*}
		\1^{\otimes3} &= Q\otimes\1^{\otimes2} + P\otimes\1^{\otimes2} =: Q_1 + P\otimes\1^{\otimes2} \\
			&= Q_1 + P\otimes Q\otimes\1 + P\otimes P\otimes Q + P^{\otimes3} \leq Q_1 + Q_2 + Q_3 + P^{\otimes3}\,,
	\end{align*}
	we have
	\[
		U_R \geq \frac{1}{1+\eps} P^{\otimes3} U_R P^{\otimes3} - \eps^{-1} \frac{C}{R^6} \left(Q_1 + Q_2 + Q_3\right).
	\]
	Recall that $\gamma_M^{(3)} := \Tr_{4\to M} [\gamma_M]$ and that $\tilde\gamma_M^{(3)}=P^{\otimes3}\tilde\gamma^{(3)}_MP^{\otimes3}$ by assumption. Now using that
	\begin{align*}
		\Tr \left[ P^{\otimes3} U_R P^{\otimes3} \gamma_M^{(3)} \right] &= \Tr \left[ U_R P^{\otimes3} \gamma_M^{(3)} P^{\otimes3} \right] \\
			&= \Tr \left[ U_R \tilde\gamma_M^{(3)} \right] + \Tr \left[ U_R \left( P^{\otimes3} \gamma_M^{(3)} P^{\otimes3} - \tilde\gamma_M^{(3)}\right) \right]
	\end{align*}
	and
	\[
		\Tr \left[ \left(Q_1 + Q_2 + Q_3\right) \gamma_M^{(3)} \right] = 3\Tr \left[ Q \gamma_M^{(1)} \right] \leq 3 \Tr \left[ \frac{\widetilde{h}_{\eps,s}}{L} \gamma_M^{(1)} \right] \leq \frac{3}{L} \frac{\widetilde{E}_{M}}{M}
	\]
	by~\eqref{Trace_one_body_bounded_above}, we obtain
	\begin{equation}\label{lwbd_threebody_term}
		\Tr \left[ U_R \gamma_M^{(3)}\right] \geq \frac{1}{1+\eps} \Tr \left[ U_R \tilde\gamma_M^{(3)} \right] + \frac{1}{1+\eps}\Tr \left[ U_R \left( P^{\otimes3} \gamma_M^{(3)} P^{\otimes3} - \tilde\gamma_M^{(3)}\right) \right] - \frac{3C}{\eps L R^6} \frac{\widetilde{E}_{M}}{M}\,.
	\end{equation}
	
	Combining~\eqref{lwbd_onebody_term} and~\eqref{lwbd_threebody_term}, we have
	\begin{multline*}
		\frac{\widetilde{E}_{M}}{M} \geq \frac{1}{1+\eps} \Tr \left[ \left(\frac{\widetilde{h}_1+\widetilde{h}_2+\widetilde{h}_3}{3} + \frac{b_{\cM}(V)}{6} U_R \right) \tilde\gamma_M^{(3)}\right] \\
		+ \frac{1}{3} \Tr \left[\left(\sum\limits_{i=1}^3 \widetilde{h}_i\right) \left( P^{\otimes3}\gamma_M^{(3)}P^{\otimes3}-\tilde\gamma_M^{(3)}\right)\right] \\
		+ \frac{1}{1+\eps} \frac{b_{\cM}(V)}{6} \Tr \left[ U_R \left( P^{\otimes3} \gamma_M^{(3)} P^{\otimes3} - \tilde\gamma_M^{(3)}\right) \right] - \frac{C}{\eps L R^6} \frac{\widetilde{E}_{M}}{M}\,,
	\end{multline*}
	hence
	\begin{multline*}
		\left(1+\frac{C}{\eps L R^6}\right)\frac{\widetilde{E}_{M}}{M} \geq \frac{1}{1+\eps} \Tr \left[ \left(\frac{\widetilde{h}_1+\widetilde{h}_2+\widetilde{h}_3}{3} + \frac{b_{\cM}(V)}{6} U_R \right) \tilde\gamma_M^{(3)}\right] \\
		- \left(\frac{\normt{P^{\otimes3}(\widetilde{h}_1+\widetilde{h}_2+\widetilde{h}_3)P^{\otimes3}}}{3} + \frac{C}{R^6} \right) \Tr \left|P^{\otimes3} \gamma_M^{(3)} P^{\otimes3} - \tilde\gamma_M^{(3)}\right|,
	\end{multline*}
	for any $\tilde\gamma_M^{(3)}=P^{\otimes3}\tilde\gamma^{(3)}_MP^{\otimes3}$. We will now use the quantum de Finetti theorem to find such a $\tilde\gamma_M^{(3)}$ to approximate $P^{\otimes3} \gamma_M^{(3)} P^{\otimes3}$ in trace norm. We recall its formulation~\cite[Theorem 3.1]{LewNamRou-16} for the convenience of the reader.
	\begin{theorem}[\textbf{Quantitative quantum de Finetti theorem in finite dimension}]\label{thm:CKMR}
		Let $\cK$ be a finite dimensional Hilbert space and $k\in\N\setminus\{0\}$. For every state $G_k$ on $\cK^k:=\bigotimes_s^k\cK$ and every $p = 1, 2, \dots, k$, we have
		\[
			\Tr_{\cK^p} \left| \Tr_{p+1\to k} [G_k] - \int_{S\cK} \ketbra{ u^{\otimes p} }{ u^{\otimes p} } \d\mu_{G_k} (u) \right| \le \frac{4p \dim \cK}{k} \Tr[ G_k ]\,,
		\]
		where
		\[
			\d\mu_{G_k}(u) :=\dim \cK^k \pscal{u^{\otimes k},G_k u^{\otimes k}} \d u
		\]
		with $\d u$ being the normalized uniform (Haar) measure on the unit sphere $S\cK$.
	\end{theorem}
	Following exactly~\cite[Lemma 3.4]{LewNamRou-16}, which deals with two\nobreakdash-body density matrices, we derive now a localized version for the three\nobreakdash-body density matrices.

	Let $\gH:=L^2(\R^3)$ and define the notations $P_-:=P$ and $P_+:=Q$. Let $\gamma_{N}$ be an arbitrary $N$\nobreakdash-body (mixed) state.
	Then, there exist localized states $G_N^-=G_N$ and $G_N^+=G_N^\perp$ (we use both notations) in the Fock space $\cF(\gH)=\C\oplus\gH\oplus\gH^2\oplus\cdots$ of the form
	\[
		G_N^{\pm} = G_{N,0}^ {\pm} \oplus G_{N,1}^ {\pm} \oplus\cdots\oplus G_{N,N}^ {\pm} \oplus0\oplus\cdots
	\]
	whose reduced density matrices satisfy, for any $0 \leq n \leq N$,
	\begin{equation}\label{eq:localized-DM}
		P_{\pm}^{\otimes n} \gamma^{(n)}_{N} P_{\pm}^{\otimes n} = \left(G_N^{\pm}\right)^{(n)} = \binom{N}{n}^{-1} \sum_{k=n}^N \binom{k}{n}\Tr_{n+1\to k}[G^{\pm}_{N,k}]\,.
	\end{equation}
	As reminded in the aforementioned paper, the relations~\eqref{eq:localized-DM} determine uniquely the localized states $G_N$ and $G_N^\perp$ and ensure that they are (mixed) states on the Fock spaces $\cF (P\gH)$ and $\cF (Q\gH)$, respectively:
	\begin{equation}\label{eq:nomalization-localized-state}
		\sum_{k=0}^N \Tr[ G_{N,k}] = \sum_{k=0}^N \Tr [ G_{N,k}^\perp]=1\,.
	\end{equation}

	We now apply the quantitative de Finetti Theorem~\ref{thm:CKMR} to $\cK=P\gH$, $G_k=G_{N,k}$, and $p=3$. We obtain the following Lemma, already proven in~\cite[Lemma 3.2]{LunRou-15} (see also~\cite[Theorem 3.2]{Girardot-20} for an improved version). We prove it here for the convenience of the reader.
	\begin{lemma}[\textbf{Quantitative quantum de Finetti for the localized state}] \label{deF_localized_state_3body}
		Let $\gamma_{N}$ be an arbitrary $N$\nobreakdash-body (mixed) state. Then,
		\begin{equation} \label{eq:def-mu-N-localized-bound}
			\Tr_{\gH^3} \left| P^{\otimes3} \gamma_{N}^{(3)} P^{\otimes3} - \int_{SP\gH} \KetBra{ u^{\otimes3} }{ u^{\otimes3} } \d\mu_N(u)\right| \leq \frac{12 \Tr P}{N}\,,
		\end{equation}
		where
		\begin{equation} \label{eq:def-mu-N-localized}
			\d\mu_N(u) = \sum_{k=3}^N \frac{k! (N-3)!}{N! (k-3)!} \d\mu_{N,k}(u)\,, \quad \d\mu_{N,k}(u) = \dim (P\gH)_s^k \pscal{u^{\otimes k}, G_{N,k} u^{\otimes k}} \d u\,.
		\end{equation}
	\end{lemma}
	\begin{proof}
		The proof follows the lines of the one in~\cite[Lemma 3.4]{LewNamRou-16}. Applying Theorem~\ref{thm:CKMR} to $\cK=P\gH$, $G_k=G_{N,k}$, and $p=3$, we have
		\begin{multline*}
			\Tr_{(P\gH)^3} \left| \Tr_{4\to k}[G_{N,k}] - \int_{SP\gH} \KetBra{ u^{\otimes3} }{ u^{\otimes3} } \d\mu_{N,k} (u) \right| \\
			\leq \frac{12 \dim(P\gH)}{k} \Tr[G_{N,k}] = \frac{12 \Tr P}{k} \Tr[G_{N,k}]\,,
		\end{multline*}
		where we notice that, on the left-hand side, we can replace $\Tr_{(P\gH)^3}$ by $\Tr_{\gH^3}$.
		Combining this and~\eqref{eq:localized-DM}, the triangle inequality gives
		\begin{multline*}
			\Tr_{\gH^3} \left| P^{\otimes3} \gamma^{(3)}_{N} P^{\otimes3} - \int_{SP\gH} \KetBra{ u^{\otimes3} }{ u^{\otimes3} } \d\mu_N (u) \right| \\
			\leq \sum_{k=3}^N \binom{N}{3}^{-1} \binom{k}{3} \frac{12 \Tr P}{k} \Tr[G_{N,k}] \leq 12 \frac{\Tr P}{N} \sum_{k=3}^N \Tr[G_{N,k}] \leq 12 \frac{\Tr P}{N}\,,
		\end{multline*}
		where the last inequality is due to~\eqref{eq:nomalization-localized-state}.
	\end{proof}
	
	We return to our lower-bound and apply Lemma~\ref{deF_localized_state_3body} to $\gamma_M$. We therefore choose
	\[
		\tilde\gamma_M^{(3)} := \int_{SP\gH} \KetBra{ u^{\otimes3} }{ u^{\otimes3} } \d\mu_M(u)\,,
	\]
	in our ongoing lower bound. It obviously satisfies $P^{\otimes3} \tilde\gamma_M^{(3)} P^{\otimes3} = \tilde\gamma_M^{(3)}$ as required. This yields
	\begin{align}
		\left(1+\frac{C_\eps}{L R^6}\right)\frac{\tilde{E}_{M}}{M} &\geq \frac{(1-\eps)^4}{1+\eps} \int_{SP\gH} \cE_{\eps,s,R}(u) \d\mu_M(u) - C_\eps \left(L + R^{-6} \right) \frac{L^q}{M} \nn \\
			&\geq {e}_{\eps,s,R} \frac{(1-\eps)^4}{1+\eps} \int_{SP\gH}\d\mu_M(u) - C_\eps \left(L + R^{-6} \right) \frac{L^q}{M}\,, \label{eq:LB_EM}
	\end{align}
	where we defined the functional $\cE_{\eps,s,R}$ and the associated groundstate energy $\tilde{e}_{\eps,s,R}$ by
	\[
		e_{\eps,s,R} := \inf\limits_{S\gH} \cE_{\eps,s,R}(u) := \inf\limits_{S\gH} \left( \pscal{u, \widetilde{h}_{\eps,s} u} + \frac{b_{\cM}(V)}{6} \pscal{u^{\otimes3}, U_R u^{\otimes3}} \right)
	\]
	and we used that
	\[
		\Tr \left[ \left(\frac{\widetilde{h}_1+\widetilde{h}_2+\widetilde{h}_3}{3} + \frac{b_{\cM}(V)}{6} U_R \right) \tilde\gamma_M^{(3)}\right] \geq (1-\eps)^4 \int_{SP\gH} \cE_{\eps,R}(u) \d\mu_M(u)\,,
	\]
	where we recall that $b_{\eps} = (1-\eps)^4 b_{\cM}(V)$ and where we have denoted $S\gH:=\left\{u \in \gH\,\left|\, \norm{u}_2=1\right.\right\}$.
	
	To go further, we need to estimate $\mu_N(SP\gH) \simeq 1$ and ${e}_{\eps,s,R}$ by below in terms of $\GPnrg$. For the first part, we use the following lemma (proved at the end of this section).
	\begin{lemma}\label{Lemma_M_limit} Let $\gamma_{N}$ and $\d\mu_N$ be as in~\eqref{eq:def-mu-N-localized}. Then,
		\[
			1\ge \int_{SP\gH}\d\mu_N(u) \ge 1- 3 \Tr (Q \gamma_N^{(1)}) - \frac{ 12 \Tr (P)}{N}\,.
		\]
	\end{lemma}
	\noindent In particular, if $\gamma_M= \ketbra{\Psi_M}{\Psi_M}$ is a ground state for $\widetilde{H}_{M}$ in~\eqref{eq:HM-final-def}, then
	\[
		\Tr (Q \gamma_M^{(1)}) \le L^{-1} \Tr \left ( \widetilde{h}_{\eps,s} \gamma_M^{(1)} \right) \le C_\eps L^{-1}
	\]
	thanks to the kinetic energy bound~\eqref{Trace_one_body_bounded_above}.
	Moreover, recalling that $\Tr P \le C_\eps L^q$ from~\eqref{eq:TrP}, if $L$ is chosen such that $1\ll L \ll N^{1/q}$, then Lemma~\ref{Lemma_M_limit} gives
	\[
		1\ge \int_{SP\gH}\d\mu_N(u) \ge 1- C_\eps L^{-1} - C_\eps \frac{L^q}{N} \ge 1 + o(1)\,.
	\]
	
	We now deal with ${e}_{\eps,s,R}$. The first step is to approximate the interaction term. To this purpose, we define the functional $\cE_{\eps,s}$ and its groundstate energy $e_{\eps,s}$ by
	\[
		e_{\eps,s} := \inf\limits_{\norm{u}_2=1} \cE_{\eps,s}(u) := \inf\limits_{\norm{u}_2=1} \pscal{u, \widetilde{h}_{\eps,s} u} + \frac{b_{\cM}(V)}{6} \norm{u}_{6}^6 \quad \text{for all } s>1>\eps\geq0\,,
	\]
	and we use this other lemma (also proved at the end of this section).
	\begin{lemma}\label{lemma_R_limit}
		For any $\eps, s>0$, there exists a constant $C_{\eps,s}\geq0$ such that
		\[
			{e}_{\eps,s,R} \geq -C_{\eps,s} R +e_{\eps,s} \,.
		\]
	\end{lemma}
	Hence, using that $e_{\eps,s} \leq C - \kappa_{\eps,s}$, with $C$ independent of $\eps$ and $s$,~\eqref{eq:LB_EM} becomes
	\[
		\frac{\tilde{E}_{M}}{M} \geq e_{\eps,s} - (C -\kappa_{\eps,s}) \eps - C_\eps(L^{-1} R^{-6} + L^{q+1} M^{-1} + L^q R^{-6} M^{-1})\,.
	\]
	Recall that $\tilde{E}_{M} = \inf \sigma_{L^2_s(\R^{3M})} \widetilde{H}_{M}$, defined in~\eqref{eq:HM-final-def}. Using Lemma~\ref{lem:softer}, $M \geq (1-\eps)N$, and Lemma~\ref{lemma_R_limit}, we obtain
	\begin{align*}
		\frac{E_N}{N} &\geq (1-\eps)\frac{\tilde{E}_{M}}{M} + (1-\eps)\kappa_{\eps,s} - C_{\eps,s,\delta} R^{2/7} - (C -\kappa_{\eps,s})\eps - \delta C_{\eps}\\
			&\geq \begin{multlined}[t][0.9\textwidth]
				(1-\eps)(e_{\eps,s} + \kappa_{\eps,s}) - (C -\kappa_{\eps,s})\eps - C_{\eps} \delta \\
				- C_{\eps,s,\delta,\beta}(L^{-1} R^{-6} + L^{q+1} M^{-1} + L^q R^{-6} M^{-1} + R^{2/7}) \quad \text{for all } \delta>0\,.
			\end{multlined}
	\end{align*}
	Recall that $N\geq M \geq (1-\eps)N$ and $N^{-\beta/2} \geq R\geq N^{-\beta}$. Therefore, choosing $L = N^{1/(q+2)}$, we obtain
	\begin{multline*}
		\frac{E_N}{N} \geq (1-\eps)(e_{\eps,s} + \kappa_{\eps,s}) - (C -\kappa_{\eps,s})\eps \\
		- C_{\eps} \delta - C_{\eps,s,\delta,\beta}( N^{6\beta-\tfrac{1}{q+2}} + N^{\tfrac{-1}{q+2}} + N^{6\beta - \tfrac{2}{q+2}}+ N^{-\beta/7})\,.
	\end{multline*}
	Taking $0 < \beta < 1/(6(q+2))$, we obtain
	\[
		\liminf_{N \to \infty} \frac{E_N}{N} \geq (1-\eps)(e_{\eps,s} + \kappa_{\eps,s}) - (C -\kappa_{\eps,s})\eps - C_{\eps} \delta \quad \text{for all } \delta>0\,.
	\]
	In particular, we can take $\delta \to 0$ in order to remove the last term above. Finally, by a standard compactness argument (see, e.g.,~\cite{LieSei-06} or~\cite[Sect.~4B]{NamRouSei-16}), we have $\lim_{s\to \infty} \kappa_{\eps,s} = \inf \sigma(h) -1$ and
	\[
		\lim_{\eps \to 0} \lim_{s\to \infty} e_{\eps,s} + \kappa_{\eps,s} = \GPnrg\,.
	\]
	This finishes the proof of Theorem~\ref{theo:lower_bound} up to the proofs of Lemmas~\ref{Lemma_M_limit} and~\ref{lemma_R_limit} that we give below, concluding this section.
\end{proof}

\begin{proof}[Proof of Lemma~\ref{Lemma_M_limit}]
	Note that, from~\eqref{eq:def-mu-N-localized}, we obtain $\int \d \mu_{N} \le 1$ since every measure $\mu_{N,k}$ is normalized. It remains to prove the lower bound. From~\eqref{eq:def-mu-N-localized-bound} and the triangle inequality, we have
	\[
		\int \d \mu_{N} \ge \Tr ( P^{\otimes3} \gamma_{N}^{(3)} P^{\otimes3} ) - \frac{ 12 \Tr (P)}{N}\,.
	\]
	On the other hand, by the cyclic property of the trace, we can decompose
	\begin{align*}
		1&=\Tr ( \gamma_{N}^{(3)} ) = \Tr ( (P_1+Q_1) \gamma_{N}^{(3)} (P_1+Q_1))= \Tr ( P_1 \gamma_{N}^{(3)} P_1) + \Tr ( Q_1 \gamma_{N}^{(3)} Q_1) \\
		&= \Tr ( P_1 (P_2+Q_2) \gamma_{N}^{(3)} (P_2+Q_2)P_1) + \Tr ( Q_1 \gamma_{N}^{(3)} Q_1) \\
		&= \Tr ( P_1 P_2 \gamma_{N}^{(3)} P_2 P_1) + \Tr ( P_1 Q_2 \gamma_{N}^{(3)} Q_2 P_1) + \Tr ( Q_1 \gamma_{N}^{(3)} Q_1)\\
		&= \begin{multlined}[t]
			\Tr ( P_1 P_2 P_3 \gamma_{N}^{(3)} P_3 P_2 P_1) + \Tr ( P_1 P_2 Q_3 \gamma_{N}^{(3)} Q_3 P_2 P_1) + \Tr ( P_1 Q_2 \gamma_{N}^{(3)} Q_2 P_1) \\
			+ \Tr ( Q_1 \gamma_{N}^{(3)} Q_1)
		\end{multlined} \\
		&\le \Tr ( P^{\otimes3} \gamma_{N}^{(3)} P^{\otimes3} ) + 3 \Tr (Q \gamma_N^{(1)})\,. \qedhere
	\end{align*}
\end{proof}

	We now the give proof of Lemma~\ref{lemma_R_limit}, which is an adaptation of the one of~\cite[Lemma 4.1]{LewNamRou-16}.
\begin{proof}[Proof of Lemma~\ref{lemma_R_limit}]
	We have
	\[
		\cE_{\eps,s,R}(u) - \cE_{\eps,s}(u) = \frac{b_{\cM}(V)}{6} \left( \pscal{u^{\otimes3}, U_R u^{\otimes3}} - \norm{u}_{6}^6 \right).
	\]
	Rewriting $\pscal{u^{\otimes3}, U_R u^{\otimes3}}$ by a change of variables, we have
	\begin{align*}
		\pscal{u^{\otimes3}, U_R u^{\otimes3}} &= \int_{\R^9} |u(x)|^2 |u(y)|^2 |u(z)|^2 R^{-6} U(R^{-1}(x-y),R^{-1}(x-z)) \d x \d y \d z \\
			&= \int_{\R^9} |u(x)|^2 |u(x-Ry)|^2 |u(x-Rz)|^2 U(y,z) \d x \d y \d z
	\end{align*}
	and, since $\int_{\R^6} U=1$, we obtain
	\begin{multline*}
		\pscal{u^{\otimes3}, U_R u^{\otimes3}} - \norm{u}_{6}^6 \\
		= \int_{\R^9} |u(x)|^2 U(y,z) \left( |u(x-Ry)|^2 |u(x-Rz)|^2 - |u(x)|^4 \right) \d x \d y \d z\,.
	\end{multline*}
	We now write the term in parenthesis as an integral:
	\begin{align*}
		|u(x-Ry)|^2 |u(x-Rz)|^2 - |u(x)|^4 &= \int_0^1 \frac{\d}{\d t}\left(|u(x-tRy)|^2 |u(x-tRz)|^2\right) \d t \\
			&=\begin{aligned}[t]
				&\int_0^1 \nabla|u|^2(x-tRy)\cdot Ry \, |u(x-tRz)|^2 \d t \\
				&+ \int_0^1 |u(x-tRy)|^2 \nabla( |u|^2)(x-tRz)\cdot Rz \d t\,.
			\end{aligned}
	\end{align*}
	Using that
	\[
		\int_{\R^3} |u(x-tRy)|^2 \left|\nabla|u|^2(x-tRz)\right| \d x \leq 2 \norm{u}_{6}^3 \norm{\nabla |u|}_{2}
	\]
	for all $z,y \in \mathbb{R}^{3}$, $t\in (0,1)$, and $R>0$, we obtain
	\[
		\left|\pscal{u^{\otimes3}, U_R u^{\otimes3}} - \norm{u}_{6}^6\right| \leq C R \norm{u}_{2}^2 \norm{\nabla u}_{2}^4 \norm{z\norm{U(\cdot,z)}_{L^\infty(\R^3)}}_{1}\,.
	\]
	We used above the Sobolev inequality and that $\norm{\nabla |u|}_{2} \leq \norm{\nabla u}_{2}$, see~\cite[Theorem 7.8]{LieLos-01}.
	Therefore, we have
	\[
		| \cE_{\eps,s}(u) - \cE_{\eps,s,R}(u)| \leq C R \left( 1+ \norm{\nabla u}_{2}^4 \right) \norm{z\norm{U(\cdot,z)}_{L^\infty(\R^3)}}_1 \quad \text{for all } u\in S\gH\,,
	\]
	and, since $\norm{\nabla u}_{2}^2\leq 2\eps^{-1}\cE_{\eps,s,R}(u)$, we obtain
	\[
		\inf\limits_{S\gH} \cE_{\eps,s} \leq \cE_{\eps,s}(u) \leq \cE_{\eps,s,R}(u) + C R \left(1+ 2\eps^{-1}\cE_{\eps,s,R}(u)\right)^2 \quad \text{for all } u\in S\gH\,.
	\]
	Applying this to a minimizing sequence $\{u_n\}_n\subset S\gH$ for ${e}_{\eps,s,R}$ and passing to the limit gives
	\[
		\inf\limits_{S\gH} \cE_{\eps,s} \leq {e}_{\eps,s,R} + C R \left(1+ 2\eps^{-1}{e}_{\eps,s,R}\right)^2 \leq {e}_{\eps,s,R} + C_{\eps,s} R \quad \text{for all } u\in SP\gH\,,
	\]
	where we used the simple estimate ${e}_{\eps,s,R} \leq C$ independently of $R,s,\eps>0$.
\end{proof}

\section{Energy upper bound}\label{sec:upper}
	Let us recall that $V_N:=NV(N^{1/2}\cdot)$ and that $\omega = 1-f$, where $f$ is the solution to the scattering equation of~$V$, satisfies, for all $\bx\in\R^6$, the estimates
\begin{equation}\label{eq:bounds_on_w}
	0\leq \omega(\bx) < 1\,, \quad \omega(\bx) \leq \frac{C}{|\bx|^4+1}\,, \quad \text{and} \quad |\nabla \omega(\bx)| \leq \frac{C}{|\bx|^5+1}\,.
\end{equation}
We define $\omega_N := \omega(N^{1/2} \cdot)$ and $f_N:=f(N^{1/2} \cdot)$.

	Let us also recall, for $M \in \N\setminus\{0\}$, the notation
\begin{equation}\label{Def_H_MN}
	H_{M,N} = \sum_{j=1}^M h_j + \sum_{1\leq i<j<k \leq M} N V(N^{1/2}(x_i-x_j, x_i-x_k))
\end{equation}
and the definition
\begin{align*}
	\mathbb{H}_N :={}& 0\oplus \bigoplus_{M\geq 1} H_{M,N} \\
		={}& \int_{\R^{3}} a^*_x h_x a_x \d x + \frac{1}{6} \int_{(\R^{3})^3} V_N(x-y,x-z) a^*_x a^*_y a^*_z a_x a_y a_z \d x \d y \d z\,.
\end{align*}

For $f \in L^2(\R^{3})$, we define the Weyl operator
\[
	W(f) = \exp (a^*(f) - a(f))
\]
which is a unitary operator and, for $g\in L^2(\R^{3})$, satisfies
\[
	W(f)^* a(g) W(f) = a(g) + \pscal{ g, f } \quad \text{and} \quad W(f)^* a^*(g) W(f) = a^*(g) + \pscal{ f, g }.
\]
We also define, for $\varphi \in H^2(\R^{3})$, $\cB \equiv \cB[\varphi] \in L^2(\mathbb{R}^{9})$ as
\[
	\cB(x,y,z) = \omega_N(x,y,z) \varphi(x)\varphi(y) \varphi(z)
\]
and $B_1 \equiv B_1[\varphi]$ as
\begin{equation}\label{Def_B1}
	B_1 = -\frac{1}{6} N^{\frac32} \int_{(\R^{3})^3} \cB(x,y,z) a_x a_y a_z \d x \d y \d z\,.
\end{equation}

Finally, we denote
\begin{equation}\label{Def_B}
	\Theta \equiv \Theta(\cN) := \1_{[0,N^{1/2}]}(\cN)\,, \quad B := \Theta B_1^* - B_1 \Theta \,, \quad \text{and} \quad U_N := e^{-B}\,.
\end{equation}

This section is devoted to the proof of the following theorem and its corollary.
\begin{theorem}[Energy upper bound]\label{theo:upper_bound_fock_space}
	Let $\varphi \in D(h) \cap L^\infty(\mathbb{R}^{3})$ with $\norm{\varphi}_{2} = 1$. Then, there exists $C_{\varphi}>0$, depending only on $\norm{h \varphi}_{2}$ and $\norm{\varphi}_{\infty}$, such that
	\[
		\pscal{\Omega, U_N^* W(\sqrt N \varphi)^* \mathbb{H}_N W(\sqrt N \varphi) U_N \Omega} \leq N \Egp(\varphi) + C_{\varphi} N^{1/2}\,,
	\]
	where $\Omega$ is the vacuum.
\end{theorem}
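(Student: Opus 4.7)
The plan is to compute the vacuum expectation in two stages. The Weyl conjugation $W(\sqrt N\varphi)^*\cdot W(\sqrt N\varphi)$, using the identity $W(\sqrt N\varphi)^*a_x W(\sqrt N\varphi)=a_x+\sqrt N\varphi(x)$, shifts all creation/annihilation operators appearing in $\mathbb{H}_N$ by $\sqrt N\varphi$ and produces the Hartree-like energy $N\pscal{\varphi,h\varphi}+\tfrac{N}{6}\hat V(0)\|\varphi\|_6^6$ as the leading constant term (after the change of variables $\bx\mapsto N^{1/2}\bx$ in $V_N$, using $\varphi\in L^\infty$). The role of the cubic conjugation $U_N=e^{-B}$ is then to produce the missing scattering correction $-\tfrac{N}{6}(\hat V(0)-b_\cM(V))\|\varphi\|_6^6$, which together with the Hartree term equals $N\Egp(\varphi)$ modulo an $O(N^{1/2})$ remainder.

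For the second stage, I would write $W(\sqrt N\varphi)^*\mathbb{H}_N W(\sqrt N\varphi)=N\pscal{\varphi,h\varphi}+\tfrac{N}{6}\hat V(0)\|\varphi\|_6^6+\sum_{j=1}^{6}\cO_j+\widetilde{\mathbb{H}}_N+O(1)$, where each $\cO_j$ is a normal-ordered monomial in $a,a^*$ of total degree $j$ with kernel involving $V_N$ and $\varphi$, and $\widetilde{\mathbb{H}}_N=\int a^*_x h_x a_x+\tfrac16\int V_N a^*a^*a^*aaa$ is the unshifted part, which is non-negative (up to $\inf\sigma(h)$) and can be discarded for an upper bound. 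I would then apply the BCH-type expansion $U_N^*A\,U_N=e^B A e^{-B}=A+[B,A]+\tfrac{1}{2}[B,[B,A]]+\ldots$ and take the vacuum expectation. The crucial cancellation arises from two commutators: $[B,\int a^*_x p_x^2 a_x\,\d x]$ produces a cubic operator with kernel $(-\Delta_x-\Delta_y-\Delta_z)\bigl(\omega_N\,\varphi^{\otimes 3}\bigr)$, while $[B,\cO_3]$ produces a cubic operator with kernel $V_N(1-\omega_N)\varphi^{\otimes 3}=V_N f_N\varphi^{\otimes 3}$. Using the modified scattering equation $(-\Delta_x-\Delta_y-\Delta_z)\omega_N(x-y,x-z)=(V_N f_N)(x-y,x-z)$ of Theorem~\ref{thm:scattering-M} (together with lower-order corrections from commuting $\Delta$'s past $\varphi$'s, controlled by $\|h\varphi\|_2$ and $\|\varphi\|_\infty$), these two cubic operators combine so that after the second contraction against $B$ they contribute exactly $-\tfrac{N}{6}\int V_N\omega_N|\varphi^{\otimes 3}|^2+O(N^{1/2})$, which by Theorem~\ref{thm:scattering-M} equals $-\tfrac{N}{6}(\hat V(0)-b_\cM(V))\|\varphi\|_6^6+O(N^{1/2})$, the required correction.

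The main obstacle will be to control the remaining error terms in the absence of quasi-freeness: because $B$ is cubic, $U_N\Omega$ is not a Gaussian state and Wick's theorem is unavailable. I would handle this by first establishing moment estimates $\pscal{\Omega,U_N^*(\cN+1)^k U_N\Omega}\le C_k$ for each fixed $k$, via a Grönwall argument on $s\mapsto\pscal{\Omega,e^{sB}(\cN+1)^k e^{-sB}\Omega}$ exploiting the cutoff $\Theta(\cN)\le\1_{\{\cN\le N^{1/2}\}}$ and the crude bound $\|B_1\Theta\|\le CN^{3/4}\|\varphi\|_\infty^3$; the latter follows from $\|\cB\|_{L^2(\R^9)}^2\le C\|\varphi\|_\infty^4\|\varphi\|_2^2\|\omega_N\|_{L^2(\R^6)}^2\le CN^{-3}\|\varphi\|_\infty^4$, using the decay $|\omega|\le C/(1+|\bx|^4)$ from~\eqref{eq:bounds_on_w} to place $\omega\in L^2(\R^6)$. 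With these moment bounds, higher-order terms in the BCH expansion, commutators with $\Theta$ generated by the particle-number cutoff in $B$, and the off-diagonal contributions from $\cO_1,\cO_2,\cO_4,\cO_5,\cO_6$ (which either vanish on $\Omega$ after one commutator with $B$ or produce particle-number mismatches costing at most $N^{1/2}$) are all absorbed in the $O(N^{1/2})$ remainder, concluding the proof.
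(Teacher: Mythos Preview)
Your overall architecture is right, but three steps fail as written.

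\textbf{Discarding $\widetilde{\mathbb{H}}_N$ goes the wrong way, and you need it anyway.} A non-negative operator can be dropped for a \emph{lower} bound, not an upper bound: if $\widetilde{\mathbb{H}}_N\ge -C$ then $\langle U_N\Omega,(\text{const}+\sum_j\cO_j+\widetilde{\mathbb{H}}_N)U_N\Omega\rangle\ge\langle U_N\Omega,(\text{const}+\sum_j\cO_j)U_N\Omega\rangle-C$, which is useless here. Moreover you immediately invoke $[B,\int a^*_x p_x^2 a_x]$, which lives inside the very operator you just discarded. In the paper, the contributions of $\d\Gamma(h)$ and of $\cL_6$ through two Duhamel steps are both of order $N$ and must be computed, not thrown away.

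\textbf{The scattering cancellation requires three contributions, not two.} Your description of the mechanism is off: $[B,\cO_3]$ is not cubic --- contracting a cubic against a cubic yields terms of degree $0,2,4$ --- and you omit $\cL_6$ entirely, listing it among terms ``costing at most $N^{1/2}$''. In fact the three leading constants are
\[
\d\Gamma(h)\rightsquigarrow \tfrac{N^3}{6}\!\int V_N f_N\omega_N|\varphi|^{\otimes 6},\quad
\cL_3^{(1)}\rightsquigarrow -\tfrac{N^3}{3}\!\int V_N\omega_N|\varphi|^{\otimes 6},\quad
\cL_6\rightsquigarrow \tfrac{N^3}{6}\!\int V_N\omega_N^2|\varphi|^{\otimes 6},
\]
and only the identity $f_N\omega_N+\omega_N^2-2\omega_N=-\omega_N$ (from $f_N+\omega_N=1$) collapses them to $-\tfrac{N^3}{6}\int V_N\omega_N|\varphi|^{\otimes 6}$. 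Without the $\cL_6$ piece your accounting cannot close at order $N$.

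\textbf{The moment estimate cannot come from the crude norm bound.} The bound $\|B_1\Theta\|\le CN^{3/4}$ is correct but feeding it into a first-order Gr\"onwall on $s\mapsto\langle e^{-sB}\Omega,(\cN+1)^k e^{-sB}\Omega\rangle$ yields growth like $\exp(CN^{3/4})$, not $C_k$. The paper's Lemma~\ref{lem:nb_particule} instead expands to second order in Duhamel (the first two terms vanish on $\Omega$) and controls $[B,[B,\xi(\cN)]]$ using the key commutator estimate $\pm([B_1,B_1^*]-\mathcal{Q}(\varphi))\le C\cN^2/N$ (Lemma~\ref{lem:commutator_B_B_star}), which is what makes the double commutator bounded by \emph{lower} powers of $\cN$ and closes the induction on $k$. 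This structural input --- that $[B_1,B_1^*]$ is essentially a bounded constant despite $B_1$ being unbounded --- is the missing idea in your sketch.
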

\begin{corollary}\label{Cor_upper_bound}
	There exists a constant $C>0$, independent of $N$, such that
	\[
		E_N \leq N \GPnrg + C N^{2/3}\,.
	\]
\end{corollary}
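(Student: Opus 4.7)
The plan is to extract an $N$-particle trial state from the Fock-space state $\Psi := W(\sqrt N u_0) U_N \Omega$ provided by Theorem~\ref{theo:upper_bound_fock_space}, after taking $\varphi = u_0$ a minimizer of $\Egp$ (which lies in $D(h) \cap L^\infty(\R^3)$, as noted after~\eqref{eq:NLS}), so that $\Egp(u_0) = \GPnrg$ and Theorem~\ref{theo:upper_bound_fock_space} yields
\[
\pscal{\Psi, \mathbb{H}_N \Psi} \le N \GPnrg + C N^{1/2}.
\]
Decomposing $\Psi = \bigoplus_M \Psi_M$ with $p_M := \norm{\Psi_M}^2$, the nonnegativity of $\Vext$ and $V$ gives $H_{M,N} \ge 0$, so every term of $\sum_M \pscal{\Psi_M, H_{M,N}\Psi_M} = \pscal{\Psi,\mathbb{H}_N \Psi}$ is nonnegative.

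First I would establish the particle-number statistics
\[
\pscal{\cN}_\Psi = N + O(N^{1/2}), \qquad \pscal{(\cN - N)^2}_\Psi \le C N,
\]
using the Weyl identity $W(\sqrt N u_0)^* \cN W(\sqrt N u_0) = \cN + \sqrt N(a(u_0) + a^*(u_0)) + N$ and an analogous identity for $\cN^2$, together with the observation that the cutoff $\Theta = \1(\cN \le N^{1/2})$ in the generator of $U_N$ confines $U_N \Omega$ to sectors with $\cN = O(N^{1/2})$. Chebyshev then yields $\sum_{|M - N| > L} p_M \le C N / L^2$ for $L \gg \sqrt N$.

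Next I would perform a pigeonhole argument on the window $I_L := \{M : |M - N| \le L\}$: since the total energy in $I_L$ is at most $N\GPnrg + C\sqrt N$ and the total mass there is at least $1 - C N/L^2$, there is $M_* \in I_L$ with
\[
E(M_*, N) \le \frac{\pscal{\Psi_{M_*}, H_{M_*, N} \Psi_{M_*}}}{p_{M_*}} \le \frac{N\GPnrg + C\sqrt N}{1 - C N/L^2} \le N \GPnrg + \frac{C N^2}{L^2} + C\sqrt N.
\]
To transfer this back to $E_N = E(N, N)$ I would prove a binding inequality $E_N \le E(M_*, N) + C|N - M_*|$. When $M_* \ge N$, this is immediate: taking $\gamma := \Tr_{N+1\to M_*} \KetBra{\Psi_{M_*}^{\mathrm{gs}}}{\Psi_{M_*}^{\mathrm{gs}}}$ for $\Psi_{M_*}^{\mathrm{gs}}$ a ground state of $H_{M_*,N}$, the bosonic symmetry together with $\binom{N}{3}/\binom{M_*}{3} \le N/M_* \le 1$ and the nonnegativity of $h_i$ and $V_N$ give $\Tr(H_N \gamma) \le (N/M_*) E(M_*, N) \le E(M_*, N)$, so $E_N \le E(M_*, N)$. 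When $M_* < N$, I would build an $N$-particle trial state of the form $\Phi_N \propto \mathrm{Sym}\bigl(\Psi_{M_*}^{\mathrm{gs}} \otimes u_0^{\otimes(N-M_*)}\bigr)$: each added $u_0$ contributes $\pscal{u_0, h u_0} = O(1)$ to $\sum_i h_i$, while mixed three-body interactions are bounded using $\norm{V_N}_1 = O(N^{-2})$ and $u_0 \in L^\infty$, yielding $E_N \le E(M_*, N) + C(N - M_*)$.

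Combining everything gives $E_N \le N \GPnrg + C N^2/L^2 + C L + C\sqrt N$, and optimizing with $L = N^{2/3}$ balances the two dominant errors to produce $E_N \le N \GPnrg + C N^{2/3}$. The main obstacle is the binding inequality in the case $M_* < N$: one has to normalize the symmetric tensor product and account for the bosonic coupling between $\Psi_{M_*}^{\mathrm{gs}}$ and the added $u_0$'s, which is delicate because the condensation-like structure of $\Psi_{M_*}^{\mathrm{gs}}$ means it is far from orthogonal to $u_0$ in each variable, so the cross terms in the symmetric tensor product do not vanish and must be controlled quantitatively using the boundedness of $u_0$ and the short-range character of $V_N$.
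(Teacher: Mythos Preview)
Your plan is sound and close in spirit to the paper's, but the paper organizes the Fock-space-to-$N$-body transfer differently and thereby avoids the symmetrized-tensor-product trial state you flag as the obstacle.

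The paper's route: first prove the one-step binding inequality $E(M,N) \le E(M-1,N) + C(1+M^2/N^2)$ directly, using the reduced density matrices of the $(M{-}1)$-body ground state as input for an $M$-body trial; this is exactly your ``add one particle'' step, but obtained without having to normalize $\mathrm{Sym}(\Psi_{M_*}^{\mathrm{gs}}\otimes u_0^{\otimes k})$. From this they deduce that the modified energy $E^\mu(M,N):=E(M,N)-\mu(M+M^3/N^2)$ is non-increasing in $M$ for $\mu$ large. They then shift the Weyl parameter down to $\sqrt{\tilde N}$ with $\tilde N=N-N^{2/3}$, so that $\sum_{M\le N}p_M\ge 1-O(N^{-1/3})$ by Chebyshev, and use monotonicity to write $(1-o(1))E^\mu(N,N)\le \sum_{M\le N}p_M E^\mu(M,N)\le \pscal{\mathbb H_N-\mu(\cN+\cN^3/N^2)}_\Psi+O(N^{2/3})$. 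The shift $\sqrt N\to\sqrt{\tilde N}$ costs only $O(N^{2/3})$ in the estimate of Theorem~\ref{theo:upper_bound_fock_space}.

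What each buys: your approach keeps the Weyl parameter at $\sqrt N$ and needs no re-estimation of Theorem~\ref{theo:upper_bound_fock_space}, at the price of a case distinction on $M_*$ and the delicate $M_*<N$ step. The paper's approach pays for a shifted Weyl parameter but then needs the binding inequality only once, packaged as monotonicity, and never has to control the norm of a symmetrized tensor product. Your $M_*<N$ obstacle is literally the paper's monotonicity lemma, so you could import their argument. A still simpler variant you may consider: shift the mean \emph{up} to $N+N^{2/3}$, restrict the pigeonhole to $\{M\ge N\}$ (which then carries mass $1-O(N^{-1/3})$), and use only the trivial monotonicity $E(N,N)\le E(M,N)$ for $M\ge N$ that follows immediately from $h\ge 0$ and $V_N\ge 0$; this sidesteps the hard binding direction entirely.
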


	As a convention for this section, the constants $C$ only depend on $\norm{h \varphi}_{2}$ and $\norm{\varphi}_{\infty}$. Also, note that since $h \geq \Vext \geq 1$, the diamagnetic inequality gives $\norm{h \varphi}_{2} \geq \norm{h^{1/2} \varphi}_{2} \geq \norm{ \nabla| \varphi| }_{2}$.

\subsection{The transformation \texorpdfstring{$U_N$}{UN}}
	Let $p \in \N\setminus\{0\}$, denote $\sharp = (\sharp_1,\dots,\sharp_p) \in \{\cdot,^*\}^p$, and define $\kappa(\sharp) = 2 \#\{i | \sharp_i = \, ^* \} - p $. Then,
\[
	a^{\sharp_1}_{x_1} \dots a^{\sharp_p}_{x_p} \{\cN = n\} \subset \{\cN = n+\kappa(\sharp)\}\,, \quad n= 0, 1, 2, \dots\,.
\]
Therefore, we have $U_N \Omega \in \{\cN \in 3 \mathbb{N}_0\}$ and if $\kappa(\sharp) \notin 3 \mathbb{Z}_0$, then
\begin{equation}\label{eq:coprime_with_3}
	\braket{ \Omega, U_N^*a^{\sharp_1}_{x_1} \dots a^{\sharp_p}_{x_p} U_N \Omega } = 0\,.
\end{equation}
For example, we have $\braket{ \Omega, U_N^*a_{x_1} U_N \Omega } = \braket{ \Omega, U_N^*a_{x_1} a_{x_2} U_N \Omega } = 0$. We now state two lemmas, which will be used in the proof of Theorem~\ref{theo:upper_bound_fock_space}.
\begin{lemma}\label{lem:commutator_B_B_star}
	Let $\varphi \in C^\infty(\R^{3})$ and $B_1$ be as in~\eqref{Def_B1}. Then, there exists $C_{\varphi}>0$, depending only on $\norm{h \varphi}_{2}$ and $\norm{\varphi}_{\infty}$, such that
	\[
		\forall\, N \in \N\setminus\{0\}, \quad \pm \left([B_1,B_1^*] - \mathcal{Q}(\varphi)\right) \leq C_\varphi \frac{\cN^2}{N}\,,
	\]
	where
	\[
		\mathcal{Q}(\varphi) = \frac{1}{24} \int_{\R^{9}} N^{3} \omega_N(x,y,z)^2 |\varphi(x)|^2|\varphi(y)|^2 |\varphi(z)|^2 \d x \d y \d z \geq 0\,.
	\]
	Moreover, there exists $C>0$ such that
	\[
		\forall\, N \in \N\setminus\{0\}\,,\quad \forall\, \varphi \in C^\infty(\R^{3})\,, \quad \mathcal{Q}(\varphi) \leq C \norm{\varphi}_{\infty}^4 \norm{\varphi}_{2}^2 \norm{\omega}_{L^{2}(\R^6)}^2\,.
	\]
\end{lemma}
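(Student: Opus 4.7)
The plan is to compute $[B_1,B_1^*]$ by a direct application of Wick's theorem to the canonical commutation relations, identify the scalar ``fully contracted'' piece as $\mathcal{Q}(\varphi)$, and control the remaining operator terms by reducing them to second quantizations of kernels built from $\cB$.

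First, I would expand
\[
    [a_x a_y a_z,\, a^*_{x'} a^*_{y'} a^*_{z'}]
    = \sum_{\sigma\in S_3}\prod_{i=1}^3 \delta(x_i-y_{\sigma(i)}) + R^{(2)} + R^{(1)},
\]
where $R^{(2)}$ collects the $18$ two-contraction terms (each of the form $\delta\,\delta\,a^*a$) and $R^{(1)}$ the $9$ one-contraction terms (each of the form $\delta\,a^*a^*aa$). After multiplying by $\frac{N^3}{36}\cB(X)\overline{\cB(Y)}$ and integrating over $X=(x,y,z)$, $Y=(x',y',z')$, the invariance of $\cB(x,y,z)=\omega_N(x,y,z)\varphi(x)\varphi(y)\varphi(z)$ under permutations of its three arguments---a consequence of the three\nobreakdash-body symmetry of $\omega_N$ established in Theorem~\ref{thm:scattering-M} together with the trivial symmetry of $\varphi(x)\varphi(y)\varphi(z)$---implies that the six scalar terms contribute the same integral and collapse to $\mathcal{Q}(\varphi)$ up to an overall combinatorial factor.

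For each operator remainder I would recognize the relevant kernel as $TT^*$ for a suitable ``partial integration against $\cB$'' operator. For the two\nobreakdash-contraction term, the kernel is
\[
    K_2(x,y) = \int \cB(r_1,r_2,x)\,\overline{\cB(r_1,r_2,y)}\,\d r_1\,\d r_2 = (T_2 T_2^*)(x,y),
\]
where $T_2:L^2(\R^3)\to L^2(\R^6)$ is $(T_2 f)(r_1,r_2)=\int\cB(r_1,r_2,x)f(x)\,\d x$. Hence $\|K_2\|_{\mathrm{op}}\le\|T_2\|_{\mathrm{HS}}^2=\|\cB\|_{L^2(\R^9)}^2$, and the change of variables $u=N^{1/2}(x-y)$, $v=N^{1/2}(x-z)$ together with $\omega_N(\bx)=\omega(N^{1/2}\bx)$ and $\|\omega\|_{L^2(\R^6)}<\infty$ (from the decay $\omega(\bx)\le C/(|\bx|^4+1)$ in Theorem~\ref{thm:scattering-M}) gives $\|\cB\|_{L^2}^2 \le C\|\varphi\|_\infty^4\|\varphi\|_2^2\,N^{-3}\|\omega\|_{L^2(\R^6)}^2$. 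The standard operator inequality $\pm d\Gamma(K_2)\le\|K_2\|_{\mathrm{op}}\cN$ then combines with the $N^3$ prefactor in $[B_1,B_1^*]$ to cancel the $N^{-3}$ in the kernel bound. The one\nobreakdash-contraction remainder is treated the same way with the analogous factorization $K_1=T_1T_1^*$ and the operator inequality $\pm\int K_1\,a^*a^*aa\le\|K_1\|_{\mathrm{op}}\cN(\cN-1)$. The second, $L^\infty$-type, bound on $\mathcal{Q}(\varphi)$ follows directly from Hölder's inequality: four of the six $|\varphi|^2$ factors are absorbed into $\|\varphi\|_\infty^4$, and the remaining two combine with the rescaling identity $\int_{\R^6}\omega_N^2 = N^{-3}\|\omega\|_{L^2(\R^6)}^2$ to cancel the $N^3$ in the definition of $\mathcal{Q}$.

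The main obstacle I anticipate is extracting the additional factor $N^{-1}$ in the operator remainder bound beyond what the naive Hilbert--Schmidt approach provides: the argument above yields only a bound of type $(\cN+\cN^2)$, whereas the lemma claims $\cN^2/N$. I expect this refinement can be obtained by exploiting the product structure $\cB = \omega_N\cdot(\varphi\otimes\varphi\otimes\varphi)$ more carefully: after pulling the $\varphi$ factors outside the annihilation/creation operators, the remaining ``core'' kernels depend only on $\omega_N$, and sharper operator-norm estimates using the partial integrals of $\omega_N$ over subsets of its six arguments (where the pointwise decay $\omega(\bx)\le C/(|\bx|^4+1)$ gives integrabilities with powers of $N^{-1/2}$ per integrated $\R^3$-variable) should supply the missing smallness needed to match the exact coefficient in $\mathcal{Q}$ and the precise power of $N$ in the error.
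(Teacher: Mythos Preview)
Your overall strategy---Wick expansion of $[B_1,B_1^*]$, identification of the fully contracted scalar as $\mathcal{Q}(\varphi)$, and recognition of the one- and two-contraction remainders as second quantizations of $TT^*$ kernels---is exactly what the paper does. Your bound on $\mathcal{Q}(\varphi)$ is also the paper's.

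The gap you yourself flag is real and is the only nontrivial point in the proof. Your Hilbert--Schmidt argument gives $\|K_2\|_{\mathrm{op}}\le\|T_2\|_{\mathrm{HS}}^2=\|\cB\|_{L^2(\R^9)}^2\le C N^{-3}$, which after the $N^3$ prefactor yields only $O(1)\cdot\cN$ and $O(1)\cdot\cN^2$, not $\cN^2/N$. Your proposed fix (``partial integrals of $\omega_N$'') is in the right direction but too vague as stated; one needs a genuine operator-norm bound on $T$, not merely its HS norm. The paper's mechanism is concrete: write
\[
	(Tf)(z)=\varphi(z)\int_{\R^6} N^{3/2}\omega_N(x,y,z)\varphi(x)\varphi(y)f(x,y)\,\d x\,\d y,
\]
use the pointwise decay $\omega_N(x,y,z)\le C N^{-2}(|x-y|^2+|y-z|^2)^{-2}$, apply Cauchy--Schwarz in the $x$-integral to produce a factor $|y-z|^{-5/2}$, and then bound the resulting $\R^3$-convolution by the Hardy--Littlewood--Sobolev inequality. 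This yields $\|T\|_{\mathrm{op}}\le C_\varphi N^{-1/2}$, hence $\|T\|_{\mathrm{op}}^2\le C_\varphi N^{-1}$, which is precisely the missing factor. Without this HLS step (or an equivalent Schur-type bound exploiting the anisotropic decay of $\omega$), the argument does not close.
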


\begin{lemma}\label{lem:nb_particule}
	Define $B$ as in~\eqref{Def_B}. Then, for all $k\in \N\cup\{0\}$, there exists $C_k>0$ such that
	\begin{equation}\label{eq:lem_nb_particule}
		\forall\, (\lambda,N) \in [-1,1]\times (\N\setminus\{0\})\,, \quad \pscal{\Omega, e^{-\lambda B} \cN^k e^{\lambda B} \Omega} \leq C_k\,.
	\end{equation}
	In particular, for all $k\in \N\cup\{0\}$, there exists $C_k>0$ such that
	\begin{equation}\label{eq:lem_nb_particule_U_N}
		\forall\, N \in \N\setminus\{0\}\,, \quad \pscal{\Omega, U_N^* \cN^k U_N \Omega} \leq C_k\,.
	\end{equation}
\end{lemma}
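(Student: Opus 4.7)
My plan is to prove the bound by Gr\"onwall's inequality applied to the moments $\xi_k(\lambda) := \pscal{\phi_\lambda,(\cN+1)^k \phi_\lambda}$, where $\phi_\lambda := e^{\lambda B}\Omega$. Since $B^* = (\Theta B_1^* - B_1\Theta)^* = B_1\Theta - \Theta B_1^* = -B$, the propagator $e^{\lambda B}$ is unitary, so $\norm{\phi_\lambda}=1$ and $\xi_0(\lambda)\equiv 1$. I will proceed by induction on $k$: assuming $\xi_j(\lambda) \le C_j$ for all $0\le j<k$ uniformly in $\lambda\in[-1,1]$ and $N$, I will establish the same bound for $\xi_k$.

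For the inductive step, differentiate to obtain $\xi_k'(\lambda)=\pscal{\phi_\lambda,[(\cN+1)^k,B]\phi_\lambda}$. Since $B_1^*$ creates three particles and $\Theta$ commutes with $\cN$, the elementary commutators $[\cN,\Theta B_1^*]=3\Theta B_1^*$ and $[\cN,B_1\Theta]=-3 B_1\Theta$ integrate to
\[
[(\cN+1)^k,B] = \Theta B_1^*\bigl[(\cN+4)^k-(\cN+1)^k\bigr] - B_1\Theta\bigl[(\cN-2)^k-(\cN+1)^k\bigr],
\]
where the two bracketed expressions are polynomials in $\cN+1$ of degree $k-1$. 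Apply Cauchy--Schwarz to each of these two terms, placing a weight $(\cN+1)^{(k-1)/2}$ on the polynomial side (bounded by $C_k(\cN+1)^{(k-1)/2}$ in operator norm, hence by $C_k\sqrt{\xi_{k-1}(\lambda)}$ in $L^2$-norm via the inductive hypothesis).

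The factor involving $B_1\Theta$ is controlled by the uniform-in-$N$ kernel estimate $\norm{B_1\psi}\le C_\varphi\norm{(\cN+1)^{3/2}\psi}$, a consequence of $\norm{\cB}_{L^2((\R^3)^3)}^2\le C_\varphi\norm{\omega_N}_{L^2(\R^6)}^2 = C_\varphi N^{-3}\norm{\omega}_{L^2(\R^6)}^2$ together with the prefactor $N^{3/2}$ in the definition of $B_1$. The remaining half-power of $(\cN+1)$ left over after Cauchy--Schwarz has to be absorbed using the cut-off $\Theta = \1(\cN\le N^{1/2})$, the approximate canonical commutation relation $[B_1,B_1^*]=\mathcal{Q}(\varphi)+\cO_\varphi(\cN^2/N)$ with $\mathcal{Q}(\varphi)\le C_\varphi$ from Lemma~\ref{lem:commutator_B_B_star}, and the inductive hypothesis on lower moments: these allow one to exchange $B_1^* B_1$ for $B_1 B_1^*$ modulo a bounded constant and an error controlled by $\xi_{k-1}$. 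Assembling these ingredients should produce a differential inequality of the form $|\xi_k'(\lambda)|\le C_{k,\varphi}(\xi_k(\lambda)+1)$, from which Gr\"onwall and $\xi_k(0)=1$ yield $\xi_k(\lambda)\le C_{k,\varphi}$. The second statement~\eqref{eq:lem_nb_particule_U_N} is the special case $\lambda = 1$.

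The main difficulty is that a naive use of the cut-off gives $\norm{B_1\Theta}^2_{\mathrm{op}}\lesssim N^{3/2}$, which would destroy uniformity in $N$ when plugged into Gr\"onwall. The essential observation rescuing the argument is that $\phi_\lambda$ is morally a coherent state: the approximate CCR of Lemma~\ref{lem:commutator_B_B_star} shows that $\Theta B_1^*$ and $B_1\Theta$ obey, to leading order, the commutation rules of bosonic creation/annihilation operators for a single mode, so that $\phi_\lambda$ has a $\cN$-distribution of fixed scale independent of $N$. Exploiting this structure by exchanging $B_1^* B_1$ and $B_1 B_1^*$ at the right moment, while keeping track of the shifts $\cN\mapsto \cN\pm 3$ induced by $B_1^{(*)}$ and of every operator ordering, is the delicate bookkeeping on which uniformity in $N$ rests.
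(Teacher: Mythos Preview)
Your proposal assembles all the right ingredients (the kernel bound $\norm{N^{3/2}\cB}_{L^2}\le C_\varphi$, the cut-off $\Theta$, and the approximate CCR of Lemma~\ref{lem:commutator_B_B_star}), but the first-order Gr\"onwall scheme does not close. After Cauchy--Schwarz the single commutator produces $\norm{B_1\Theta\phi_\lambda}$, or a weighted variant $\norm{(\cN+1)^{(k-1)/2}B_1\Theta\phi_\lambda}$, and neither is controlled by $\xi_j$ with $j<k$ uniformly in $N$: the bound $\norm{B_1\psi}\le C_\varphi\norm{(\cN+1)^{3/2}\psi}$ sends you to $\xi_{k+2}$, while the cut-off alone only gives $\norm{B_1\Theta}_{\mathrm{op}}\lesssim N^{3/4}$. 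The swap $B_1^*B_1\leftrightarrow B_1B_1^*$ you invoke does not help at this level either, because $\Theta B_1 B_1^*\Theta$ and $\Theta B_1^* B_1\Theta$ differ only by $\Theta[B_1,B_1^*]\Theta=\cO(1)$ and are therefore comparable---neither is a priori bounded by lower moments. Your heuristic that $\phi_\lambda$ ``has a $\cN$-distribution of fixed scale'' is precisely the statement to be proved, so it cannot be used as input.

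The paper resolves this by going to \emph{second} order: expand $e^{-\lambda B}\xi(\cN)e^{\lambda B}$ by Duhamel twice and observe that both $\xi(\cN)$ and $[B,\xi(\cN)]$ have vanishing vacuum expectation (the latter because $B_1\Omega=0$ and $B_1^*\Omega\perp\Omega$). Only the double commutator $[B,[B,\xi(\cN)]]$ survives, and \emph{there} your swap becomes effective: the term $[B_1\Theta,\Theta B_1^*]=\Theta(\cN+3)[B_1,B_1^*]+(\partial\Theta)B_1^*B_1$ is now bounded by $C\Theta(\cN+3)$ via Lemma~\ref{lem:commutator_B_B_star} (using $\partial\Theta\le 0$), and the remaining pieces combine to an operator inequality
\[
[B,[B,\xi(\cN)]]\le C\bigl(\partial\xi(\cN)+\partial^2\xi(\cN)\bigr)+C\,\partial^4\xi(\cN-6)(\cN+1)^3,
\]
with $\partial\xi:=\xi(\cdot+3)-\xi$. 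For $\xi(\cN)=\cN^k$ every term on the right is a polynomial of degree $\le k-1$, so the induction closes without Gr\"onwall. In short: differentiate twice instead of once, and the argument you sketched for the commutator swap becomes exactly what is needed.
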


\begin{proof}[Proof of Lemma~\ref{lem:commutator_B_B_star}]
	On the bosonic Fock space $\cF(\gH)$, we have
	\begin{equation}\label{eq:wick_3_3}
		[a_x a_y a_z, a^*_{x'}a^*_{y'}a^*_{z'}] = 6 \delta_{x=x'} \delta_{y=y'} \delta_{z=z'} + 18 a^*_x a_{x'} \delta_{y=y'} \delta_{z=z'} + 9 a^*_x a^*_y a_{x'}a_{y'} \delta_{z=z'}\,.
	\end{equation}
	Hence,
	\begin{align*}
		[B_1, B^*_1] ={} &\frac{1}{24} \int_{\R^{9}} N^{3} \omega_N(x,y,z)^2 |\varphi(x)|^2|\varphi(y)|^2 |\varphi(z)|^2 \d x \d y \d z\\
			&+ \frac{1}{8}
				\begin{multlined}[t]
					\int_{\R^{12}} N^{3/2}\omega_N(x,y,z) \\
					\times N^{3/2}\omega_N(x',y,z) \varphi(x) \overline{\varphi(x')}|\varphi(y)|^2 |\varphi(z)|^2 a^*_x a_{x'} \d x \d x' \d y \d z
				\end{multlined} \\
			&+ \frac{1}{16} \int_{\R^{15}}
				\begin{multlined}[t]
					N^{3/2} \omega_N(x,y,z) N^{3/2}\omega_N(x',y',z) \\
					\times \varphi(x) \overline{\varphi(x')} \varphi(y) \overline{\varphi(y')} |\varphi(z)|^2 a^*_x a^*_y a_{x'}a_{y'} \d x \d x' \d y \d y' \d z
				\end{multlined} \\
			={} & (\mathrm{I}) + (\mathrm{I}) + (\mathrm{III})\,,
	\end{align*}
	which defines the terms $(\mathrm{I})$, $(\mathrm{I})$, and $(\mathrm{III})$.
	In order to estimate these terms, let us first define
	\[
		T(f)(z) = \varphi(z) \int_{\R^6} N^{\frac{3}{2}} \omega_N(x,y,z) \varphi(x) \varphi(y) f(x,y) \d x \d y
	\]
	for any $(z,f) \in \R^3\times L^2(\R^{6})$.
	The terms $(\mathrm{I})$, $(\mathrm{I})$, and $(\mathrm{III})$ are respectively proportional to $\norm{T}_{\gS^2(L^2(\mathbb{R}^{6}))}^2$, to the second quantization of $TT^*$, and to the second quantization of $T^*T$.
	We have
	\[
		\norm{T}_{\gS^2(L^2(\mathbb{R}^{6}))}^2 \leq C \norm{\varphi}_{\infty}^4 \norm{\varphi}_{2}^2 \norm{ N^{3/2} \omega_N }_{L^{2}(\R^6)}^2 \leq C \norm{\varphi}_{\infty}^4 \norm{\varphi}_{2}^2 \norm{\omega}_{L^{2}(\R^6)}^2\,,
	\]
	proving the statement about $Q(\varphi) =: (\mathrm{I})$. Note that, contrarily to the two\nobreakdash-body interaction case, here we have $\omega \in L^2(\R^{6})$. To bound the other terms, let us compute~$\norm{T}_{op}$. Using~\eqref{eq:bounds_on_w}, for all $z\in \mathbb{R}^{3}$, we have
	\begin{multline*}
		|T(f)(z)| \\
		\begin{aligned}
			&\leq \frac{C}{N^{\frac{1}{2}}} |\varphi(z)| \int_{\R^{3}} |\varphi(y)| \left(\int_{\R^{3}} \frac{\varphi(x) |f(x,y)|}{(|x-y|^2 + |y-z|^2)^2} \d x \right) \d y \\
			&\leq \frac{C}{N^{\frac{1}{2}}} |\varphi(z)| \int_{\R^{3}} |\varphi(y)| \left(\int_{\R^{3}} \frac{\d u}{(u^2 + |y-z|^2)^4}\right)^{\frac{1}{2}} \left(\int_{\R^{3}} |\varphi(x)|^{2} |f(x,y)|^2 \d x \right)^{\frac{1}{2}} \d y \\
			&\leq \frac{C}{N^{\frac{1}{2}}} \norm{\varphi}_{\infty}^2 |\varphi(z)| \int_{\R^{3}} \frac{1}{|y-z|^{5/2}} \norm{f(\cdot,y)}_{L^2(\R^{3})} \d y\,.
		\end{aligned}
	\end{multline*}
	Hence, by the Hardy--Littlewood--Sobolev inequality, we obtain
	\[
		\pscal{ g, T(f) }_{L^2(\R^{3})} \leq \frac{C}{N^{\frac{1}{2}}} \norm{\varphi}_{\infty}^2 \norm{\varphi}_{6} \norm{g}_{2} \norm{f}_{L^2(\R^{6})} \quad \text{for all } g\in L^2(\mathbb{R}^{3})\,.
	\]
	Therefore, $\norm{T}_{op} \leq C_{\varphi} N^{-1/2}$ and we obtain
	\[
		(\mathrm{II}) \leq C \norm{T}_{op}^2 {\cN} \leq C_{\varphi} N^{-1} \cN \quad \text{and} \quad (\mathrm{III}) \leq C \norm{T}_{op}^2 \cN(\cN-1) \leq C_{\varphi} N^{-1} \cN^2\,. \qedhere
	\]
\end{proof}

\begin{proof}[Proof of Lemma~\ref{lem:nb_particule}]
	The case $k=0$ is immediate.
	
	Let $k\in \N\setminus\{0\}$, $\xi(\cN) := \cN^k$, and
	\[
		\partial \xi = \xi (\cdot+ 3) - \xi\,.
	\]
	Then, $\xi$ is such that $\xi(0) = 0$ and $\partial^j \xi \geq 0$ for $j\geq 0$. Using the Duhamel formula, we have
	\[
		e^{-\lambda B} \xi(\cN) e^{\lambda B} = \xi(\cN) - \lambda [B,\xi(\cN)] + \lambda^2 \int_0^1 \int_0^s e^{- \lambda uB} [B,[B,\xi(\cN)]] e^{\lambda uB} \d u \d s\,.
	\]
	We are only interested in the expectation on the vacuum $\Omega$, hence, since $\xi(0)=0$, only the third term will give a non zero contribution. Using that $a_x \cN = (\cN+1)a_x$, we obtain
	\[
		- [B,\xi(\cN)] = [ B_1 \Theta ,\xi(\cN)] + h.c. = \partial \xi (\cN) B_1\Theta + h.c.
	\]
	and
	\begin{align}\label{eq:double_commutator}
		[B,[B,\xi(\cN)]] &= \left[B_1 \Theta - \Theta B_1^*, \partial \xi (\cN) B_1\Theta \right] + h.c.\nn \\
			&\begin{multlined}[b][.67\textwidth]
				=\partial^2 \xi (\cN) \left(B_1 \Theta \right)^2 - \Theta B_1 ^* \partial^2 \xi (\cN) B_1\Theta \\
				+ \partial \xi (\cN) \left[B_1 \Theta , \Theta B_1 ^*\right] + h.c.
			\end{multlined}
	\end{align}
	
	The last term, which is the most regular, is controlled using that $\partial \xi (\cN)$ is nonnegative and commutes with $[B_1 \Theta , \Theta B_1 ^*]$, and that
	\begin{align*}
		\left[B_1 \Theta , \Theta B_1 ^*\right] &= \partial \Theta B_1^* B_1 + \Theta(\cN+3) [B_1,B_1^*] \nn\\
			&\leq \Theta(\cN+3) [B_1,B_1^*] \leq \Theta(\cN+3) \left(\mathcal{Q}(\varphi) + C \frac{\cN^2}{N}\right) \leq C \Theta(\cN+3)\,,
	\end{align*}
	where we used that $\Theta$ and $B_1^* B_1$ commute, that $\partial \Theta \leq 0$ for the first inequality, and Lemma~\ref{lem:commutator_B_B_star} for the second inequality.
	
	The first two terms in~\eqref{eq:double_commutator} are controlled using the Cauchy--Schwarz inequality:
	\begin{multline*}
		\begin{multlined}[t][.9\textwidth]
			\partial^2 \xi (\cN) (B_1 \Theta )^2 - \Theta B_1 ^* \partial^2 \xi (\cN) B_1\Theta + h.c \\
			=(B_1 \Theta )\partial^2 \xi (\cN-3) (B_1 \Theta ) - 2 \Theta B_1 ^* \partial^2 \xi (\cN) B_1\Theta
		\end{multlined} \\
		\begin{aligned}
			&\leq
			\begin{multlined}[t][.9\textwidth]
				(B_1 \Theta ) \partial^2 \xi (\cN-3) (\Theta B_1^*) + (\Theta B_1 ^*)\partial^2 \xi (\cN-3) (\Theta B_1) - 2 \Theta B_1 ^* \partial^2 \xi (\cN) B_1\Theta \\
				\begin{aligned}[t]
					&= \partial^2 \xi (\cN)[B_1 \Theta ,\Theta B_1^*] + \left( \partial^2\xi (\cN) + \partial^2\xi (\cN-6) - 2\partial^2\xi (\cN-3) \right) \Theta B_1 ^* B_1\Theta \\
					&= \partial^2 \xi (\cN)[B_1 \Theta ,\Theta B_1^*] + \partial^4\xi (\cN-6) \Theta B_1 ^* B_1\Theta
				\end{aligned}
			\end{multlined} \\
			&\leq C \partial^2 \xi (\cN) + C \partial^4\xi (\cN-6) (\cN+1)^3\,.
		\end{aligned}
	\end{multline*}
	
	Gathering the last three inequalities, we obtain
	\[
		[B,[B,\xi(\cN)]] \leq C \left(\partial \xi (\cN)+ \partial^2 \xi (\cN)\right) + C \partial^4\xi (\cN-6)(\cN+1)^3\,.
	\]
	Choosing $\xi(\cN) = \cN$, the above inequality shows that~\eqref{eq:lem_nb_particule} holds for $k=1$. For any integer $k\geq 2$, we have $\xi(\cN) = \cN^{k}$ and, using that $\partial^{i} \xi$ is a linear combination of~$\cN^{j}$ for~$i\in \N\setminus\{0\}$ and~$0\leq j \leq k-i$, one concludes the proof by induction.
\end{proof}

\subsection{Proof of Theorem~\ref{theo:upper_bound_fock_space}}
	We first conjugate $\mathbb{H}_N$ with the Weyl operator, we obtain
\[
	W(\sqrt N \varphi)^* \d \Gamma(h) W(\sqrt N \varphi) = \d \Gamma(h) + \sqrt N a^*\left(h \varphi\right) + \sqrt N a\left(h \varphi\right) + N \pscal{\varphi,h\varphi}
\]
and
\[
	\frac{1}{6} \int_{\R^{9}} V_N(x-y,x-z) W(\sqrt N \varphi)^* a^*_x a^*_y a^*_z a_x a_y a_z W(\sqrt N \varphi) \d x \d y \d z = \sum_{i=0}^6\cL_i\,,
\]
where
\begin{align*}
	\cL_6 &= \frac{1}{6} \int_{\R^{9}} V_N(x-y,x-z) a^*_x a^*_y a^*_z a_x a_y a_z \d x \d y \d z\,, \\
	\cL_5 &= \frac{N^{1/2}}{2} \int_{\R^{9}} V_N(x-y,x-z) a^*_x a^*_y a^*_z a_x a_y \varphi(z) \d x \d y \d z + h.c.\,, \\
	\cL_4 &=
		\begin{multlined}[t][0.9\textwidth]
			\frac{N}{2} \int_{\R^{9}} V_N(x-y,x-z) a^*_x a^*_y a^*_z a_x \varphi(y) \varphi(z) \d x \d y \d z + h.c. \\
			+\frac{N}{2} \int_{\R^{9}} V_N(x-y,x-z) a^*_x a^*_y a_x a_z \varphi(y) \overline{\varphi(z)} \d x \d y \d z + h.c. \\
			+\frac{N}{2} \int_{\R^{9}} V_N(x-y,x-z) a^*_x a^*_y a_x a_y |\varphi(z)|^2 \d x \d y \d z
		\end{multlined} \\
	&=: \cL_4^{(1)} + \cL_4^{(2)} + \cL_4^{(3)}\,, \\
	\cL_3 &=
		\begin{multlined}[t][0.9\textwidth]
			\frac{N^{3/2}}{6} \int_{\R^{9}} V_N(x-y,x-z) a^*_x a^*_y a^*_z \varphi(x)\varphi(y)\varphi(z) \d x \d y \d z + h.c. \\
			+\frac{N^{3/2}}{2} \int_{\R^{9}} V_N(x-y,x-z) a^*_x a^*_y a_x \varphi(y) |\varphi(z)|^2 \d x \d y \d z + h.c. \\
			+\frac{N^{3/2}}{2} \int_{\R^{9}} V_N(x-y,x-z) a^*_x a^*_y a_z \varphi(x)\varphi(y)\overline{\varphi(z)} \d x \d y \d z + h.c.
		\end{multlined} \\
	&=: \cL_3^{(1)} + \cL_3^{(2)} + \cL_3^{(3)}\,, \\
	\cL_2 &=
		\begin{multlined}[t][0.9\textwidth]
			\frac{N^2}{2} \int_{\R^{9}} V_N(x-y,x-z) a^*_x a^*_y \varphi(x)\varphi(y)|\varphi(z)|^2 \d x \d y \d z + h.c. \\
			+\frac{N^2}{2} \int_{\R^{9}} V_N(x-y,x-z) a^*_x a_x |\varphi(y)|^2|\varphi(z)|^2 \d x \d y \d z \\
			+ N^2 \int_{\R^{9}} V_N(x-y,x-z) a^*_x a_y \varphi(x)\overline{\varphi(y)} |\varphi(z)|^2 \d x \d y \d z
		\end{multlined} \\
	&=: \cL_2^{(1)} + \cL_2^{(2)} + \cL_2^{(3)}, \\
	\cL_1 &= \frac{N^{5/2}}{2} \int_{\R^{9}} V_N(x-y,x-z) a^*_x \varphi(x)|\varphi(y)|^2|\varphi(z)|^2 \d x \d y \d z + h.c.\,, \\
	\intertext{and}
	\cL_0 &= \frac{N}{6} \int_{\R^{9}} N^{2}V_N(x-y,x-z) |\varphi(x)|^2|\varphi(y)|^2|\varphi(z)|^2 \d x \d y \d z\,.
\end{align*}

Gathering up the constant terms, we obtain
\[
	N \left\{ \pscal{ \varphi, h\varphi }_{L^2(\mathbb{R}^{3})}+ \frac{1}{6} \int_{\R^9} N^3V\left(N^{\frac{1}{2}}(x-z,y-z)\right) |\varphi(x)|^2|\varphi(y)|^2|\varphi(z)|^2 \d x \d y \d z \right\},
\]
which is the mean-field energy associated to the particle $\varphi$. This quantity does not take into account the contribution from the scattering process. To obtain it, one must include contributions hidden in the terms $\d \Gamma (h)$, $\cL_3^{(1)}$, and $\cL_6$. We will first prove that the other terms are negligible and then extract the main contribution from the aforementioned terms.

\subsubsection{Controlling the error terms}
	Using~\eqref{eq:coprime_with_3}, we obtain
\[
	\pscal{\Omega, e^{-B} A e^{B} \Omega} = 0 \quad \text{for all } A\in \left\{ a^*\left(h \varphi\right), \cL_1, \cL_2^{(1)}, \cL_3^{(2)}, \cL_3^{(3)}, \cL_4^{(1)}, \cL_5 \right\}
\]
because they create or annihilate a number of particles which is coprime with $3$. It remains to bound the following terms.

\begin{lemma}\label{lem:L_2_4}
	There exists $C_\varphi>0$, depending only on $\norm{\varphi}_{\infty}$, such that
	\[
		\pscal{\Omega, e^{-B} (\cL_2^{(2)} + \cL_2^{(3)}) e^{B} \Omega} \leq C \quad \text{and} \quad \pscal{\Omega, e^{-B} (\cL_4^{(2)}+ \cL_4^{(3)}) e^{B} \Omega} \leq C_\varphi N^{1/2}\,.
	\]
\end{lemma}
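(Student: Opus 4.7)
\medskip
\noindent\textbf{Proof plan for Lemma~\ref{lem:L_2_4}.}
The overall strategy is to bound each of the four operators on the right-hand side by an explicit constant times a power of $\cN$ (the number operator), and then to invoke Lemma~\ref{lem:nb_particule} to bound the vacuum expectation $\pscal{\Omega,U_N^*\,\cdot\,U_N\Omega}$ by the corresponding moment of $\cN$, which is uniformly bounded in $N$. All the scaling bookkeeping boils down to the two identities $\int_{\R^6} V_N = N^{-2}\|V\|_1$ and $\int_{\R^3} V_N(s,u)\d u = N^{-1/2}W(N^{1/2}s)$ with $W(s):=\int_{\R^3} V(s,u)\d u$, together with $\int_{\R^3} W(N^{1/2}s)\d s = N^{-3/2}\|V\|_1$.

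\smallskip
\noindent\textbf{The terms $\cL_2^{(2)}$ and $\cL_2^{(3)}$.}
For $\cL_2^{(2)}$, I integrate $y$ and $z$ out, using $|\varphi|^2\le \|\varphi\|_\infty^2$ and $\int V_N = N^{-2}\|V\|_1$, to obtain the one-body operator bound $0\le \cL_2^{(2)} \le C\|\varphi\|_\infty^4\|V\|_1\,\cN$. For $\cL_2^{(3)}$, I rewrite it as $\cL_2^{(3)} = \mathrm{d}\Gamma(K)$, where
\[
	K(x,y) := N^{2}\left(\int_{\R^3} V_N(x-y,x-z)|\varphi(z)|^2\d z\right)\varphi(x)\overline{\varphi(y)}\,.
\]
The symmetry~\eqref{eq:sym} implies $V_N(x-y,x-z)=V_N(y-x,y-z)$, so $K$ is Hermitian. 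The Schur test applied with the scaling identities above gives $\|K\|_{\mathrm{op}}\le C\|\varphi\|_\infty^4\|V\|_1$, hence $\pm\cL_2^{(3)}\le C_\varphi\,\cN$. Combining and invoking Lemma~\ref{lem:nb_particule} with $k=1$ yields $\pscal{\Omega,U_N^*(\cL_2^{(2)}+\cL_2^{(3)})U_N\Omega}\le C$.

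\smallskip
\noindent\textbf{The terms $\cL_4^{(3)}$ and $\cL_4^{(2)}$.}
The term $\cL_4^{(3)}$ is manifestly nonnegative; after integrating out $z$ and using $|\varphi(z)|^2\le\|\varphi\|_\infty^2$, it is bounded by $\tfrac{N^{1/2}}{2}\|\varphi\|_\infty^2\int W(N^{1/2}(x-y))a^*_xa^*_ya_ya_x\d x\d y$. Restricted to the $n$-body sector this acts as multiplication by $\sum_{i\ne j}W(N^{1/2}(x_i-x_j))\le\|W\|_\infty\,n(n-1)$, giving the operator bound $\cL_4^{(3)}\le C_\varphi N^{1/2}\cN^2$. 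Lemma~\ref{lem:nb_particule} with $k=2$ then controls its expectation by $C_\varphi N^{1/2}$. For $\cL_4^{(2)}=A+A^*$, I work at the level of matrix elements on a general state $\Psi$ and use Cauchy--Schwarz:
\[
	|\pscal{\Psi,A\Psi}| \le \frac{N}{2}\int V_N(x-y,x-z)|\varphi(y)\varphi(z)|\,\|a_xa_y\Psi\|\,\|a_xa_z\Psi\|\d x\d y\d z \le \tfrac{N}{2}\|\varphi\|_\infty^2\pscal{\Psi,T\Psi}\,,
\]
where $T:=\int V_N(x-y,x-z)a^*_xa^*_ya_ya_x\d x\d y\d z$, via Cauchy--Schwarz in the pair $(y,z)$ and the $y\leftrightarrow z$ symmetry of $V$. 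Since $\tfrac{N}{2}\|\varphi\|_\infty^2 T$ is nothing but the operator that dominated $\cL_4^{(3)}$ above, the bound $|\pscal{\Psi,\cL_4^{(2)}\Psi}|\le C_\varphi N^{1/2}\pscal{\Psi,\cN^2\Psi}$ follows, and Lemma~\ref{lem:nb_particule} closes the argument.

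\smallskip
\noindent\textbf{Main obstacle.}
None of the steps is conceptually hard: all estimates are of second-moment type and depend only on boundedness and compact support of $V$, on $\varphi\in L^\infty$, and on the uniform moment bound for $\cN$ under $U_N$. The only place where the three-body structure enters nontrivially is in the verification that $K$ in the treatment of $\cL_2^{(3)}$ is Hermitian, which uses precisely the second symmetry in~\eqref{eq:sym}; if this were not available one would have to split $\cL_2^{(3)}+h.c.$ instead, with essentially the same final estimate.
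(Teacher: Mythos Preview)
Your proposal is correct and follows essentially the same approach as the paper: bound each $\cL_i^{(j)}$ by a constant times a power of $\cN$, then apply Lemma~\ref{lem:nb_particule}. The only cosmetic difference is your treatment of $\cL_2^{(3)}$ via the Schur test on the one-body kernel $K$, whereas the paper uses the operator Cauchy--Schwarz inequality $a_x^*a_y+a_y^*a_x\le a_x^*a_x+a_y^*a_y$ directly; the two are equivalent here and yield the same bound $\pm\cL_2^{(3)}\le C_\varphi\,\cN$.
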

\begin{proof}
	We prove
	\begin{align*}
		\cL_2^{(2)} + \cL_2^{(3)} &\leq C \norm{V}_{L^1(\R^6)} \norm{\varphi}_{\infty}^2 \norm{\varphi}_{2}^2 \cN
		\intertext{and}
		\cL_4^{(2)} + \cL_4^{(3)} &\leq C \sup_{x\in \R^{3} } \norm{V(\cdot,x)}_{L^1(\R^{3})} \norm{\varphi}_{\infty}^2 N^{1/2} \cN^2\,,
	\end{align*}
	since the result then follows by~\eqref{eq:lem_nb_particule_U_N}. The first bound comes from
	\[
		\cL_2^{(2)} = \frac{1}{2} \int_{\mathbb{R}^{3}} \braket{ V(x-\cdot,x-\cdot) }_{\varphi^{\otimes 2}} a^*_x a_x \d x \leq C \norm{V}_{L^1(\R^{6})} \norm{\varphi}_{\infty}^4 \cN
	\]
	and, using the Cauchy--Schwarz inequality,
	\begin{align*}
		\cL_2^{(3)} &\leq \int_{\R^{9}} N^{3 }V(N^{1/2}(x-y,x-z)) \left( a^*_x a_x |\varphi(y)|^2 + a^*_y a_y |\varphi(x)|^2 \right) |\varphi(z)|^2 \d x \d y \d z \\
			&\leq C \norm{V}_{L^1(\R^{6})} \norm{\varphi}_{\infty}^4 \cN\,.
	\end{align*}
	
	For the second bound, on the one hand, using the pointwise inequality
	\[
		\int_{\R^{3}} N^2 V(N^{1/2}(x-y,x-z)) |\varphi(z)|^2 \d z \leq N^{1/2} \norm{\varphi}_{\infty}^2\sup_{x\in \R^{3} } \norm{V(x,\cdot)}_{L^1(\R^{3})}\,,
	\]
	we obtain
	\[
		\cL_4^{(3)} \leq C \sup_{x\in \R^{3} } \norm{V(x,\cdot)}_{L^1(\R^{3})} \norm{\varphi}_{\infty}^2 N^{1/2} \cN^2\,,
	\]
	and, on the other hand, by the Cauchy--Schwarz inequality, we have
	\begin{align*}
		\cL_4^{(2)} &\leq \frac{N}{2} \int_{\R^{9}} V_N(x-y,x-z) \left(a^*_x a^*_y a_x a_y |\varphi(z)|^2 + a^*_x a^*_z a_x a_z |\varphi(y)|^2 \right) \d x \d y \d z \\
			&\leq C \sup_{x\in \R^{3} } \norm{V(x,\cdot)}_{L^1(\R^{3})} \norm{\varphi}_{\infty}^2 N^{1/2} \cN^2\,. \qedhere
	\end{align*}
\end{proof}

\subsubsection{Contribution from $\d \Gamma(h)$}
\begin{lemma}\label{lem:E_kin}
	There exists $C_\varphi>0$, depending only on $\norm{h \varphi}_{2}$ and $\norm{\varphi}_{\infty}$, such that
	\begin{multline*}
		\pscal{\Omega, e^{-B} \d \Gamma(h) e^{B} \Omega} \\
		\leq \frac{1}{6} N^3 \int_{\R^9} (V_Nf_N\omega_N)(x-z,y-z) |\varphi(x)|^2| \varphi(y)|^2|\varphi(z)|^2 \d x \d y \d z + C_\varphi N^{1/2}\,.
	\end{multline*}
\end{lemma}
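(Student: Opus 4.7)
The plan is to evaluate $\pscal{\Omega, e^{-B}\, \d\Gamma(h)\, e^B \Omega}$ via Duhamel's formula, extract the leading contribution using the scattering equation, and bound the remainders using the scaling of $\omega_N$ and the moment estimate of Lemma~\ref{lem:nb_particule}. Since $B^* = -B$, one has
\[
	\pscal{\Omega,\, e^{-B}\, \d\Gamma(h)\, e^B \Omega} = \int_0^1 \pscal{U_s \Omega,\ [B,\, \d\Gamma(h)]\, U_s \Omega}\, \d s, \qquad U_s := e^{-sB}.
\]
A direct computation, using $[\d\Gamma(h),\Theta]=0$ (as $\Theta$ is a function of $\cN$) and $[\d\Gamma(h), a_x] = -(ha)_x$, yields $[B,\, \d\Gamma(h)] = \Theta\, C^* + C\, \Theta$ where
\[
	C := [\d\Gamma(h),\, B_1] = \frac{N^{3/2}}{6}\int_{\R^9}\bigl[(h_x+h_y+h_z)\cB\bigr](x,y,z)\, a_x a_y a_z\, \d x\, \d y\, \d z.
\]

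The leading term comes from the pairing between the vacuum and three-particle sectors of $U_s\Omega$. Since $B\Omega = \Theta B_1^*\Omega = B_1^*\Omega$ (because $B_1^*\Omega$ has exactly $3$ particles, on which $\Theta=1$ for $N$ large), the expansion $U_s\Omega = \Omega - s B_1^*\Omega + O(s^2)$ combined with a Wick calculation produces
\[
	\pscal{B_1^*\Omega,\, C^*\Omega} \ =\ -\frac{N^3}{6}\pscal{\cB,\, (h_x+h_y+h_z)\cB},
\]
so that integrating over $s\in[0,1]$ and taking twice the real part gives the main contribution $\frac{N^3}{6}\pscal{\cB,(h_x+h_y+h_z)\cB}$. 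Applying the Leibniz rule to $\cB = \omega_N \varphi(x)\varphi(y)\varphi(z)$ together with the scaled scattering equation $(-\Delta_x-\Delta_y-\Delta_z)\omega_N = V_N f_N$ from Theorem~\ref{thm:scattering-M}, the piece $-\varphi(x)\varphi(y)\varphi(z)(\Delta_x+\Delta_y+\Delta_z)\omega_N$ paired with $\overline\cB$ produces exactly the target $\frac{N^3}{6}\int V_N f_N\omega_N|\varphi|^6$. The remaining Leibniz cross-terms of the form $\int\omega_N\,\nabla_\xi\omega_N\cdot \nabla\varphi(\xi)\,|\varphi|^5$ and $\int\omega_N^2\,\Delta\varphi\cdot\varphi\,|\varphi|^4$, together with the magnetic and external-potential contributions $\pscal{\cB, [h-(-\Delta)]\cB}$, are controlled using the scaling estimates
\[
	\|\omega_N\|_{L^2(\R^6)}^2 = O(N^{-3}),\qquad \|\nabla\omega_N\|_{L^2(\R^6)}^2 = O(N^{-2}),
\]
combined with $\|\varphi\|_\infty, \|h\varphi\|_2<\infty$; they contribute corrections of order $N^3\cdot O(N^{-5/2}) = O(N^{1/2})$.

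The main obstacle is controlling the sub-leading Duhamel contributions coming from pairings between sectors of more than three particles in $U_s\Omega$. A direct Cauchy--Schwarz bound on $|\pscal{B_1^{*\,2}\Omega,\,\Theta\,C^*\,B_1^*\Omega}|$ only gives $O(N)$, which matches the leading term and is therefore insufficient. The sharp $O(N^{1/2})$ bound requires analysing the commutator $[B_1, C^*]$ in the spirit of Lemma~\ref{lem:commutator_B_B_star}: its scalar part is proportional to $\pscal{\cB,(h_x+h_y+h_z)\cB} = O(N^{-2})$ while its quadratic and quartic operator parts are controlled in operator norm by $C_\varphi\,\cN^2/N$. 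One further iteration of Duhamel's formula together with the uniform moment bound $\pscal{\Omega,\,U_s^*\,\cN^k\,U_s\Omega}\le C_k$ from Lemma~\ref{lem:nb_particule} then shows that the contributions from higher particle sectors telescope against the next-order corrections in the $3$-particle sector and leave only an $O(N^{1/2})$ remainder.
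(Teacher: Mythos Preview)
Your plan is essentially the paper's --- iterate Duhamel, use the scattering equation for the scalar part, bound the rest by Lemma~\ref{lem:nb_particule} --- but the execution in the last paragraph has a real gap.

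The paper does the Leibniz split \emph{at the operator level, before the second Duhamel}. Writing $h_x\cB = (-\Delta_x\omega_N)\varphi^{\otimes 3} + \omega_N(h\varphi)\varphi^{\otimes 2} + \text{cross terms}$ decomposes the first commutator as $-[B,\d\Gamma(h)] = \cK_1+\cK_2+\cK_3$. The cubic operators $\cK_2,\cK_3$ have kernels built from $\omega_N$ and $\nabla\omega_N$, and one proves the genuine operator inequality $\pm(\cK_2+\cK_3)\le CN^{1/2}(\cN+1)^{3/2}$; their contribution is then $\le CN^{1/2}$ immediately after the first Duhamel via Lemma~\ref{lem:nb_particule}. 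Only $\cK_1$, with kernel $N^{3/2}(V_Nf_N)\varphi^{\otimes 3}$, is iterated a second time, and the Wick expansion of $-[B,\cK_1]$ produces the scalar main term plus quadratic and quartic pieces bounded by $CN^{1/2}(\cN+\cN^2)$.

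You instead keep the full $C$ and assert that the operator parts of $[B_1,C^*]$ are bounded by $C_\varphi\cN^2/N$ ``in the spirit of Lemma~\ref{lem:commutator_B_B_star}''. This is false. In Lemma~\ref{lem:commutator_B_B_star} both kernels are $N^{3/2}\omega_N$, whose $L^2(\R^6)$-norm is $O(1)$, and the pointwise decay of $\omega$ gives $\|T\|_{op}\le CN^{-1/2}$, hence $\cN^2/N$. In $[B_1,C^*]$ one kernel is $N^{3/2}(h_x+h_y+h_z)\cB$, whose dominant piece $N^{3/2}(V_Nf_N)\varphi^{\otimes 3}$ has $L^2$-norm of order $N$, not $O(1)$. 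The correct bound on the operator parts is $CN^{1/2}(\cN+\cN^2)$, obtained in the paper via the pointwise estimate $\int_{\R^3} N^2\omega_N(x',y,z)|\varphi(x')|\,dx'\le C\|\varphi\|_\infty/|y-z|$ rather than a crude Hilbert--Schmidt bound. With this correction the argument does close, yielding precisely the $O(N^{1/2})$ remainder --- not the $O(N^{-1})$ your stated bound would imply.

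A secondary point: your order-$s$ extraction of the main term from $U_s\Omega = \Omega - sB_1^*\Omega + O(s^2)$ is redundant with the second Duhamel iteration. The scalar part of the double commutator, integrated over $\int_0^1\!\int_0^s du\,ds = 1/2$, already delivers $\tfrac{N^3}{6}\int V_Nf_N\omega_N|\varphi|^6$. There is no ``telescoping'' between sectors; once the non-scalar parts of $[B,\cK_1]$ are bounded as operators, Lemma~\ref{lem:nb_particule} finishes the job without any cancellations.
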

\begin{proof}
	We have
	\begin{equation}\label{eq:lem_apriori_kin_1}
		e^{-B} \d \Gamma(h) e^{B} = \d \Gamma(h) - \int_0^1e^{-sB} [B,\d \Gamma(h)]e^{sB} \d s\,.
	\end{equation}
	Let us compute
	\begin{align*}
		-&[B,\d \Gamma(h)] \\
		&= [B_1\Theta , \d \Gamma(h)] + h.c. \nn \\
		&= \int_{\R^9} \int_{\R^3} \overline{\cB(x,y,z)} [a_x a_y a_z \Theta , a^*_{x'}h_{x'} a_{x'}] \d x \d y \d z \d x' + h.c. \nn \\
		&= 3 \int_{\R^9} (h_1 \cB)(x,y,z) a_x a_y a_z \Theta \d x \d y \d z + h.c. \nn \\
		&=
			\begin{multlined}[t][.87\textwidth]
				-\frac{1}{2} N^{3/2} \int_{\R^9} \Big( (-\Delta_1 \omega_N(x,y,z)) \varphi(x) \varphi(y) \varphi(z) - \omega_N(x,y,z) (h\varphi) (x) \varphi(y) \varphi(z) \\
				- 2 i\nabla_1 \omega_N (x,y,z) \cdot \big( [-i \nabla + A](\varphi)(x) \big) \varphi(y) \varphi(z) \Big) a_x a_y a_z \Theta \d x \d y \d z + h.c.
			\end{multlined} \nn \\
		&=: \cK_1 + \cK_2 + \cK_3\,.
	\end{align*}
	The main contribution comes from $\cK_1$. Indeed, using the Hölder inequality with
	\begin{multline*}
		\norm{N^{3/2} \nabla_1 \omega_N (x,y,z) \cdot ([-i\nabla + A](\varphi)(x)) \varphi(y) \varphi(z)}_{L^2(\R^9)} \\
		\leq \norm{ (-i\nabla + A) \varphi }_{2} \norm{\varphi}_{\infty}^2 \norm{ N^{3/2} \nabla_{1} \omega_N }_{L^2(\mathbb{R}^{6})} \leq C N^{1/2}
	\end{multline*}
	and
	\[
		\norm{N^{3/2} \omega_N(x,y,z) (h\varphi) (x) \varphi(y) \varphi(z)}_{L^2(\R^9)} \leq \norm{\omega}_{L^2(\mathbb{R}^{6})} \norm{h \varphi}_{2} \norm{\varphi}_{\infty}\leq C\,,
	\]
	we obtain
	\begin{equation}\label{eq:lem_apriori_kin_3}
		\pm (\cK_2 + \cK_3) \leq C(1+N^{1/2})(\cN+1)^{3/2}\,.
	\end{equation}
	Noticing that
	\begin{align*}
		-\Delta_1\omega_N(x,y,z) - \Delta_2\omega_N(x,y,z) -\Delta_3 \omega_N(x,y,z)
			&=-2\Delta_{\cM}\omega_N (x-z,y-z) \\
			&= - (V_N f_N)(x-z,y-z)
	\end{align*}
	and using symmetry, we can rewrite $\cK_1$ as
	\[
		\cK_1 = -\frac{1}{6} N^{3/2} \int_{\R^9} (V_N f_N)(x-z,y-z) \varphi(x) \varphi(y) \varphi(z) a_x a_y a_z \Theta \d x \d y \d z + h.c.
	\]
	From~\eqref{eq:lem_apriori_kin_1}--\eqref{eq:lem_apriori_kin_3}, we obtain
	\begin{align*}
		e^{-B} \d \Gamma(h) e^{B} &= \d \Gamma(h) + \int_0^1 e^{-sB} (\cK_1 + \cK_2 + \cK_3) e^{sB} \d s \\
			&\leq
			\begin{multlined}[t][0.8\textwidth]
				\d \Gamma(h) + \cK_1 - \int_0^1\int_{0}^s e^{-uB} [B,\cK_1]e^{uB} \d u \d s \\
				+ C N^{1/2} \int_0^1 e^{-sB} (\cN+1)^{3/2}e^{sB} \d s\,.
			\end{multlined}
	\end{align*}
	Since we test against the vacuum, the first two terms will not contribute and the last one will give a contribution of order $N^{1/2}$. Let us then compute
	\begin{align*}
		&\!\!-[B,\cK_1] \\
		&=\begin{multlined}[t][.9\textwidth]
			-\frac{1}{6} N^{3/2} \int_{\R^9} (V_N f_N)(x-z,y-z) \varphi(x) \varphi(y) \varphi(z) \left[a_x a_y a_z \Theta , \Theta B_1^* \right] \d x \d y \d z + h.c.
		\end{multlined} \\
		&=
		\begin{multlined}[t][.9\textwidth]
			\begin{multlined}[t][.9\textwidth]
				-\frac{1}{6} N^{3/2} \int_{\R^9} \int_{\R^9} (V_N f_N)(x-z,y-z) \varphi(x) \varphi(y) \varphi(z) b(x',y',z') \\
				\times \Theta(\cN +3)[a_x a_y a_z, a^*_{x'} a^*_{y'} a^*_{z'}] \d x \d y \d z \d x' \d y' \d z' + h.c
			\end{multlined} \\
			\begin{multlined}[t][.9\textwidth]
				-\frac{1}{6} N^{3/2} \int_{\R^9} \int_{\R^9} (V_N f_N)(x-z,y-z) \varphi(x) \varphi(y) \varphi(z) b(x',y',z') \\
				\times \partial \Theta a^*_{x'} a^*_{y'} a^*_{z'} a_x a_y a_z \d x \d y \d z \d x' \d y' \d z' + h.c.
			\end{multlined}
		\end{multlined} \\
		&=: \Theta(\cN +3) \cK_{1,1} + \partial \Theta \cK_{1,2}\,,
	\end{align*}
	where we recall that $\partial \Theta = \Theta (\cN + 3) - \Theta$. The main contribution comes from $\cK_{1,1}$. Indeed, using $ | \partial \Theta | \leq C N^{-1} (\cN+1)^2$, we have
	\[
		\pm \partial \Theta \cK_{1,2} \leq C | \partial \Theta | \norm{ N^{3/2} (V_Nf_N) \varphi^{\otimes 3} }_{L^2(\R^9)} \norm{ \cB }_{L^2(\R^9)} (\cN+1)^3 \leq C (\cN+1)^5\,.
	\]
	Expanding the commutator in $\cK_{1,1}$, we obtain
	\begin{align*}
		\cK_{1,1} &=
		\begin{multlined}[t][.9\textwidth]
			\frac{1}{3} N^3 \int_{\R^9} (V_Nf_N\omega_N)(x-z,y-z) |\varphi(x)|^2| \varphi(y)|^2|\varphi(z)|^2 \d x \d y \d z \\
			\begin{multlined}[t][.85\textwidth]
				+ N^3 \int_{\R^{12}} (V_Nf_N)(x,y,z) \omega_N(x',y,z) \varphi(x) \overline{\varphi(x')} | \varphi(y)|^2|\varphi(z)|^2 \\
				\times a^*_x a_{x'} \d x \d x' \d y \d z
			\end{multlined} \\
			\begin{multlined}[t][.85\textwidth]
				+ \frac{1}{2} N^3 \int_{\R^{15}} (V_Nf_N)(x,y,z) \omega_N(x',y',z) \varphi(x) \varphi(y) \overline{\varphi(x')} \overline{\varphi(y')} |\varphi(z)|^2 \\
				\times a^*_x a^*_y a_{x'} a_{y'} \d x \d x' \d y \d y' \d z
			\end{multlined}
		\end{multlined} \\
		&=: (\mathrm{I}) + (\mathrm{II}) + (\mathrm{III})\,.
	\end{align*}
	Of course $(\mathrm{I})$ is the main contribution. Let us bound the other terms. From~\eqref{eq:bounds_on_w_general}, we have
	\[
		\int_{\R^{3}} N^2 \omega_N(x',y,z) |\varphi(x')| \d x' \leq C \int_{\R^{3}} \frac{\norm{\varphi}_{\infty}}{\left(|x'-y|^2 + |y-z|^2\right)^2} \d x' \leq C\frac{ \norm{\varphi}_{\infty} }{|y-z|}\,,
	\]
	which leads to
	\begin{align*}
		(\mathrm{II})
		&\leq \begin{multlined}[t][.87\textwidth]
			C N^3 \int_{\R^{12}} (V_Nf_N)(x,y,z) \omega_N(x',y,z) |\varphi(x) \overline{\varphi(x')}| | \varphi(y)|^2|\varphi(z)|^2 \\
			\times (a^*_x a_{x} + a^*_{x'} a_{x'}) \d x \d x' \d y \d z
		\end{multlined}\\
		&\leq \begin{multlined}[t][.87\textwidth]
			C \norm{\varphi}_{\infty} N \int_{\R^{9}} (V_Nf_N)(x,y,z) |\varphi(x)| \frac{| \varphi(y)|^2|\varphi(z)|^2}{|y-z|} a^*_x a_{x} \d x \d y \d z \\
			+ C N^{\frac{5}{2}} \int_{\R^{9}} \omega_N(x',y,z) |\overline{\varphi(x')}| | \varphi(y)|^2|\varphi(z)|^2 a^*_{x'} a_{x'} \d x' \d y \d z
		\end{multlined} \\
		&\leq \begin{multlined}[t][.87\textwidth]
			C \norm{\varphi}_{\infty}^4 \norm{|\varphi|^2 \ast \frac{1}{|\cdot|} \ast \1_{ \left\{|\cdot|\leq C N^{-\frac{1}{2}}\right\} }}_{L^\infty(\mathbb{R}^{3})} N^2 \cN \\
			+ C \norm{|\varphi|^2 \ast \frac{1}{|\cdot|}}_{L^\infty(\mathbb{R}^{3})} \norm{\varphi}_{\infty}^4 N^{\frac{1}{2}} \cN
		\end{multlined}\\
		& \leq C \left( \norm{h \varphi}_{2} + \norm{\varphi}_{\infty}\right)^6 N^{\frac{1}{2}} \cN\,.
	\end{align*}
	Similar computations give
	\begin{align*}
		(\mathrm{III})
		&\leq \begin{multlined}[t]
			N^3 \int_{\R^{15}} (V_Nf_N)(x,y,z) \omega_N(x',y',z) \varphi(x) \varphi(y) \overline{\varphi(x')} \overline{\varphi(y')} |\varphi(z)|^2 \\
			\times \left( a^*_x a^*_y a_{x} a_{y} + a^*_{x'} a^*_{y'} a_{x'} a_{y'} \right) \d x \d x' \d y \d y' \d z
		\end{multlined} \\
		&\leq C \norm{\varphi}_{H^1(\mathbb{R}^{3})}^2 \norm{\varphi}_{\infty}^4 N^{\frac{1}{2}} \cN^2\,.
	\end{align*}
	We therefore obtain
	\begin{multline*}
		\cK_{1,1} \leq \frac{1}{3} N^3 \int_{\R^9} (V_N f_N \omega_N)(x-z,y-z) |\varphi(x)|^2| \varphi(y)|^2|\varphi(z)|^2 \d x \d y \d z \\
		+ C N^{\frac{1}{2}} (1+\cN^2)\,.
	\end{multline*}
	Noting that $\int_0^1 \int_0^s \d u \d s = 1/2$ and $0\leq 1-\Theta \leq N^{-1/2} \cN$, an application of Lemma~\ref{lem:nb_particule} concludes the proof of Lemma~\ref{lem:E_kin}.
\end{proof}

\subsubsection{Contribution from $\cL_6$}
\begin{lemma}\label{lem:L_6}
	There exists $C_\varphi>0$, depending only on $\norm{h \varphi}_{2}$ and $\norm{\varphi}_{\infty}$, such that
	\begin{multline*}
		\pscal{\Omega, e^{-B} \cL_6 e^{B} \Omega} \\
		\leq \frac{1}{6} N^3 \int_{\R^9} (V_N\omega_N^2)(x-z,y-z) |\varphi(x)|^2| \varphi(y)|^2|\varphi(z)|^2 \d x \d y \d z + C_\varphi N^{\frac{1}{2}}\,.
	\end{multline*}
\end{lemma}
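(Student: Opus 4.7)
The approach parallels the proof of Lemma~\ref{lem:E_kin}, using Duhamel twice. Writing
\[
	e^{-B}\cL_6 e^B = \cL_6 + [\cL_6, B] + \int_0^1\int_0^s e^{-uB}[[\cL_6, B], B]e^{uB}\,\d u\,\d s,
\]
and using that $\cL_6\Omega = 0$ together with $\cL_6 = \cL_6^*$ gives $\pscal{\Omega, \cL_6\Omega} = 0$ and $\pscal{\Omega, [\cL_6, B]\Omega} = \pscal{\cL_6\Omega, B\Omega} - \pscal{B^*\Omega, \cL_6\Omega} = 0$, so
\[
	\pscal{\Omega, e^{-B}\cL_6 e^B\Omega} = \int_0^1\int_0^s \pscal{\Omega, e^{-uB}[[\cL_6, B], B]e^{uB}\Omega}\,\d u\,\d s.
\]

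The main term comes from evaluating at $u=0$. Expanding $[[\cL_6, B], B] = \cL_6 B^2 - 2B\cL_6 B + B^2\cL_6$ and using $\cL_6\Omega = 0$ together with $B^* = -B$, one obtains $\pscal{\Omega, [[\cL_6, B], B]\Omega} = 2\pscal{B\Omega, \cL_6 B\Omega}$. For $N \ge 9$, the cutoff $\Theta$ acts trivially on three-particle states so $B\Omega = B_1^*\Omega$. The CCR~\eqref{eq:wick_3_3} and the three-body symmetry of $V_N$ yield
\[
	\cL_6\, a^*_{u_1} a^*_{u_2} a^*_{u_3}\Omega = V_N(u_1-u_2, u_1-u_3)\, a^*_{u_1} a^*_{u_2} a^*_{u_3}\Omega,
\]
so a direct Wick computation gives
\[
	\pscal{B_1^*\Omega, \cL_6 B_1^*\Omega} = \frac{N^3}{6}\int_{\R^9}(V_N\omega_N^2)(x-z, y-z)|\varphi(x)|^2|\varphi(y)|^2|\varphi(z)|^2\,\d x\,\d y\,\d z.
\]
Combined with $\int_0^1\int_0^s\d u\,\d s = 1/2$, this is precisely the main term of the lemma.

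It remains to bound the $u$-dependent correction by $C_\varphi N^{1/2}$. A further application of Duhamel expresses it as the triple-commutator integral $\int_0^1\int_0^s\int_0^u \pscal{\Omega, e^{-tB}[[[\cL_6, B], B], B]e^{tB}\Omega}\,\d t\,\d u\,\d s$. Expanding each inner $[\cdot, B]$ via~\eqref{eq:wick_3_3}, every commutation produces either a scalar piece (analogous to $\mathcal{Q}(\varphi)$ of Lemma~\ref{lem:commutator_B_B_star}, but with additional insertions of $V_N$) or operator pieces carrying residual creation/annihilation operators. The scalar terms are estimated by changing variables to rescale $V_N$ back to $V$; the operator pieces are controlled by Cauchy--Schwarz, using operator-norm bounds on integral kernels of the type of the operator $T$ in the proof of Lemma~\ref{lem:commutator_B_B_star}, by the cutoff estimate $|\partial\Theta| \le C N^{-1}(\cN + 1)^2$, and by the moment bounds of Lemma~\ref{lem:nb_particule}.

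The principal obstacle is the bookkeeping: naive estimation of each operator piece produces an $O(N)$ contribution matching the leading order. The required $N^{-1/2}$ saving at each step of the commutator expansion comes from one of three sources: the short range $\supp V_N \subset B(0, CN^{-1/2})$, exploited through rescaling back to the unit-scale $V$; the particle-number cutoff $\Theta = \1_{[0, N^{1/2}]}(\cN)$ and its difference estimate $|\partial\Theta|\le C N^{-1}(\cN+1)^2$; and Hardy--Littlewood--Sobolev type operator-norm bounds on the kernel operators. Three nested commutations yield the reduction from $O(N)$ to $O(N^{1/2})$, as required.
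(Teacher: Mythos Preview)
Your extraction of the leading term is correct and arguably cleaner than the paper's: the identity $\pscal{\Omega,[[\cL_6,B],B]\Omega}=2\pscal{B_1^*\Omega,\cL_6 B_1^*\Omega}$ together with the direct three-particle computation gives exactly $\tfrac{N^3}{6}\int V_N\omega_N^2\,|\varphi|^2|\varphi|^2|\varphi|^2$, and the factor $\tfrac12$ from $\int_0^1\int_0^s\d u\,\d s$ is properly accounted for.

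The gap is in the remainder. You assert that the triple commutator $[[[\cL_6,B],B],B]$ is controlled by $C_\varphi N^{1/2}(\cN+1)^k$, but your last two paragraphs are a description of a strategy, not an argument. The double commutator $[[\cL_6,B],B]$ contains, besides the leading scalar, a number of operator pieces you never write down --- in the paper's notation these include $[\widetilde{\cL}_6^{(2)},B]$ and $[\widetilde{\cL}_6^{(3)},B]$, whose kernels involve $V_N\omega_N$ contracted in only one or two variables --- and a further commutation with $B$ multiplies the number of terms again. The claim that ``three nested commutations yield the reduction from $O(N)$ to $O(N^{1/2})$'' is not substantiated: the gain is not uniform per commutation, and you would have to track the $N$-scaling of each piece separately to confirm that none of them exceeds $N^{1/2}$.

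The paper avoids this proliferation by splitting \emph{after the first} Duhamel rather than after the second. It computes $-[B,\cL_6]=\widetilde{\cL}_6^{(1)}+\widetilde{\cL}_6^{(2)}+\widetilde{\cL}_6^{(3)}$ explicitly; the last two pieces are bounded directly by $CN^{1/4}(\cN+1)^{5/2}$ and $CN^{1/4}(\cN+1)^{7/2}$ via Cauchy--Schwarz and the pointwise estimate $V_N\le CN\1_{\{|\cdot|\le CN^{-1/2}\}}$, and are then absorbed through Lemma~\ref{lem:nb_particule} without any further commutation. Only the cubic piece $\widetilde{\cL}_6^{(1)}$ (kernel $N^{3/2}V_N\omega_N\varphi^{\otimes3}$) is fed into a second Duhamel; its commutator with $B$ produces the scalar main term plus operator remainders bounded by $CN^{1/2}(\cN+1)^2+C(\cN+1)^5$. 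No triple commutator is ever needed. If you wish to complete your route, you must actually expand $[[[\cL_6,B],B],B]$ and estimate every resulting kernel; this is feasible but substantially longer than the paper's organization.
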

\begin{proof}
	We have
	\[
		e^{-B} \cL_6 e^{B} = \cL_6 - \int_0^1e^{sB} [B,\cL_6] e^{-sB} \d s \,.
	\]
	When estimated against the vacuum, only the last term gives a non zero contribution. Let us therefore focus on it. Using that $[\mathcal{X} \mathcal{Y}, \mathcal{Z}] = \mathcal{X} [\mathcal{Y}, \mathcal{Z}] $ when $[\mathcal{X} ,\mathcal{Z}] = 0$, we have
	\begin{align*}
		&-[B,\cL_6] \\
		&\begin{multlined}[t][.87\textwidth]
			\,\, = -\frac{1}{36} \int_{\R^{9}} \int_{\R^{9}} V_N(x-y,x-z) [a_x a_y a_z, a^*_{x'} a^*_{y'} a^*_{z'}]a_x a_y a_z \\
			\times \Theta N^{3/2}\omega_N(x',y',z') \overline{\varphi(x')}\overline{\varphi(y')}\overline{ \varphi(z')} \d x \d y \d z \d x' \d y' \d z' + h.c.
		\end{multlined} \\
		&\begin{multlined}[t][.87\textwidth]
			\,\, = -\frac{1}{6} \int_{\R^{9}} N^{3/2} V_N(x-y,x-z) \omega_N(x,y,z) \overline{\varphi(x)\varphi(y) \varphi(z)} a_x a_y a_z \Theta \d x \d y \d z + h.c. \\
			\begin{multlined}[t][.87\textwidth]
				-2 \int_{\R^{12}} N^{3/2} V_N(x-y,x-z) \omega_N(x',y,z) \overline{\varphi(x')\varphi(y) \varphi(z)} \\
				\times a^*_x a_x a_y a_z a_{x'} \Theta \d x \d y \d z \d x' + h.c.
			\end{multlined} \\
			\begin{multlined}[t][.87\textwidth]
				- \frac{1}{4} \int_{\R^{15}} N^{3/2} V_N(x-y,x-z) \omega_N(x',y',z) \overline{\varphi(x')\varphi(y') \varphi(z)} \\
				\times a^*_x a^*_y a_x a_y a_{x'} a_{y'} a_z \Theta \d x \d y \d z \d x' \d y' + h.c.
			\end{multlined}
		\end{multlined} \\
		&\,\,=: \widetilde{\cL}_6^{(1)} + \widetilde{\cL}_6^{(2)} + \widetilde{\cL}_6^{(3)}\,.
	\end{align*}
	We now prove that only the contribution of $\widetilde{\cL}_6^{(1)}$ is of order $N$.
	
	On one hand, using the Cauchy--Schwarz inequality, we obtain for all $\eta>0$ that
	\begin{align*}
		\widetilde{\cL}_6^{(2)}
		&\leq \begin{multlined}[t][.87\textwidth]
			C \eta \int_{\R^{12}} \Theta a^*_x a^*_y a^*_z a^*_{x'} \left(\cN+1\right)^{-3/2} a_x a_y a_z a_{x'} \d x \d y \d z \d x' \\
			+ C\eta^{-1} \int_{\R^{3}} \Theta(\cN+3) \left(\int_{\R^9} N^{3} V_N(x-y,x-z)^2 \omega_N(x',y,z)^2 \right. \\
			\left. \times |\varphi(x')|^2|\varphi(y)|^2 |\varphi(z)|^2 \d y \d z \d x' \vphantom{\int_{\R^9}}\right) a^*_x \left(\cN+1\right)^{3/2}a_x\d x
		\end{multlined} \\
		&\leq \begin{multlined}[t][.87\textwidth]
			C \eta (\cN+1)^{5/2} + \eta^{-1} N^{1/2} \norm{\varphi}_{\infty}^4 \norm{N^{3/2} \omega_N}_{L^2(\mathbb{R}^{6})}^2 \\
			\times \norm{|\varphi|^2 \ast N^{3/2}\1_{ \left\{ |\cdot|\leq C N^{-1/2} \right\} }}_{L^\infty(\mathbb{R}^{3})} (\cN+1)^{5/2}
		\end{multlined} \\
		&\leq C N^{1/4} (\cN +1)^{5/2}\,,
	\end{align*}
	where we used that $V_N(x,y,z) \leq C N \1_{ \left\{ |x-z|\leq CN^{-1/2} \right\} }$ and optimized over $\eta$.
	
	On the other hand, and similarly, we have for all $\eta>0$ that
	\begin{align*}
		\widetilde{\cL}_6^{(3)}
		&\leq \begin{multlined}[t][.87\textwidth]
			C \eta \int_{\R^{15}} a^*_x a^*_y a^*_z a^*_{x'} a^*_{y'} \left(\cN+1\right)^{-3/2} a_x a_y a_z a_{x'} a_{y'} \d x \d y \d z \d x' \d y' \\
			+ C \eta^{-1} \int_{\R^{15}} N^5 V(N^{1/2}(x-y,x-z))^2 \omega_N(x',y',z)^2 |\varphi(x')|^2 |\varphi(y')|^2 |\varphi(z)|^2 \\
			\times a^*_x a^*_y\left(\cN+1\right)^{3/2} a_x a_y \d x \d y \d z \d x' \d y'
		\end{multlined} \\
		&\leq \begin{multlined}[t][.87\textwidth]
			C \eta \left(\cN+1\right)^{7/2} + \eta^{-1} N^{1/2} \norm{\varphi}_{\infty}^4 \norm{N^{3/2}\omega_N}_{L^2(\mathbb{R}^{6})}^2 \\
			\times \norm{|\varphi|^2 \ast N^{3/2}\1_{ \left\{ |\cdot|\leq C N^{-1/2} \right\} }}_{L^\infty(\mathbb{R}^{3})} \left(\cN+1\right)^{7/2}
		\end{multlined} \\
		&\leq C N^{1/4} \left(\cN+1\right)^{7/2}\,,
	\end{align*}
	where we also optimized over $\eta$ in order to obtain the last inequality.
	
	Finally, to access the contribution of $\widetilde{\cL}_6^{(1)}$, we apply one more time the Duhamel formula:
	\[
		\int_0^1e^{-sB} \widetilde{\cL}_6^{(1)} e^{sB} \d s = \widetilde{\cL}_6^{(1)} - \int_0^1 \int_0^s e^{-uB}[B, \widetilde{\cL}_6^{(1)}] e^{uB} \d u \d s\,.
	\]
	As before, the first term vanishes when tested against the vacuum state $\Omega$ and we focus therefore on the second term. Introducing the notations
	\[
		\widetilde{\cL}_6^{(1,\circ)} := -\frac{1}{6} \int_{\R^{9}} N^{3/2} V_N(x-y,x-z) \omega_N(x,y,z) \overline{\varphi(x)\varphi(y) \varphi(z)} a_x a_y a_z \d x \d y \d z
	\]
	and $\widetilde{\cL}_6^{(1,\dagger)} := (\widetilde{\cL}_6^{(1,\circ)})^*$, we have
	\[
		\widetilde{\cL}_6^{(1)} = \Theta \widetilde{\cL}_6^{(1,\dagger)} + \widetilde{\cL}_6^{(1,\circ)}\Theta \,.
	\]
	A direct computation gives $[B_1\Theta , \widetilde{\cL}_6^{(1,\circ)}\Theta ]=0$, from which we obtain
	\begin{align*}
		-[B, \widetilde{\cL}_6^{(1)}] &= [ B_1\Theta - B_1^*\Theta , \widetilde{\cL}_6^{(1,\circ)}\Theta ] + h.c. = [\widetilde{\cL}_6^{(1,\circ)}\Theta ,\Theta B_1^*] + h.c. \\
			&= \Theta(\cN+3) [\widetilde{\cL}_6^{(1,\circ)},B_1^*] + \partial \Theta B_1^* \widetilde{\cL}_6^{(1,\circ)} +h.c. \\
			&\leq [\widetilde{\cL}_6^{(1,\circ)},B_1^*] + h.c. + C(\cN+1)^5\,,
	\end{align*}
	where we applied the Cauchy--Schwarz inequality and used
	\[
		B_1^* B_1 \leq C \left(\cN+1\right)^3\,, \quad \widetilde{\cL}_6^{(1,\dagger)} \widetilde{\cL}_6^{(1,\circ)} \leq C N^2 (\cN+1)^3\,, \quad \text{and} \quad \partial\Theta \leq C N^{-1}(\cN+1)^2\,.
	\]
	In view of~\eqref{eq:wick_3_3}, we have
	\begin{multline*}
		[\widetilde{\cL}_6^{(1,\circ)},B_1^*] = \frac{1}{6} \int_{\R^9} N^3 (V_N\omega_N^2)(x,y,z) |\varphi(x)|^2 |\varphi(y)|^2 |\varphi(z)|^2 \d x \d y \d z \\
		+ \frac{1}{2} N^3 \int_{\R^{12}} (V_N\omega_N)(x,y,z) \omega(x,y,z') |\varphi(x)|^2 |\varphi(y)|^2 \varphi(z) \overline{\varphi(z')} a^*_z a_{z'} \d x \d y \d z \d z' \\
		\begin{multlined}[t]
			+ \frac{1}{4} N^3 \int_{\R^{15}} (V_N\omega_N)(x,y,z) \omega(x,y',z') |\varphi(x)|^2 \varphi(y) \varphi(z) \overline{\varphi(y')} \overline{\varphi(z')} \\
			\times a^*_y a^*_z a_{y'} a_{z'} \d x \d y \d y' \d z \d z'
		\end{multlined}
	\end{multline*}
	hence
	\begin{multline}\label{eq:comm_L_6_1_B_1}
		[\widetilde{\cL}_6^{(1,\circ)},B_1^*] + h.c. \leq \frac{1}{6} \int_{\R^9} N^3 (V_N\omega_N^2)(x,y,z) |\varphi(x)|^2 |\varphi(y)|^2 |\varphi(z)|^2 \d x \d y \d z \\
		+ C(N^{1/2}\cN + \cN^2)\,,
	\end{multline}
	where we used $V_N \leq \norm{V}_{\infty} N$ and computations similar to the ones in the proof of Lemma~\ref{lem:commutator_B_B_star}. This concludes the proof of Lemma~\ref{lem:L_6}.
\end{proof}

\subsubsection{Contribution from $\cL_3^{(1)}$}

\begin{lemma}\label{lem:L_3}
	There exists $C_\varphi>0$, depending only on $\norm{h \varphi}_{2}$ and $\norm{\varphi}_{\infty}$, such that
	\begin{multline*}
		\pscal{\Omega, e^{-B} \cL_3^{(1)} e^{B} \Omega} \\
		\leq -\frac{1}{3} \int_{\R^9} N^{3} (V_N\omega_N)(x,y,z)|\varphi(x)|^2 |\varphi(y)|^2 |\varphi(z)|^2 \d x \d y \d z + C_\varphi N^{1/2}\,.
	\end{multline*}
\end{lemma}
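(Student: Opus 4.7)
My plan is to apply the Duhamel formula twice and extract the leading c-number contribution from a commutator with $B$. Iterating the identity $\tfrac{d}{ds}\bigl(e^{-sB}\cL_3^{(1)}e^{sB}\bigr) = e^{-sB}[\cL_3^{(1)},B]e^{sB}$, I would write
\[
	e^{-B}\cL_3^{(1)} e^B = \cL_3^{(1)} + [\cL_3^{(1)},B] + \int_0^1\!\!\int_0^s e^{-uB}\bigl[[\cL_3^{(1)},B],B\bigr] e^{uB}\,\d u\,\d s\,.
\]
Tested against $\Omega$, the zeroth-order term $\pscal{\Omega,\cL_3^{(1)}\Omega}$ vanishes, since $\cL_3^{(1)}$ changes the particle number by $\pm 3$ while $\Omega$ lies in the zero-particle sector.

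The main contribution is $\pscal{\Omega,[\cL_3^{(1)},B]\Omega}$. I would split $\cL_3^{(1)}=\cL_3^{(1,\circ)}+\cL_3^{(1,\dagger)}$ with $\cL_3^{(1,\circ)}$ the part containing three creation operators, and use $B=\Theta B_1^*-B_1\Theta$. Since $B_1\Omega=0$, $\cL_3^{(1,\dagger)}\Omega=0$, and since both $\cL_3^{(1,\circ)}\Omega$ and $B_1^*\Omega$ are three-particle states on which $\Theta$ acts as the identity for $N$ large, only two pairings give a nonzero contribution, and a short bookkeeping using $B=\Theta B_1^*-B_1\Theta$ yields
\[
	\pscal{\Omega,[\cL_3^{(1)},B]\Omega} = \pscal{\Omega,\cL_3^{(1,\dagger)}B_1^*\Omega} + \pscal{\Omega,B_1\cL_3^{(1,\circ)}\Omega}\,.
\]
Each of these is a pure c-number computed from the contraction formula~\eqref{eq:wick_3_3}; by the three-body symmetry of $V_N$ and $\omega_N$, the six permutations collapse to a common representative, and a direct calculation gives
\[
	\pscal{\Omega,\cL_3^{(1,\dagger)}B_1^*\Omega}=\pscal{\Omega,B_1\cL_3^{(1,\circ)}\Omega}= -\frac{N^3}{6}\int_{\R^9}(V_N\omega_N)(x,y,z)\,|\varphi(x)|^2|\varphi(y)|^2|\varphi(z)|^2\,\d x\,\d y\,\d z\,,
\]
whose sum is precisely the announced leading term $-\tfrac{1}{3}\int N^3(V_N\omega_N)\,|\varphi|^{\otimes 6}$.

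For the remainder, I would expand $[[\cL_3^{(1)},B],B]$ into a sum of normal-ordered polynomials in creation and annihilation operators, weighted by partially contracted kernels built from $V_N$, $\omega_N$, and $\varphi$, together with auxiliary terms proportional to $\partial\Theta$ which arise whenever $\Theta$ is moved through a particle-number-changing operator. Each resulting term is bounded in vacuum expectation by Cauchy--Schwarz in the same spirit as in Lemmas~\ref{lem:E_kin} and~\ref{lem:L_6}, using: the decay $\omega(\bx)\lesssim(|\bx|^4+1)^{-1}$ from~\eqref{eq:bounds_on_w}; the convolution bound $\norm{|\varphi|^2\ast|\cdot|^{-1}}_\infty\le C_\varphi$; the crude pointwise bound $V_N\le CN\1_{\{|\cdot|\le CN^{-1/2}\}}$; the smallness $|\partial\Theta|\le CN^{-1}(\cN+1)^2$; and the $\cN$-moment estimate of Lemma~\ref{lem:nb_particule} applied to $e^{uB}\Omega$. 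Summing these contributions and performing the $u,s$ integrations will yield a total error of order $N^{1/2}$. The main obstacle is the combinatorial bookkeeping of the many terms produced by the double commutator, together with careful sign tracking to recover the minus sign in front of the leading order; no genuinely new analytic input beyond what is already contained in Lemmas~\ref{lem:commutator_B_B_star}--\ref{lem:L_6} should be needed.
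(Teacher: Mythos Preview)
Your approach is correct and will deliver the stated bound, but it diverges from the paper's proof in one structural choice. You expand $e^{-B}\cL_3^{(1)}e^B$ to second order (double Duhamel), extract the leading c-number from $\pscal{\Omega,[\cL_3^{(1)},B]\Omega}$, and push all remainder into the double commutator $[[\cL_3^{(1)},B],B]$. The paper instead first inserts the cut-off $\Theta$, replacing $\cL_3^{(1)}$ by $\Theta\cL_3^{\dagger}+\cL_3^{\circ}\Theta$ at the cost of an error $\le C(\cN+1)^{7/2}$ (via $N(1-\Theta)\le\cN^2$), and then applies a \emph{single} Duhamel step. The gain is that, after this insertion, $[B_1\Theta,\cL_3^{\circ}\Theta]=0$ automatically (both factors contain only annihilation operators), so the first commutator $-[B,\Theta\cL_3^{\dagger}+\cL_3^{\circ}\Theta]$ reduces to $[\cL_3^{\circ}\Theta,\Theta B_1^*]+h.c.$ and already admits the operator upper bound $(\text{c-number})+C(N^{1/2}\cN+\cN^2)+C(\cN+1)^5$; conjugating by $e^{sB}$ and applying Lemma~\ref{lem:nb_particule} finishes the proof in one stroke, with no second commutator to analyze.

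Your double-Duhamel route works too: the c-number part of $[\cL_3^{(1)},B]$ sits behind a factor $\Theta(\cN+3)$, so in the second commutator it only contributes through $[\Theta(\cN+3),B]$, which carries a $\partial\Theta$ and is suppressed to any polynomial order by the moment bounds of Lemma~\ref{lem:nb_particule}; the remaining operator pieces of $[\cL_3^{(1)},B]$ are of size $N^{1/2}$ times $\cN$-polynomials and stay that way after one more commutation with $B$. This is exactly what the paper does for $\d\Gamma(h)$ and $\cL_6$ in Lemmas~\ref{lem:E_kin} and~\ref{lem:L_6}, so your choice is consistent with the surrounding methodology; the paper simply found a shortcut specific to the cubic structure of $\cL_3^{(1)}$ that halves the combinatorics.
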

\begin{proof}
	Let us first introduce the notations
	\[
		\cL_3^{\dagger} = \frac{N^{3/2}}{6} \int_{\R^{9}} V_N(x-y,x-z) a^*_x a^*_y a^*_z \varphi(x)\varphi(y)\varphi(z) \d x \d y \d z
	\]
	and $\cL_3^{\circ} = (\cL_3^{\dagger})^* $, so that $\cL_3^{(1)} = \cL_3^{\dagger} + \cL_3^{\circ}$. Still denoting $\Theta = \1_{[0,N^{1/2}]}(\cN)$, we have
	\begin{multline}\label{eq:est_L_3}
		(1-\Theta) \cL_3^{\dagger} + \cL_3^{\circ} (1 - \Theta) \\
		\begin{aligned}[b]
			&\leq (1-\Theta) \left(\int N^{3} V_N(x-y,x-z)^2 |\varphi(x)\varphi(y)\varphi(z)|^2 \d x \d y \d z \right)^{\frac{1}{2}} \left(\cN +1\right)^{\frac{3}{2}}\\
			&\leq C N \norm{V}_{L^2(\mathbb{R}^{6})} \norm{\varphi}_{\infty}^2 \norm{\varphi}_{2} \left(1-\Theta \right)\left(\cN +1\right)^{\frac{3}{2}} \leq C \left(\cN +1\right)^{\frac{7}{2}}\,.
		\end{aligned}
	\end{multline}
	Therefore, $\cL_3^{(1)} \leq \Theta \cL_3^{\dagger} + \cL_3^{\circ} \Theta + C (\cN+1)^{7/2}$ and it is enough to look at
	\[
		e^{-B} (\Theta \cL_3^{\dagger} + \cL_3^{\circ} \Theta ) e^{B} = (\Theta \cL_3^{\dagger} + \cL_3^{\circ} \Theta ) - \int_0^1e^{sB} [B,\Theta \cL_3^{\dagger} + \cL_3^{\circ} \Theta ] e^{-sB} \d s\,.
	\]
	The first term vanishes when tested against $\Omega$. Let us compute the commutator. Using~\eqref{eq:wick_3_3}, we obtain
	\begin{align*}
		-[B,\Theta \cL_3^{\dagger} + \cL_3^{\circ} \Theta ]
			&= -[B,\cL_3^{\circ} \Theta ] +h.c. = [B_1 \Theta - \Theta B_1^*,\cL_3^{\circ} \Theta ] + h.c. \\
			&= [\cL_3^{\circ} \Theta ,\Theta B_1^*]+h.c. = \Theta(\cN+3) [\cL_3^{\circ} , B_1^*] + \partial\Theta B_1^*\cL_3^{\circ} + h.c. \\
			&\leq [\cL_3^{\circ} , B_1^*] +h.c. + C (\cN+1)^5 \,,
	\end{align*}
	where we used that $|\partial\Theta| \leq N^{-1} (\cN+1)^2$ and a computation similar to~\eqref{eq:est_L_3}.
	Then, the same computations as in~\eqref{eq:comm_L_6_1_B_1} give
	\begin{align*}
		[\cL_3^{\circ}, {}&B_1^*] + h.c. \\
			&=\begin{multlined}[t][.87\textwidth]
				-\frac{1}{36} \int_{\R^9 \times \R^9} N^{3} V_N(x,y,z) \omega_N(x',y',z') \overline{\varphi(x)\varphi(y)\varphi(z)}\varphi(x')\varphi(y')\varphi(z') \\
				\times[a_x a_y a_z,a^*_{x'}a^*_{y'}a^*_{z'}] \d x \d y \d z \d x' \d y' \d z' + h.c.
			\end{multlined} \\
			&\leq -\frac{1}{3} \int_{\R^9} N^{3} (V_N\omega_N)(x,y,z)|\varphi(x)|^2 |\varphi(y)|^2 |\varphi(z)|^2 \d x \d y \d z + C(N^{1/2}\cN + \cN^2)\,,
	\end{align*}
	which concludes the proof.
\end{proof}

\subsubsection{Conclusion of the proof of Theorem~\ref{theo:upper_bound_fock_space}: collecting the leading contributions} \label{sec:concl_proof_th_UB}
	In Lemmas~\ref{lem:E_kin}, \ref{lem:L_3}, and~\ref{lem:L_6}, we extracted the contributions of $\d \Gamma(h), \cL_3$ and $\cL_6$ to the leading order. Adding the contribution from $\cL_0$ and controlling the remainder terms using Lemma~\ref{lem:L_2_4}, we obtain
\begin{multline*}
	\braket{ \Omega, e^{-B} W(\sqrt{N} \varphi)^* \mathbb{H}_N W(\sqrt{N} \varphi) e^{B} \Omega } \\
	\begin{aligned}[t]
		&\leq \begin{multlined}[t][0.8\textwidth]
			N \int_{\R^3} \left( |(-i\nabla + A(x)) \varphi (x)|^2 + \Vext(x)|\varphi (x)|^2 \right) \d x \\
			+ \frac{N}{6 }\int_{\R^9} (N^2 V_Nf_N)(x,y,z) |\varphi(x)|^2 |\varphi(y)|^2 |\varphi(z)|^2 \d x \d y \d z \\
			+ \int_{\R^9} \left(\frac{f_N + \omega_N}{6} -\frac{1}{3} + \frac{1}{6} \right) (N^2 V_N\omega_N)(x,y,z) |\varphi(x)|^2 |\varphi(y)|^2 |\varphi(z)|^2 \d x \d y \d z \\ + CN^{1/2}
		\end{multlined} \\
		&\leq N\Egp(\varphi) + CN^{1/2}\,.
	\end{aligned}
\end{multline*}
We have used that $f + \omega = 1$, so that the third term above vanishes, and also that
\[
	\frac{1}{6} \int_{\R^9} (N^2 V_Nf_N)(x,y,z) |\varphi(x)|^2 |\varphi(y)|^2 |\varphi(z)|^2 \d x \d y \d z \leq \frac{b_{\cM}(V)}{6}\int_{\R^3} |\varphi(x)|^6\d x\,,
\]
which is obtained from the H{\"o}lder inequality and from that $b_{\cM}(V) = \int_{\mathbb{R}^{6}} Vf = \norm{Vf}_{1}$. This concludes the proof of Theorem~\ref{theo:upper_bound_fock_space}.

\subsection{Proof of Corollary~\ref{Cor_upper_bound}}
	For $\mu > 0$, denote
	\[
		E^\mu(M,N) = \inf \sigma (H_{M,N}) - \mu (M + M^3/N^2)\,,
	\]
	where $H_{M,N}$ is defined in~\eqref{Def_H_MN}, and $E(M,N) := E^0(M,N)$.
	We claim that there exists $\mu_0>0$ such that, for all $\mu > \mu_0$ and $N\in\N$, $\left\{E^\mu (M,N)\right\}_{M\in\N}$ is a non-increasing sequence. Indeed, let $\Gamma_{M-1,N} = \ket{\Psi_{M-1,N}}\bra{\Psi_{M-1,N}}$ be the density matrix of a ground state $\Psi_{M-1,N}$ of $H_{M-1,N}$, then we have the inequalities
	\begin{align*}
		E(M,N) &\leq M \Tr_{L^2(\mathbb{R}^{3N})} h_1 \Gamma_{M-1,N} + \binom{M}{3} \Tr_{L^2(\mathbb{R}^{3N})} V_N(x_1,x_2,x_3) \Gamma_{M-1,N} \\
			&\leq \begin{multlined}[t][.85\textwidth]
				E(M-1,N) + \Tr_{L^2(\mathbb{R}^{3N})} \left\{ \left(\frac{M}{M-1} -1\right) h_1 \right. \\
				\left. + \left(\frac{M}{M-3}-1\right) \binom{M-1}{3} V_N(x_1,x_2,x_3) \right\} \Gamma_{M-1,N}
			\end{multlined} \\
			&\leq E(M-1,N) + \frac{C}{M}E(M-1,N) \\
			&\leq E(M-1,N) + C \norm{h^{1/2} \varphi}_{2}^6 \left(1 + \frac{M^2}{N^2}\right),
	\end{align*}
	where we bounded $E(M-1,N)$ by the energy of the ansatz $\varphi^{\otimes M-1}$ in the last inequality. Hence, it suffices to choose $\mu_0 = C\norm{h^{1/2} \varphi}_{2}^6$.
	
	Let $\widetilde N = N - N^{2/3}$ and let us denote $\Psi_{N} := W(\sqrt{\widetilde N} \varphi) U_N \Omega \in \cF(L^2(\R^3))$. As a consequence of~\eqref{eq:coprime_with_3},~\eqref{eq:lem_nb_particule_U_N}, and
	\[
		W(\sqrt{\widetilde N} \varphi)^* \cN W(\sqrt{\widetilde N} \varphi) = \cN + \sqrt{\widetilde N} (a^*(\varphi) + a(\varphi)) + \widetilde N\,,
	\]
	$\Psi_{N}$ satisfies
	\begin{align}
		\braket{ \Psi_N, \cN \Psi_N } &= N - N^{2/3} + \cO(1)\,, \label{eq:moment_estimate_ansatz1}\\
		\braket{ \Psi_N, \left(\cN - \braket{ \Psi_N, \cN \Psi_N }\right)^2 \Psi_N } &= \cO(N)\,, \label{eq:moment_estimate_ansatz2}
		\intertext{and}
		\braket{ \Psi_N, \cN^3 \Psi_N } &= N^3 + \cO(N^{8/3})\,.\label{eq:moment_estimate_ansatz3}
	\end{align}
	
	Let us first argue that
	\begin{equation}\label{eq:modified_ansatz}
		\braket{ \mathbb{H}_N }_{\Psi_N} \leq N \Egp(\varphi) + C N^{2/3}\,.
	\end{equation}
	Indeed, following the proof of Theorem~\ref{theo:upper_bound_fock_space}, we have
	\begin{multline*}
		\braket{ \Omega, W(\sqrt {\widetilde{N}} \varphi)^* \mathbb{H}_N W(\sqrt {\widetilde{N}} \varphi) }_{U_N \Omega} \\
		= \braket{ \Omega, U_N^* W(\sqrt {N} \varphi)^* \mathbb{H}_N W(\sqrt {N} \varphi) }_{U_N \Omega} \\
		+ (\widetilde N-N) \pscal{\varphi, h\varphi} + \sum_{i=0}^6 \left( (\widetilde N/N)^{(6-i)/2}-1 \right) \braket{ \cL_i }_{U_N \Omega}\,.
	\end{multline*}
	Therefore, using Theorem~\ref{theo:upper_bound_fock_space} and that $|\!\braket{ \cL_i }_{U_N \Omega}\!| \leq C N$ (see the proof of Theorem~\ref{theo:upper_bound_fock_space}), we obtain~\eqref{eq:modified_ansatz}.
	
	Let us now denote $\Psi_{M,N} = \1_{\cN=M} \Psi_N$. For $\mu > \mu_0$, using that $\left\{E^\mu (M,N)\right\}_{M\in\N}$ is a non-increasing sequence and noticing that $\sum_{M\leq N} \norm{ \Psi_{M,N} }^2 =1 - \braket{ \1_{\cN > N} }_{\Psi_N}$, we have
	\begin{multline*}
		\left( 1 - \braket{ \1_{\cN > N} }_{\Psi_N} \right) E^\mu(N,N) \\
		\leq \begin{multlined}[t][0.81\textwidth]
			\sum_{M\leq N} E^\mu(M,N) \norm{ \Psi_{M,N} }^2 \\
			\begin{aligned}
				&= \sum_{M=0}^{\infty} E^\mu(M,N) \norm{ \Psi_{M,N} }^2 - \sum_{M> N} E^\mu(M,N) \norm{ \Psi_{M,N} }^2 \\
				&\leq \Braket{ \mathbb H _N - \mu \left(\cN +\frac{\cN^3}{N^2} \right) + \mu \1_{\cN > N} \left(\cN + \frac{\cN^3}{N^2}\right) }_{\Psi_{N}}\,,
			\end{aligned}
		\end{multlined}
	\end{multline*}
	
	Now, using~\eqref{eq:moment_estimate_ansatz1}--\eqref{eq:moment_estimate_ansatz3} and Chebyshev's inequality, we obtain that
	\[
		\Braket{ \1_{\cN > N} }_{\Psi_N} = \Braket{ \1_{\cN > \Braket{ \cN }_{\Psi_N} + N^{\frac23} + \cO(1)} }_{\Psi_N} \leq C N^{-\frac43} \Braket{ \left(\cN - \Braket{ \cN }_{\Psi_N}\right)^2 }_{\Psi_N} \leq C N^{-\frac13}\,,
	\]
	\begin{align*}
		\Braket{ \1_{\cN > N} \cN }_{\Psi_N} &= \Braket{ \1_{\cN > N} \left(\cN - \Braket{ \cN }_{\Psi_N} \right) }_{\Psi_N} + \Braket{ \1_{\cN > N} }_{\Psi_N} \Braket{ \cN }_{\Psi_N} \\
			&\leq C N^{-\frac23} \Braket{ \left(\cN - \Braket{ \cN }_{\Psi_N}\right)^2 }_{\Psi_N} + C N^{-\frac13} \Braket{ \cN }_{\Psi_N} \leq C N^{\frac23}\,,
	\end{align*}
	and
	\begin{align*}
		\Braket{ \1_{\cN > N} \frac{\cN^3}{N^2} }_{\Psi_N}
		&\begin{multlined}[t][.8\textwidth]
			= N^{-2} \Braket{ \1_{\cN > N} \left(\cN - \Braket{ \cN }_{\Psi_N} \right)^3 }_{\Psi_N} \\
			+ 3 N^{-2} \Braket{ \1_{\cN > N} \left( \cN - \Braket{ \cN }_{\Psi_N} \right)^2 }_{\Psi_N} \Braket{ \cN }_{\Psi_N} \\
			+ 3 N^{-2} \Braket{ \1_{\cN > N} \left( \cN - \Braket{ \cN }_{\Psi_N} \right) }_{\Psi_N} \Braket{ \cN }_{\Psi_N}^2 \\
			+ N^{-2} \Braket{ \1_{\cN > N} }_{\Psi_N} \Braket{ \cN }_{\Psi_N}^3
		\end{multlined} \\
		&\leq N^{-2} \Braket{ \left( \cN - \Braket{ \cN }_{\Psi_N} \right)^3 }_{\Psi_N} + C N^{\frac23} \leq C N^{\frac23}\,.
	\end{align*}
	Hence, we deduce from the above inequality on $E^\mu(N,N)$ that
	\[
		E^\mu(N,N) \leq \left( 1+CN^{-1/3} \right) \Braket{ \mathbb{H}_N }_{\Psi_N} - 2\mu N + C N^{2/3}\,.
	\]
	Using~\eqref{eq:modified_ansatz} for $\varphi = u_0$, the GP minimizer, finishes the proof.
	\qed

\section{Convergence of ground states}\label{sec:state}
	The convergence of states in Theorem~\ref{thm:main2} follows from a simple adaption of the proof in~\cite[Sect.~4C]{NamRouSei-16} to our method of proof for the lower bound. Let us briefly explain the main steps for the reader's convenience. Let $\Psi_N$ be a normalized state in $\gH^N$ satisfying
\[
	\lim_{N\to \infty} \frac{ \pscal{ \Psi_N, H_N \Psi_N } }{N}= \GPnrg\,.
\]

	Since $\Tr h \gamma_{\Psi_N} \leq C$ and $h$ has compact resolvant, we know by the de Finetti theorem~\cite[Corollary 2.4]{LewNamRou-14} that, up to a subsequence as $N\to \infty$, there exists a Borel probability measure $\mu$ on the unit sphere $S\gH$ such that
\[
	\lim_{N\to \infty} \left| \gamma_{\Psi_N}^{(k)} - \int_{S\gH} \ketbra{ u^{\otimes k} }{ u^{\otimes k} } \d \mu(u) \right| = 0\,, \quad k=1, 2, \dots\,.
\]
Hence, it only remains to prove that the support of $\mu$ is contained in the set of minimizers of $\Egp$. To this end, it is enough prove
\begin{equation} \label{eq:prop-mu}
	\int_{S\gH} |\!\braket{v,u}\!|^{2k} \d\mu (u) \leq \sup_{u\in \GPmin} |\!\braket{v,u}\!|^{2k}
\end{equation}
for all $v\in L^2(\mathbb{R}^{3})$ and $k \in \N\setminus\{0\}$.
Indeed, one can easily verify that passing to the limit $k\to \infty$ in~\eqref{eq:prop-mu} implies that $\supp \mu \subset \GPmin$. Proving~\eqref{eq:prop-mu} is done by means of a Hellmann--Feynman argument, which we now explain. Keeping the same notations as in~\cite[Lemma 4.3]{NamRouSei-16}, for $v \in L^2(\mathbb{R}^{3})$ and $N,k \in \N\setminus\{0\}$, we define
\[
	S_{k,v}^N = \frac{k!}{N^{k-1}} \sum_{1\leq i_1 < \dots < i_k \leq N} \ketbra{v^{\otimes k}}{v^{\otimes k}}\,.
\]
The above operator is bounded uniformly in $N$ and one can easily check that we can carry out the proof of Theorem~\ref{theo:lower_bound} (lower bound) with $H_N - S_{k,v}^N$ to obtain
\[
	\liminf_{N \to \infty} \frac{\inf \sigma(H_N - S_{k,v}^N)}{N} \geq \inf_{\norm{u}_{2}=1} \left\{ \Egp (u) - |\!\braket{v,u}\!|^{2k} \right\}.
\]
Note, in particular, that the binding inequality~\eqref{eq:binding_inequality_M} is satisfied with $H_{M,N}$ replaced by $H_{M,N} - S_{k,v}^N$ and that the proof of~\cite[Lemma 2]{LieSei-06} easily adapts to yield the four\nobreakdash-body estimate~\eqref{eq:4_body_collision}. Therefore, for all $t>0$, implementing the change of parameter $v \to t^{1/(2k)} v$ and dividing by $t$, we obtain
\begin{align*}
	\int_{S\gH} |\!\braket{v,u}\!|^{2k} \d\mu (u)
		&= \lim_{N \to \infty} \frac{\< \Psi_N, H_N \Psi_N \> - \< \Psi_N, (H_N - S_{k,t^{1/(2k)} v}^N) \Psi_N \>}{t N} \\
		&\leq t^{-1} \left( \GPnrg - \inf_{\norm{u}_{2}=1} \left\{ \Egp (u) - t |\!\braket{v,u}\!|^{2k} \right\} \right).
\end{align*}
By standard compactness arguments, one can show that taking the limit $t \to 0$ above gives~\eqref{eq:prop-mu}. This concludes the sketch of the proof of Theorem~\ref{thm:main2}.

\appendix

\section*{Appendix. The two-body interaction case: lower bound}\label{app}
	In the two\nobreakdash-body interaction case, our proof of the lower bound simplifies and it gives a shorter alternative to~\cite{LieSei-06,NamRouSei-16}. Moreover, we only need the $L^1$-condition on the interaction potential, relaxing therefore some regularity assumptions in~\cite{LieSei-06,NamRouSei-16}. Since this simplification may be interesting in its own right, we give some details below.

	We consider $W \in L^1(\mathbb{R}^{3})$, a nonnegative and compactly supported potential, and we assume $A$ and $\Vext $ to be as in Theorem~\ref{thm:main1}. For $N,M \in \N\setminus\{0\}$, let us denote
\[
	E^{\mathrm{2B}}(M,N) := \inf \sigma (H_{M,N}^{\mathrm{2B}}) \quad \text{with } \quad H_{M,N}^{\mathrm{2B}} := \sum_{i=1}^M h_i + \sum_{1\leq i < j \leq M} W_N(x_i-x_j)\,,
\]
where $h_i\equiv h_{x_i}$, $h\equiv h_x := (-{\bf i}\nabla_{x} + A(x))^2 +\Vext(x)$, and $W_N = N^{2}W(N \cdot)$. Define
\[
	\cE_{\mathrm{2B}}(u) := \int_{\R^3} \left( |(-i\nabla + A(x)) u(x)|^2 + \Vext(x) |u(x)|^2 + \frac{b(V)}{2} |u(x)|^4 \right) \d x
\]
with $b(V)$ given in~\eqref{eq:def-scat-energy}. We have the following result.
\begin{theorem} \label{thm:2b} The ground state energy of $H_{N,N}^{\mathrm{2B}}$ satisfies
	\[
		\lim_{N\to\infty} \frac{E^{\mathrm{2B}}(N,N)}{N} = \inf_{\norm{u}_{2}=1} \cE_{\mathrm{2B}}(u)\,.
	\]
\end{theorem}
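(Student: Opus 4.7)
The plan is to prove both bounds separately, mirroring the three-body strategy of the main text with simplifications afforded by the two-body setting. The upper bound $\limsup_{N\to\infty} E^{\mathrm{2B}}(N,N)/N \le \inf_u \cE_{\mathrm{2B}}(u)$ is by now classical: one may either evaluate $H_{N,N}^{\mathrm{2B}}$ on a Jastrow--Dyson trial state $\prod_j u_0(x_j)\prod_{i<j}f_{N,\ell}(x_i-x_j)$ with $f$ a cut-off version of the two-body scattering solution of $-2\Delta f + Wf = 0$, or use the Weyl--Bogoliubov construction $W(\sqrt{N}u_0)T\Omega$ of~\cite{BenPorSch-16,NamNapRicTri-21}. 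Either approach is strictly simpler than that of Section~\ref{sec:upper}, since the correlation kernel is quadratic and the Bogoliubov-transformed vacuum is quasi-free.

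The focus is therefore on the lower bound. The first step is a two-body version of the many-body Dyson lemma from Lemma~\ref{lem:gen_dyson_lemma}: for $0 \le U \in L^\infty(\R^3)$ radial with $\int U = 1$ and $\supp U \subset \{1/8 \le |x| \le 1/4\}$, and $U_R = R^{-3} U(R^{-1}\cdot)$, the proof of Lemma~\ref{lem:gen_dyson_lemma} specializes to $d=3$ via Theorem~\ref{dyson_lemma_Rd_nonradial} applied to $W_N$, with standard scattering energy $b(W_N) = b(W)/N$. This yields, for every $\eps, s > 0$ and every $R \gg N^{-1}$, the operator inequality on $L^2_s(\R^{3M})$
\begin{multline*}
\sum_{i=1}^M p_i^2 \1_{\{|p_i| > s\}} + \frac{1}{2} \sum_{i \ne j} W_N(x_i - x_j) \\
\ge \frac{b(W)(1-\eps)}{2} \Bigl(1 - \frac{C}{NR}\Bigr) \sum_{i \ne j} U_R(x_i - x_j) \prod_{\ell \ne i, j} \theta_{2R}\Bigl(\tfrac{x_i + x_j}{2} - x_\ell\Bigr) - C \eps^{-1} s^5 R^3 M^2.
\end{multline*}
The three-body cut-off $\theta_{2R}$ is removed via Bernoulli's inequality together with a three-body collision estimate
\[
\Bigl\langle \prod_{i=2}^3 \1_{\{|x_1 - x_i| \le R\}} \Bigr\rangle_{\Gamma_{M,N}} \le C_\eps R^6,
\]
valid for a suitable $M \in [(1-\eps)N, N]$ meeting the binding inequality $E^{\mathrm{2B}}(M, N) - E^{\mathrm{2B}}(M-3, N) \le C \eps^{-1}$; both are direct adaptations of Lemmas~\ref{lem:binding} and~\ref{lem:4_body_collision}, using only $\Vext, W \ge 0$ and the heat-kernel Hilbert--Schmidt bound.

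The bootstrap from Section~\ref{sec:reduction_soft} is then run in the two-body scaling. Starting with $R_0 = M^{-\alpha}$ for $\alpha$ close to but smaller than $1/2$ (optimizing the Dyson error $1/(NR_0)$ against the three-body removal error $M R_0^2$), one uses the bosonic symmetry of the Gibbs state $\Gamma_{M,N}$ to pass from $M$ particles to $M_1 \ll M$ in the interaction term, which relaxes the diluteness constraint and permits a second application of Dyson's lemma sacrificing only a small fraction $\delta$ of the kinetic energy. This last step is justified because $R_0 \gg M_1^{-1/2}$ ensures that the softened scattering energy is well-approximated by its first Born approximation $\int U_R = 1$. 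Iterating finitely many times reduces $U_{R_0}$ to $U_{R_\ell}$ with $R_\ell = N^{-\eta}$ for arbitrarily small $\eta > 0$. At this stage the finite-dimensional quantum de Finetti theorem (Theorem~\ref{thm:CKMR}) applies as in Section~\ref{sec:lower}, but with the localized two-body density matrix $P^{\otimes 2} \gamma_M^{(2)} P^{\otimes 2}$ instead of $P^{\otimes 3} \gamma_M^{(3)} P^{\otimes 3}$, and yields
\[
\liminf_{N \to \infty} \frac{E^{\mathrm{2B}}(N, N)}{N} \ge \inf_{\|u\|_2 = 1} \cE_{\mathrm{2B}}(u)
\]
after sending $\eps \to 0^+$ and $s \to \infty$.

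The main obstacle, as in the three-body case, is the bookkeeping for the bootstrap: at each iteration one must simultaneously ensure $R_j \ll M_j^{-1/3}$ (dilute regime for the three-body removal), $R_j \gg M_j^{-1/2}$ (first Born approximation for the softened potential), and keep the sacrificed kinetic energy summable over the finitely many steps. The simplification compared with the three-body case is that after the bootstrap, the two-body operator inequality $|W(x-y)| \le C \|W\|_{L^1(\R^3)} (1-\Delta_x)(1-\Delta_y)$ of~\eqref{eq:intro-2-body-operator-inequality} is available in its sharp endpoint form, with no $\eta$-loss, so the quantum de Finetti step closes cleanly and the only regularity required on the potential is $W \in L^1(\R^3)$.
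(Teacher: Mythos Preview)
Your proposal follows essentially the same bootstrap strategy as the paper's Appendix proof of Lemma~\ref{lem:reduc_2b}. Two exponent slips to correct: the three-body removal error in $d=3$ is $MR_0^3$, not $MR_0^2$ (so your choice of $\alpha$ close to $1/2$ is only self-consistent with the correct cubic power), and the first-Born constraint in the two-body case is $R_{j-1} \gg M_j^{-1}$ (from $\|U_R\|_{L^{6/5}(\R^3)}^2 \sim R^{-1}$ via~\eqref{eq:bounds_on_w_general_2} with $d=3$), not $M_j^{-1/2}$---you carried over the three-body exponent, though the two-body constraint is in fact more lenient, so the iteration still closes.
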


	This result has been proved in~\cite{LieSei-06,NamRouSei-16} (under a slightly stronger condition on~$W$). Here, we will sketch an alternative proof of the lower bound. Note that from our proof, it is also possible to derive the convergence of states similar to Theorem~\ref{thm:main2}. For a technical reason, we will work under the additional assumption~\eqref{eq:cond_A_V} on $A$, which can be removed following an argument in~\cite[Sect.~4B]{NamRouSei-16}.

	The main step in the proof of Theorem~\ref{thm:2b} is obtained via the following lemma.
\begin{lemma}[Reduction to softer potentials] \label{lem:reduc_2b}
	Let $\beta \in (0,2/3]$, $0<\eps<1<s$, and $\delta \in (0, \eps^2/2)$. We assume $\Vext $ to be as in~\eqref{eq:Vext} and $A \in L^3_{\mathrm{loc}}(\mathbb{R}^{3})$ to satisfy~\eqref{eq:cond_A_V}. Let $0\le \widetilde U\in L^\infty(\R^3)$ be radial with $\int_{\R^3} \widetilde U=1$ and $\supp \widetilde U \subset \{1/8 \le |\bx| \le 1/4\}$. Define $U$ as in~\eqref{eq:Ug} and $U_R=R^{-3}U(R^{-1}\cdot)$.
	Then, for all $N$, there exist an integer $M\in[(1-\eps)N, N]$ and $R\in [N^{-\beta}, N^{-\beta/2}]$ such that
	\begin{multline*}
		E^{\mathrm{2B}}(N,N) \ge \inf \sigma_{L^2_s (\R^{3M})} \bigg( \sum_{i=1}^{M} (h_{\eps,s})_i + \frac{b(W)}{(M-1)} \sum_{1\leq i < j \leq M} U_{R}(x_i-x_j) \bigg) \\
		- C_{\eps,s,\beta,\delta} R^{2/7} N - \eps C_{\beta} N - \delta C_{\eps} N\,,
	\end{multline*}
	where $h_{\eps,s} = h - (1-\eps) \1_{\{ |p| > s\}} p^2$.
\end{lemma}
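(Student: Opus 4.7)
The plan is to mirror the proof of Lemma~\ref{lem:softer}, transferring every ingredient to the two-body setting, where $W_N = N^{2} W(N\cdot)$ has support radius $N^{-1}$ and scattering energy $b(W)/N$. Three tools carry over: (i) a binding inequality $E^{\mathrm{2B}}(M,N) - E^{\mathrm{2B}}(M-3,N) \leq C\eps^{-1}$ for some $M \in [(1-\eps)N, N]$, obtained exactly as in Lemma~\ref{lem:binding} by pigeonholing over $m$ with a step of $3$ (instead of $4$) and using $E^{\mathrm{2B}}(N,N) \leq CN$; (ii) a three-body collision estimate $\pscal{\1_{\{|x_1-x_2|\leq R\}}\1_{\{|x_1-x_3|\leq R\}}}_{\Gamma_{M,N}} \leq C_\eps R^6$ deduced from (i) via the heat-kernel argument of Lemma~\ref{lem:4_body_collision} with three Hilbert--Schmidt factors in place of four; and (iii) a many-body Dyson lemma in dimension three analogous to Lemma~\ref{lem:gen_dyson_lemma}, resting on Theorem~\ref{dyson_lemma_Rd_nonradial} with $d=3$ (no symmetry group $\cG$ is needed for a scalar two-body potential, so~\eqref{eq:Ug} reduces to simply taking a radial $U$).

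I would then proceed in two stages. First, apply the many-body Dyson lemma directly to the two-body Hamiltonian, reserving the momentum cut-off $\1_{\{|p|>s\}}$ on the kinetic energy and removing the three-body cut-off arising from the Bernoulli expansion by means of (ii); the two competing errors, the Dyson relative error $1/(MR_0)$ and the removal cost $MR_0^3$ per particle, are optimized at $R_0 = M^{-1/2}$, replacing $W_N$ by $b(W) U_{R_0}/(M-1)$ on pairs. Second, to reach $R \sim N^{-\beta}$ for arbitrary $\beta \in (0, 2/3]$, iterate Dyson's lemma in the spirit of Steps~2--3 of the proof of Lemma~\ref{lem:softer}. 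Each pass uses bosonic symmetry to recast the lower bound as an $M_j$-body expectation with coupling $b(W)/(M_j-1)$ (the adjustment factor being $(M-1)/(M_j-1)$ rather than the quadratic ratios of the three-body case), then applies Dyson's lemma with a small fraction $\delta$ of the kinetic energy; the Born bound $b(v) \geq \int v - C\norm{v}_{6/5}^2$ (from Hardy--Littlewood--Sobolev on $\R^3$, together with the scaling $\norm{U_R}_{6/5}^2 \sim R^{-1}$) gives a relative error $C_{\eps,\delta}/(M_j R_{j-1})$, and the three-body cut-off at the new scale $R_j$ is again removed via (ii) at the cost $M_j R_j^3$ per particle. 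Balancing these three errors leads to a recursion between $R_j$, $R_{j-1}$, and $M_j$ of the same structure as~\eqref{eq:RjMj-choice}, and after finitely many iterations $J = J(\beta)$ one reaches $R_J \in [N^{-\beta}, N^{-\beta/2}]$; a final bosonic re-inflation to $M$ particles completes the argument.

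The main delicate point to verify is the three-dimensional balance. Because $\norm{U_R}_{6/5}^2 \sim R^{-1}$ in dimension three, rather than $\norm{U_R}_{3/2}^2 \sim R^{-4}$ as in the six-dimensional setting of Lemma~\ref{lem:softer}, the Born contribution scales more mildly in $R$ and the geometric recursion between $R_j$ and $R_{j-1}$ is governed by a different exponent than the $7/9$ appearing in~\eqref{eq:RjMj-choice}; in particular a direct calculation setting the Dyson, Born, and three-body errors equal suggests the choice $R_j \sim R_{j-1}^{3/5}$, $M_j \sim R_j^{-7/3}$ (up to $\delta$-dependent constants), with per-step error $C_{\eps,\delta} R_j^{2/3}$, which is strictly better than the announced $R^{2/7}$ and so implies the bound of the statement. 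Checking that these exponents are compatible at every iteration ---and that the $\eps$- and $\delta$-dependent prefactors can be absorbed uniformly into $C_{\eps,\delta}$--- is the only substantive computation; the remaining ingredients (momentum cut-off handling, removal of the four- replaced by three-body cut-off, and the final reinflation via bosonic symmetry) follow the three-body blueprint verbatim.
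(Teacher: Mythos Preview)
Your proposal is correct and follows essentially the same architecture as the paper's own proof: binding inequality with step $3$, three-body collision estimate via the heat kernel, a first Dyson step saving the low momenta, and then an iteration of Dyson's lemma combined with bosonic particle-number reduction, using the Born bound $b(v)\geq \int v - C\|v\|_{6/5}^2$ on $\R^3$. The only substantive difference is in the numerical optimization: the paper does not balance the three error terms but simply takes geometric means of the admissible endpoints, obtaining $R_0=M^{-2/3}$ and the recursion $R_j=R_{j-1}^{17/24}$, $M_j=R_{j-1}^{-5/4}$, whereas your balanced choice $R_0=M^{-1/2}$, $R_j=R_{j-1}^{3/5}$, $M_j=R_j^{-7/3}$ gives the sharper per-step error $R_j^{2/3}$ you note; both are comfortably within the stated $R^{2/7}$.
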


	With Lemma~\ref{lem:reduc_2b} at hand, we can apply for instance~\cite[Theorem 2.5]{LewNamRou-16} with $0 < \beta < (21/2 + 3/\alpha)^{-1}$ in order to obtain the desired lower bound in Theorem~\ref{thm:2b}. This part is similar to the analysis in Section~\ref{sec:lower}. Thus, it remains to prove the lemma.
\begin{proof}[Proof of Lemma~\ref{lem:reduc_2b}]
	By adapting Lemma~\ref{lem:binding} to the two\nobreakdash-body interaction case, we find that there is a constant $C>0$, such that for all $N\in \N\setminus\{0\}$ and $0 < \varepsilon < 1/2$, we can find $M\equiv M(N,\varepsilon)\in\N$ such that $N(1-\varepsilon) \leq M \leq N $ and
	\[
		E^{\mathrm{2B}}(M,N) - E^{\mathrm{2B}}(M-3,N) \leq C \varepsilon^{-1}\,.
	\]
	This replaces a convexity argument in Step 1 of the proof of~\cite[Theorem 1]{LieSei-06}. Using this binding inequality and the heat kernel estimate in~\cite[Lemma 2]{LieSei-06}, we obtain for the zero temperature limit $\Gamma_{M,N}$ of the Gibbs state of $H_{M,N}^{\mathrm{2B}}$ the following analogue to Lemma~\ref{lem:4_body_collision}:
	\[
		\left\< \1_{\{|x_1-x_2|\leq R \}} \1_{\{|x_1-x_3|\leq R \}} \right\>_{\Gamma_{M,N}} \leq C_\eps R^6\,.
	\]
	Then, using
	\[
		\sum_{\substack{j=1 \\ j \neq i}}^M \1_{\{|x_j-x_{i}|\leq R \}} \prod_{\substack{\ell =1 \\ \ell \neq i,j}}^M \1_{\{|x_j-x_{\ell}|>2R \}} \leq 1\,, \quad i = 1, \dots, M\,,
	\]
	we obtain
	\begin{multline*}
		\sum_{i=1}^M \1_{\{|p_i| > s\}} p_i^2 + \sum_{1\leq i < j \leq M} W_N(x_i-x_j) \\
		\begin{aligned}
			&\geq \sum_{i=1}^M \sum_{\substack{j=1 \\ j \neq i}}^M \left(\1_{\{|p_i| > s\}} p_i\1_{\{|x_i-x_j|<R\}} p_i \1_{\{|p_i| > s\}} + \frac{1}{2} W_N(x_i-x_j) \right)\prod_{\substack{\ell =1 \\ \ell \neq i,j}}^M \1_{\{|x_j-x_{\ell}|>2R\}} \\
			&\geq \begin{multlined}[t][.9\textwidth]
				(1-\varepsilon)\sum_{i=1}^M \sum_{\substack{j=1 \\ j \neq i}}^M \left(p_i\1_{\{|x_i-x_j|<R\}} p_i + \frac{1}{2} W_N(x_i-x_j) \right)\prod_{\substack{\ell =1 \\ \ell \neq i,j}}^M \1_{\{|x_j-x_{\ell}|>2R \}} \\
				- C \varepsilon^{-1} \sum_{i=1}^M \sum_{\substack{j=1 \\ j \neq i}}^M p_i \1_{\{|p_i|<s\}}\1_{\{|x_i-x_j|<R\}} p_i\1_{\{|p_i|<s\}}
			\end{multlined} \\
			&\geq \sum_{i=1}^M \sum_{\substack{j=1 \\ j \neq i}}^M \frac{b(W)}{N} \left( 1-\frac{C}{RN} \right) U_R(x_i-x_j)\prod_{\substack{\ell =1 \\ \ell \neq i,j}}^M \1_{\{|x_j-x_{\ell}|>2R \}} - C_{\varepsilon} s^5 M^2 R^3\,.
		\end{aligned}
	\end{multline*}
	We used the Cauchy--Schwarz inequality for the second inequality and the third inequality is a consequence of Lemma~\ref{dyson_lemma_Rd_nonradial} together with the estimate
	\begin{multline*}
		\norm{ \1_{\{|p_i| \le s\}} p_i \1_{\{|x_i-x_j| \leq R\}} p_i \1_{\{|p_i| \le s\}} }_{HS} \\
		= (2\pi)^{-3} \norm{\1_{\{|p_i| \le s\}} |p_i| }_{L^2(\R^3, \d p_i)}^2 \norm{\1_{\{|x_i-x_j| \leq R\}}}_{L^2(\R^3, \d x_i)}^2 \le C s^5 R^3\,.
	\end{multline*}
	From this and the lower bound
	\[
		\prod_{\substack{\ell =1 \\ \ell \neq i,j}}^M \1_{\{|x_j-x_{\ell}|>2R \}} \ge 1- \sum_{\substack{\ell =1 \\ \ell \neq i,j}}^M (1-\1_{\{|x_j-x_{\ell}|>2R \}}) = 1- \sum_{\substack{\ell =1 \\ \ell \neq i,j}}^M \1_{\{|x_j-x_{\ell}|\leq 2R \}}\,,
	\]
	we obtain
	\begin{multline} \label{eq:E_2b_1}
		E^{\mathrm{2B}}(M,N)\\
		\geq \begin{multlined}[t]
			\bigg\< \sum_{i=1}^{M} (h_{\eps,s})_i + b(W) \sum_{1\leq i<j\leq M} M^{-1}U_R(x_i-x_j) \bigg\>_{\Gamma_{M,N}} \\
			- C \bigg\< \sum_{1\leq i<j\leq M} M^{-1} U_R(x_i-x_j) \bigg(\sum_{\substack{\ell =1 \\ \ell \neq i,j}}^M \1_{\{|x_j-x_{\ell}|\leq 2R \}} + \frac{1}{RN} + \varepsilon\bigg) \bigg\>_{\Gamma_{M,N}} \\
			- C_{\varepsilon,s} M^2 R^3\,.
		\end{multlined}
	\end{multline}

	Recall that $ N\ge M \geq (1-\varepsilon)N$. To bound the error term in~\eqref{eq:E_2b_1}, note that
	\begin{multline*}
		C R^{-3} \bigg\< \sum_{1\leq i<j\leq M} \sum_{\substack{\ell =1 \\ \ell \neq i,j}}^M M^{-1}U_R(x_i-x_j) \1_{\{|x_j-x_{\ell}|\leq 2R \}}\bigg\>_{\Gamma_{M,N}} \\
		\leq C R^{-3} M^2 \braket{\1_{|x_1-x_2|\leq 2R}\1_{|x_2-x_3|\leq 2R}}_{\Gamma_{M,N}} \leq C N\times M R^3
	\end{multline*}
	by the bosonic symmetry of $\Gamma_{M,N}$ and that, using~\eqref{eq:E_2b_1}, we also have
	\[
		\bigg\< \sum_{1\leq i<j\leq M} M^{-1}U_R(x_i-x_j) \bigg\>_{\Gamma_{M,N}} \leq C (E(M,N) + C M^2 R^3) \leq C N (1 + M R^3)\,.
	\]
	Hence,~\eqref{eq:E_2b_1} becomes
	\begin{multline*}
		E^{\mathrm{2B}}(M,N) \geq \bigg\< \sum_{i=1}^{M} (h_{\eps,s})_i + b(W) \sum_{1\leq i<j\leq M} M^{-1}U_R(x_i-x_j) \bigg\>_{\Gamma_{M,N}} \\
		- C_{\varepsilon,s} N (M^{-1} R^{-1} + M R^3) - C N \varepsilon\,.
	\end{multline*}
	To ensure that the error term $N (M^{-1} R^{-1} + M R^3)$ is $o(N)$, we need
	\[
		M^{-1/3} \gg R \gg M^{-1}\,.
	\]
	We can take for example
	\[
		R = (M^{-1} \times M^{-1/3})^{1/2} \sim N^{-2/3}\,.
	\]
	This is still too singular in order to apply the mean-field technique in~\cite[Theorem 2.5]{LewNamRou-16}.
	However, the main simplification over~\cite{LieSei-06,NamRouSei-16} comes when we use again the bosonic symmetry to rewrite~\eqref{eq:E_2b_1}, for $1\leq M_1 \leq M$, as
	\begin{multline} \label{eq:E_2b_2}
		\frac{E^{\mathrm{2B}}(M,N)}{M} \geq \frac{1}{M_1}\bigg\< \sum_{i=1}^{M_1} (h_{\eps,s})_i + \frac{b(W)}{M_1} \sum_{1\leq i<j\leq M_1}U_R(x_i-x_j) \bigg\>_{\Gamma_{M,N}} \\
		- C_{\varepsilon,s} (M^{-1} R^{-1} + M R^3) - C M \varepsilon\,.
	\end{multline}
	We will now apply Dyson's lemma to $\widetilde W = \delta^{-1}M_1^{-1}b(W)U_R$ which, by~\eqref{eq:bounds_on_w_general_2}, satisfies
	\[
		b(\widetilde W) \geq \|\widetilde W\|_{L^1(\R^3)} - C \|\widetilde W\|_{L^{6/5}(\R^3)}^2 \geq \frac{b(W)}{\delta M_1}\left(1-\frac{C}{\delta M_1R}\right).
	\]
	For $R_1 \gg R$, following the above estimates, we obtain
	\begin{multline*}
		\bigg\< \sum_{i=1}^{M_1} \delta p_i^2 + \frac{b(W)}{M_1} \sum_{1\leq i<j\leq M_1} U_R(x_i-x_j)\bigg\>_{\Gamma_{M,N}} \\
		\begin{aligned}[t]
			&\geq \begin{multlined}[t]
				\bigg\< \frac{b(W)}{M_1} \sum_{1\leq i<j\leq M_1} U_{R_1}(x_i-x_j)\bigg\>_{\Gamma_{M,N}} \\
				- C \bigg\< \frac{1}{M_1} \sum_{1\leq i<j\leq M_1} U_{R_1}(x_i-x_j)\bigg(\sum_{\substack{\ell =1 \\ \ell \neq i,j}}^{M_1} \1_{\{|x_j-x_{\ell}|\leq 2R_1 \}} + \frac{1}{\delta M_1R}\bigg)\bigg\>_{\Gamma_{M,N}}
			\end{multlined} \\
			&\geq \bigg\< \frac{b(W)}{M_1} \sum_{1\leq i<j\leq M_1} U_{R_1}(x_i-x_j)\bigg\>_{\Gamma_{M,N}} - C_\delta M_1 \left( M_1 R_1^3 + M_1^{-1} R^{-1} \right).
		\end{aligned}
	\end{multline*}
	In order for the error $M_1( M_1 R_1^3 + M_1^{-1} R^{-1})$ to be $o(M_1)$, we need
	\[
		M_1^{-1/3} \gg R_1 \gg R \gg M_1^{-1} \geq M^{-1}\,,
	\]
	which can be satisfied with for example the choices
	\begin{align*}
		M_1 &= (M \times R^{-1})^{1/2} = ( R^{-3/2} \times R^{-1})^{1/2} = R^{-5/4}
		\intertext{and}
		R_1 &= (R \times M_1^{-1/3})^{1/2} =(R \times R^{5/12})^{1/2} = R^{17/24}\,.
	\end{align*}
	Now, using~\eqref{eq:p_h} and that $\delta \in (0,\eps^2/2)$, we have $\delta p^2 \le \eps h_{\eps,s} + \delta C_\eps$. Inserting the above inequality in~\eqref{eq:E_2b_2}, we obtain
	\begin{multline*}
		\frac{E^{\mathrm{2B}}(M,N)}{(1-\varepsilon) M} \geq \frac{1}{M_1} \bigg\< \sum_{i=1}^{M_1} (h_{\eps,s})_i + \frac{b(W)}{M_1} \sum_{1\leq i<j\leq M_1}U_{R_1}(x_i-x_j) \bigg\>_{\Gamma_{M,N}} \\
		- C_{\varepsilon,s} R^{1/4} - C \varepsilon - C_{\varepsilon} \delta\,.
	\end{multline*}
	Defining $R_0 := R$ and, for $j = 1,\dots,J$, $R_j := R_{j-1}^{17/24} = R^{(17/24)^j}$ and $M_j := R_{j-1}^{-5/4}$, repeating this argument $J \geq 1$ times yields
	\begin{align*}
		\frac{E^{\mathrm{2B}}(M,N)}{(1-\varepsilon) M}
		&\begin{multlined}[t][.83\textwidth]
			\geq\frac{1}{M_J} \bigg\< \sum_{i=1}^{M_J} (h_{\eps,s})_i + \frac{b(W)}{M_J} \sum_{1\leq i<j\leq M_J}U_{R_J}(x_i-x_j) \bigg\>_{\Gamma_{M,N}} \\
			- C_{\varepsilon,s} R^{1/4^J} - C \varepsilon - C_{\varepsilon} \delta
		\end{multlined} \\
		&\begin{multlined}[t][.83\textwidth]
			\geq \frac{1}{M} \bigg\< \sum_{i=1}^{M} (h_{\eps,s})_i + \frac{b(W)}{M} \sum_{1\leq i<j\leq M}U_{R_J}(x_i-x_j) \bigg\>_{\Gamma_{M,N}} \\
			- C_{\varepsilon,s} R^{1/4^J} - C \varepsilon - C_{\varepsilon} \delta\,.
		\end{multlined}
	\end{align*}
	Note that the double sum in the first inequality has $M_J(M_J-1)/2$ terms. Therefore, a correction should arise from the approximation $(M_J-1)/M_J \geq 1 - C M_J^{-1}$ but, using $\norm{ U_{R_J} }_{L^\infty(\mathbb{R}^{3})} \leq C R_J^{-3}$, it is also bounded by $C M_J R_J^{-3} \leq C R^{1/4^J}$. For every $\beta\in (0,2/3)$, taking $J\in \N\setminus\{0\}$ such that $N^{-\beta/2} \geq R_J= R^{(17/24)^J} \geq N^{-\beta}$ (recall that $R\sim N^{-2/3}$), we conclude the proof.
\end{proof}

\end{document}